\newtheorem{thm}{Theorem}[section]
\newtheorem{dfn}{Definition}[section]
\newtheorem{lem}{Lemma}[section]
\newtheorem{prop}{Proposition}[section]
\newtheorem{rem}{Remark}[section]
\newtheorem{ex}{Example}[section]
\theoremstyle{definition}
\newcommand\bZ{\mathbb{Z}}
\newcommand\bR{\mathbb{R}}\newcommand\bS{\mathbb{S}}
\newcommand\bC{\mathbb{C}}
\newcommand\bN{\mathbb{N}}
\newcommand\bH{\mathbb{H}}
\newcommand\cV{\mathcal{V}}
\newcommand\cD{\mathcal{D}}
\newcommand\cE{\mathcal{E}}
\newcommand\cF{\mathcal{F}}
\newcommand\cH{\mathcal{H}}
\newcommand\cI{\mathcal{I}}
\newcommand\cJ{\mathcal{J}}
\newcommand\cM{\mathcal{M}}
 \newcommand\cA{\mathcal{A}}
\newcommand\Is{\operatorname{Is}}
\newcommand\la{\lambda}
\newcommand\0{\emptyset}
\newcommand{\Pol}{\mathit{Pol}}
\newcommand\GG{\mathcal{G}}
\newcommand\PP{\mathcal{P}}
\begin{document}
\numberwithin{equation}{section}

 \title{Massless field equations for  spin $ 3/2$ in dimension $6$ }

\author{R.\ L\'{a}vi\v{c}ka,  V.\ Sou\v{c}ek} 

\address{Charles University, Faculty of Mathematics and Physics, %Mathematical Institute,
	Sokolovsk\'a 83, 186 75 Praha, Czech Republic}

\email[R. L\'avi\v cka]{lavicka@karlin.mff.cuni.cz}

\email[V. Sou\v cek]{soucek@karlin.mff.cuni.cz}

\author{W.\ Wang}

\address{Department of Mathematics, Zhejiang University, Zhejiang 310027, PR China}

\email[W. Wang]{wwang@zju.edu.cn}

\thanks{Support of the grant GACR 24-10887S is gratefully acknowledged by R.\ L\'{a}vi\v{c}ka and  V.\ Sou\v{c}ek.
W.\ Wang is supported by National Nature Science Foundation in China (No. 11971425).}

\begin{abstract}
	Main topic of the paper is a study of properties of massless fields of spin $3/2$.
A lot of information is available already for massless fields in dimension $4.$ Here, we concentrate
on dimension $6$ and we are using the fact that the group $SL(4,\bC)$ is isomorphic with
the group $Spin(6,\bC).$ It makes it possible to use tensor formalism for massless fields. Main problems treated in the paper are  a description of fields which need
to be considered in the spin $3/2$ case, a suitable choice of equations they  should satisfy, 
irreducibility of homogeneous solutions of massless field equations, the Fischer decomposition
and the Howe duality for such fields.
\end{abstract}

\keywords{Massless fields of spin $3/2$, Stein-Weiss equations, Fischer decomposition, Howe duality}

\maketitle

\section{Introduction}
\label{introduction}
The classical series of massless field equations for spinor fields on  four dimensional Lorentzian manifold
include several basic systems of PDE's used and studied extensively in theoretical physics.
In physical dimension $4$ and for the Lorentzian signature, the irreducible spinor module for the Clifford
algebra decomposes into two irreducible rank-two self-dual spinor modules $S_+$ and $S_- $,
complex conjugate to each other. There are two basic versions of  massless field equations.  

(i) The helicity $-k/2$  spinor field is a field   $\varphi_{AB..E}\in \odot^k(S_-)$ totally symmetric   in its $k$ indices
satisfying the first order system of equations
\begin{equation}\label{negative}
\nabla^{AA'}\varphi_{AB..E}=0
\end{equation}
where $\nabla_{AA'}$ is the Levi-Civita covariant derivative, 
unprimed indices refer to elements of $S_-,$ and
we use the standard spinor notation.

(ii) The helicity $k/2$  spinor field is a field  $\varphi_{A'B'..E' }\in \odot^k(S_+)$ 
satisfying the first order system of equations
\begin{equation}\label{positive}
\nabla^{AA'}\varphi_{A'B'..E'}=0.
\end{equation}

Note that  the series also includes morally (the second order) wave equation for scalar (spin $0$) field.
R.~Penrose initiated a systematic study of solutions of  massless field equations (\ref{negative})
and (\ref{positive}) in dimension $4$ in his twistor 
program, see \cite{EPW,PM,PR} and also \cite{BBS,SS,V}. 
Twistor theory in dimension $6$ is developed in \cite{KW,KWZ,MRT,SaWo}.
 
The most interesting cases of massless field equations are the case $k=1$ (neutrino equation),
$k=2$ (vacuum Maxwell equations), $k=3$ (the Rarita-Schwinger equation) or $k=4$ (equations for linearized gravitation).
It is well-known that the classical massless field equations are conformally invariant.

There are several generalizations of  massless fields and their equations. 

(a) It is possible to consider
equations for spinor fields of mixed types, i.e., for fields $\varphi$ with values in the tensor product $\odot^k(S_-)\otimes \odot^\ell(S_+)$.

(b) Higher spin fields can be considered on (pseudo-)Riemannian manifolds of different signatures.
In particular, for positive (resp. negative) signature, the massless field equations are elliptic,
which has a dramatic effect on behaviour of their solutions. In theoretical physics, it is often formulated
as the transition to imaginary time. On side of mathematics, this is the topic first studied in a special branch
of mathematics usually called quaternionic analysis, see \cite{Sud}.  It was created as a close generalization of complex function theory
to  functions on $\bR^4$ with values in the field of quaternions but it was soon
recognized as a study of solutions of (the Euclidean version of) the Dirac equation for spin $1/2$
 fields. 
%During the last decades, there is a systematic effort to create interacting theory for higher spin massless fields (cf. e.g. \cite{BBS,SS,V}).

(c) It is possible to consider analogues of massless field equations on higher dimensional (pseudo)- Riemannian manifolds.  S\"amann and Wolf \cite{SaWo}  and Mason et al. \cite{MRT} used  the twistor method to study a kind of higher spin massless field operators on Lorentzian space $\mathbb R^{5,1}$ by embedding it to $ \wedge^{2}\mathbb C^{4}$, while   the function theory of their Euclidean version was investigated in \cite{KW,KWZ} by embedding $\mathbb R^{6}$ to $ \wedge^{2}\mathbb C^{4}$.
On Euclidean spaces of any dimension, Clifford analysis studies massless fields of spin $1/2$ (see \cite{DSS}) but also of higher spins, see e.g.\ \cite{BSSL,BSSL2,ERJ,ES,Soucek1,Soucek2}. 

(d) Fields of different signatures can be embedded into holomorphic fields on complex space times.
In such a way, it is posible to formulate some results in a uniform way for all possible signatures.

The presented paper provides  a special case of  the generalizations mentioned above. We consider
a generalization of massless field equations to the case of
holomorphic spin $3/2$ fields
on $6$-dimensional complex space $\bC^6.$ We show that we are forced to consider for presented
 results also equations of spinor fields of mixed types and fields of spin $1/2.$

There are several types of problems studied in the paper. We discuss mainly the following questions.
Let $G_0=Spin(6)$ and let $V$ be a complex representation of $G_0$, not necessarily irreducible.
We consider a set of $G_0$-invariant differential equations on the space
$\mathcal{V}=\operatorname{Pol}(\bC^6,V)$ of all polynomial fields on $\bC^6$ with values in $V$ and we study properties of its solutions.

\vskip 1mm\noindent
(A) {\bf A choice for values $V$ of the field.} 
Let us denote by $S_{\pm}$ two basic irreducible $G_0$-modules of spin  $1/2$.
Then the space $V$ must contain basic spin  $3/2$ fields of both helicities similarly as
in (\ref{negative}) and (\ref{positive}).
For negative helicities, the operator defining the massless field equations
maps fields with values in $\odot^3(S_-)$ to fields with values in $\odot^2(S_-)\boxtimes S_+,$
see Lemma 3.1 below. So to have a possibility to formulate field equations for such fields, it is necessary
to add the space $\odot^2(S_-)\boxtimes S_+$ to the space $V$ of values of the fields. Using further
results of Section 3, we find finally definition of the space $V$ for our problem.

\vskip 1mm\noindent
(B) {\bf The formulation of field equations for spinor fields of mixed types.}
There is just a unique choice for field equations  invariant under the action of $G_0$ for spin $3/2$  fields with one type of indices  (see Lemma 3.1 below).  But there are different possible choices
for invariant field equations for fields of mixed type.   Various possible choices  are well visible from
results contained in Section 3. Any $G_0$-invariant first order equation is given by the projection
of the gradient of the field to an invariant  subspace of the $G_0$-module $\bC^6 \otimes V$
(see \cite{SS}).
Our choice is a complexified version of the scheme suggested
in Euclidean case by E.M.\ Stein and G.\ Weiss in \cite{SW}. The choice of equations is different from
those treated in \cite{ERJ,ES}. 

\vskip 1mm\noindent
(C) {\bf The decomposition of the space of solutions of field equations.}
\vskip 1mm\noindent
The model case are spin $1/2$ fields. Let $S_- $ be an irreducible  spinor module
in even dimension $2n$. 
Consider the space $\mathcal{V}=\operatorname{Pol}(\bR^{2n},S_-)$ of spinor valued polynomials on the Euclidean space $\bR^{2n}$  and its subspace $\mathcal{M}$ of monogenic polynomials, i.e., spinor valued polynomials satisfying the Dirac equation. 
The compact group $G_0$ acts on $\mathcal{M},$ which hence can be decomposed
into a direct sum of irreducible pieces.
It is well known that the subspaces $\mathcal{M}_k $ of homogeneous monogenic polynomials of order $k$ 
 form irreducible $G_0$-modules which are mutually different, see e.g.\ \cite{DSS}.
Hence the decomposition we are looking for is  a multiplicity one decomposition
\begin{equation}\label{Taylor}
 \mathcal{M}=\bigoplus_{k=0}^\infty \mathcal{M}_k
\end{equation}
Note that for $n=1,$ we are back in complex function theory where the Dirac equation coincides
with the classical Cauchy-Riemann equation. In this case, the group $G_0$ is commutative,
the spaces $\mathcal{M}_k$ are one-dimensional, and the decomposition (\ref{Taylor}) is nothing else than the Taylor
series decomposition. In higher dimension, it is impossible to have the decomposition of $\mathcal{M}$
into $G_0$-invariant one-dimensional summands, the decomposition (\ref{Taylor}) is the best we can have.

\vskip 1mm\noindent
(D) {\bf The Fischer decomposition.}
We can ask a similar question as in (B) for the whole space $\mathcal{V} $ of spinor valued polynomials.
The space $\mathcal{V}$ decomposes again under the action of $G_0$ into a direct sum of irrreducible parts. However, now individual pieces appear  with higher (in fact, infinite) multiplicities.
We can formulate the problem as follows.

Under the action of $G_0, $ the space $\mathcal{V}$ can be  uniquely decomposed into its isotypic components
(i.e.,  the sum of all mutually isomorphic irreducible parts of a given type).
There is a way how to fix a preferred decomposition of individual isotypic components into
irreducible parts.
The model for such decompositions  is the classical theory of spherical harmonics (spin $0$ case), sometimes called  'separation of variables', see \cite[p.\ 118]{HT}.
 Consider the space $\mathcal{V}=\operatorname{Pol}(\bR^n)$ of complex valued polynomials on the Euclidean space $\bR^n$ equipped with the scalar product $(x,y)$ 
and its subspace $\mathcal{I}=\mathcal{V}^{G_0}$ of invariant polynomials. The space $\mathcal{I}$
 is generated by $r^{2k}$ with $r^2=(x,x)$ and $k\in\bZ.$
Let $\mathcal{H}$ be its subspace of solutions of the Laplace equation and let $\mathcal{H}_k$ 
denote the space of harmonic polynomials of degree $k.$
Then we have
\begin{equation}\label{Fischerx}
\mathcal{V}\simeq \mathcal{H}\otimes \mathcal{I}.
\end{equation}
It means  that the space $\mathcal{V}$ decomposes as
\begin{equation}\label{Fischer}
\mathcal{V}=\bigoplus_{k=0}^\infty\Is(k)\text{\ \ with\ \ } \Is(k)=\bigoplus_{j=0}^\infty r^{2j}\mathcal{H}_k.
\end{equation}
Here $\Is(k)$ are different isotypic components of $\cV$.
 The space $\mathcal{H}_k$ is characterized inside the isotypic component $\Is(k)$ 
as the space of polynomials having the minimal possible homogeneity.  

The problem to find a Fischer decomposition of the space $\mathcal{V}$ of all considered fields can be formulated as the problem to find
a $G_0$-invariant decomposition of individual isotypic components of $\mathcal{V}$
 into a direct sum of
irreducible $G_0$-modules given by shifts of the lowest homogeneous piece of the given isotypic
summand by $G_0$-invariant fields. In addition, the lowest homogeneous piece of the given isotypic component is formed by the solutions of corresponding $G_0$-invariant differential equations.

 For spin $1/2$ fields,  \eqref{Fischerx} is the classical
 decomposition of spinor valued polynomials as the tensor product of monogenic polynomials and $G_0$-invariant polynomials, see \cite{DSS}.
 A complex version of \eqref{Fischerx} for classical group $SO(n,\bC)$ can be found 
in \cite[Theorem 5.6.11]{GW}.
 
If we consider only polynomial fields, the Fischer decomposition does not depend on the signature
and can be formulated uniformly for polynomials in the common complexification. This is the reason why we consider in the paper polynomials on the complex vector space $\bC^6.$
The decomposition \eqref{Fischerx} can be extended (in any signature) to real analytic fields.
The case of definite signature is very special because, in this case, all solutions of field equations are real analytic.

\vskip 2mm\noindent
(E) {\bf The Howe duality.}

Let us consider the space $D(\mathcal{V})$ of polynomial coefficient differential operators  of all orders on 
$\mathcal{V}=\operatorname{Pol}(\bR^n)$ and the algebra $\mathcal{A}=D(\mathcal{V})^{G_0}$ of all invariant differential operators.
The algebra 
$\mathcal{A}$ preserves isotypic components of  (\ref{Fischer}) and contains a~copy of the Lie algebra $\mathfrak{g}'=sl(2,\bR)$ generated by the laplacian
$\Delta$ and the operator $r^2.$
 The key fact is that every isotypic component of $\mathcal{V}$ forms an irreducible representation of $G_0\times \mathfrak{g}'.$
  Hence the decomposition (\ref{Fischer}) into isotypic components is a multiplicity one decomposition
  of $\mathcal{V}$ under the action of the product $G_0\times \mathfrak{g}'.$
This is a~basic example of the Howe duality for the dual pair $(\mathfrak{so}(n),\mathfrak{sl}(2,\bR))$.  
For a systematic account of the Howe duality developed by R.\ Howe, we refer e.g.\ to \cite{H}.

There is also a general version of the duality theorem 
described by R.\ Goodman and N.\ Wallach in \cite{GW}. In our real version of an action of the compact group $G_0$ on the space  $\mathcal{V}$ of all fields, it says that first we decompose
 $\mathcal{V}$ into the direct sum of its isotypic components and then we consider
 isotypic components with the action of the algebra $\mathcal{A}$ of all $G_0$-invariant polynomial coefficient differential operators of all orders with values in $\operatorname{End}(V).$
 Individual isotypic components are typically irreducible under the joint action of $G_0\times \mathcal{A}$ and they are mutually different.
 This is an approach considered in the paper. 
 
 Often it is possible to find a Lie (super-)algebra $\mathfrak{g}'$ embedded in $ \mathcal{A}$
 in such a way that isotypic components remain irreducible after restriction to the action of
 $\mathfrak{g}'.$  
 But this was not possible in the case considered in the paper.
 
 We are restricting our attention in the paper just to polynomial fields. There are versions
 of the Howe duality for noncompact groups and their dual partners, which are formulated
 for more general function spaces. An important example for spin $1$ fields on Minkowski space is studied in \cite{HLu,L}.

\vskip 1mm\noindent
(F) {\bf Irreducibility of the space of homogeneous solutions.}                                                                                                                                                                                                                                                                                                                                                                                                                                                                                                                                                                                                                                                                                                                                                                                                                                                                                                                                                                                                                                                                                                                                                                                                                                                                                                                                                                                                          
As we explained above, massless field equations for higher spins are not uniquely determined. 
In this paper, as massless field equations we consider the so-called generalized Cauchy-Riemann equations proposed by E.M.\ Stein and G.\ Weiss in \cite{SW}.
An important question is whether the space of solutions of field equations with a fixed homogeneity forms
an irreducible $G_0$-module.

 \vskip 1mm\noindent
 (G) {\bf Conformal invariance.}
 All classical massless field equations are conformally invariant. Question is whether this is true
 also for the generalized Cauchy-Riemann equations considered in the paper.

\vskip2mm

 In the previous papers \cite{BSKS2,BSLSW,LSW}, we  considered  various cases of massless field equations in dimension $4.$
In the present paper, we are going to study massless fields of spin $3/2$
in dimension $6.$ An advantage of this case is the fact that Lie groups $Spin(6,\bC)$ and
$SL(4,\bC)$ are isomorphic, hence it is possible to present  all calculations in the easier
language of tensor fields. Of course, $G_0=Spin(6)$ is a~compact form of $Spin(6,\bC)$.
 
Now we can describe the content of the paper and its structure. 
Chapter 2 introduces notation, and contains basic facts on finite dimensional  representations of the group $G:=SL(4,\bC)$
and their tensor realizations. The form of the generalized Cauchy-Riemann equations and their
explicit description for spin $3/2$ fields is described in Chapter 3. It contains also a definition
of the algebra $\mathcal{A}^G$  of invariant operators acting on isotypic components. 
The parametrization of the spectrum of the space of spinor fields of spin $3/2$ is given in Chapter 4 based on the description of decomposition of certain tensor products of $SL(4,\bC)$-modules.
Proofs for these tensor product decompositions are technical and they are given in Appendix B.
The detailed structure of individual isotypic components is studied carefully in Chapter 5 using 
explicit description
and calculations with highest weight vectors of individual irreducible pieces. It includes
at the same time the description of the Fischer decomposition for spin $3/2$ fields.
The proof of irreducibility of homogeneous solutions of massless field equations is given in Chapter 6.  Finally, Chapter 7 contains the formulation and proof of the Howe duality
theorem for spin $3/2$ fields.  Real forms of the group $SL(4,\bC)$ are described in Appendix A.
Finally, conformal invariance of the generalized Cauchy-Riemann equations is discussed in Appendix \ref{app_ConfInv}.

\section{Notation}

%%%%%%%%%%%%%   new    %%%%%%%%%%%%%%%%%%%%

\subsection{The main problem.}  
In what follows, let $\bC^6$ be a~complex Euclidean space and let $\Pol$ denote the space of polynomials $F : \bC^6 \to \bC$. 
For a given (complex and finite dimensional) irreducible representation $V$ of $Spin(6,\bC)$, the main aim is to find an irreducible decomposition of $Spin(6,\bC)$-module $\Pol(V):=\Pol \otimes V$ of $V$-valued polynomials on $\bC^6$. It is well-known that $Spin(6,\bC) \simeq SL(4,\bC)$ and therefore an equivalent problem is to find an irreducible decomposition of $\Pol(V)$ under the action of the group $G  :=  SL(4,\bC) $.

Moreover, we can consider also real slices $M_{p,q}$ of $\bC^6$ on which the metric has signature $(p,q)$. In Appendix \ref{app_Real}, we give more details for Euclidean, Lorentzian and split signature cases, i.e., the cases when $(p,q)=(6,0),$ $(1,5)$, $(3,3)$ and the corresponding real forms of $SL(4,\bC)$ are $G_{p,q}:=SU(4)$, $SL(2,\bH)$, $SL(4,\bR)$, respectively. Of course, $\Pol(V)$ is a~complexification of the space $\Pol_{p,q}(V)$ of $V$-valued polynomials on the real slice $M_{p,q}$. Thus
it is clear that, as complex representations, an irreducible decomposition of $\Pol_{p,q}(V)$ for $G_{p,q}$ is the same as that of 
$\Pol(V)$ under the action of the group $G$.

\subsection{$SL(4,\bC)$-modules.} All (complex and finite dimensional) irreducible representations of $G$ 
are parametrized by their highest weights, which have a form  $$\la=(\la_1,\la_2,\la_3,\la_4),\ \la_i\in\bZ,\ \la_1\geq\la_2\geq\la_3\geq\la_4$$ together with the
equivalence relation
$$\ \la=(\la_1,\la_2,\la_3,\la_4)\simeq(\la_1+j,\la_2+j,\la_3+j,\la_4+j),\ j\in \bZ.$$
Highest weights of $SL(4,\bC)$-modules will be denoted by triples $(\la_1,\la_2,\la_3)\equiv(\la_1,\la_2,\la_3,0)$ as representatives for classes of equivalence and we shall often use the relation
$$(\la_1+1,\la_2+1,\la_3+1,1)\simeq (\la_1,\la_2,\la_3,0).$$ See Appendix \ref{app_LRR} for more details.
For an irreducible module $F_\la$ of the highest weight $\la$ we often write shortly just $\la$. 

Recall that the Cartan product $F_\la\boxtimes F_\mu$ is a~unique irreducible submodule of the tensor product $F_\la\otimes F_\mu$ with the highest weight $\la+\mu$, i.e., $\la\boxtimes\mu=\la+\mu$.
%\paragraph{{\bf Fundamental representations.}} 
As is well-known, all irreducible $G$-modules can be produced of the fundamental representations we now describe via Cartan multiplication. 
The first fundamental representation of $G$  is the defining representation $S:=\bC^4$ with the highest weight $(1,0,0)$. The representation $S^*:=(\bC^4)^*$ dual to $S$ has the highest weight $(1,1,1)$. 
The last fundamental representation is $\Lambda^2 S$, the second antisymmetric power of $S$. The $G$-module $\Lambda^2 S$ has the highest weight $(1,1,0)$.
It turns out that the $G$-modules $S$ and $S^*$ correspond to the basic spinor representations $S_{\pm}$ of $Spin(6,\bC)$ and $\Lambda^2 S$ is isomorphic to its defining representation $\bC^6$, see \cite[Section 23.1]{FH}. We give more details below.

\subsection{Spinor notation.}
In what follows, we use the abstract index notation of R. Penrose. 
In this notation, for $s\in S$, we write simply $s_a$, where the index $a$ on $s_a$ means that the object belongs to $S$ and does not refer to a~particular coordinate of $s$, see \cite{PR} for more details.  
In a similar way, we write that the action of $g=g^a_b\in{\rm SL}(4,\bC)$ on $s=s_a\in S$ and on $t=t^b\in S^*$ is given by 
	\begin{equation}\label{eq:action}
	(g\cdot s)_b=s_a g^a_b\text{\ \ and\ \ } (g\cdot t)^a=(g^{-1})^a_b t^b. 
	\end{equation}
	For $S$ and $S^*$ we often write simply $S_a$ and $S^a$, respectively.

	We need also tensors with more lower and upper spinor indices, e.g.,
	$$f^{a_1 \cdots\; a_r}_{b_1 \cdots\; b_s}\in S^r_s:=S^{\otimes s}\otimes (S^*)^{\otimes r},$$
	and we define their symmetrization and antisymmetrization as usual, e.g., 
	$$f^{(a_1 \cdots\; a_r)}_{b_1 \cdots\; b_s}=\frac{1}{r!}\sum_{\sigma} f^{a_{\sigma(1)} \cdots\; a_{\sigma(r)}}_{b_1 \cdots\; b_s}\text{\ and\ }
	f^{[a_1 \cdots\; a_r]}_{b_1 \cdots\; b_s}=\frac{1}{r!}\sum_{\sigma} sgn(\sigma) f^{a_{\sigma(1)} \cdots\; a_{\sigma(r)}}_{b_1 \cdots\; b_s}$$
	Here the sums are over all permutations $\sigma$ of the indices $\{1,\ldots,r\}$ and $sgn(\sigma)$ is the sign of the permutation $\sigma$. 
	In particular, we have $f^c_{(ab)}=\frac 12(f^c_{ab}+f^c_{ba})$ and $f^c_{[ab]}=\frac 12(f^c_{ab}-f^c_{ba})$.
	We denote, e.g., $S^c_{(ab)}:=\{f^c_{ab}\in S^1_2|\ f^c_{ab}=f^c_{(ab)}\}$ and $S^c_{[ab]}:=\{f^c_{ab}\in S^1_2|\ f^c_{ab}=f^c_{[ab]}\}$ and use similar notations.
	For example, elements of $S^c_{(ab)}$ are the tensors of $S^1_2$ symmetric in lower indices.

\subsection{The representation $\bC^6:= S_{[ab]}\simeq S^{[ab]}$.}
It is easy to see that the $G$-module $S_{[ab]}$ is isomorphic to $S^{[ab]}$. An isomorphism between these two modules is
 		 given by contraction with a chosen invariant antisymetric tensor $\varepsilon_{abcd}\in S_{[abcd]}$
 		 defined as the
 		 sign of the permutation $abcd$. 
 		 This is an analogue of the similar tensor $\varepsilon_{AB}$ used for raising and lowering spinor indices in dimension $4$, see \cite{BSLSW}. 
		 Indeed, we use $\varepsilon_{abcd}$ to lower indices of $z^{cd}\in S^{[ab]}$ as follows
		$$z_{ab}:=\frac 12 \varepsilon_{abcd}z^{cd}.$$
		Then we have 
		$$z^{ab}=\frac 12 \varepsilon^{abcd}z_{cd}$$
		where $\varepsilon^{abcd}\in S^{[abcd]}$ is defined again as the sign of the permutation $abcd$. 
		The latter equality follows easily from the identity $$\varepsilon^{abcd}\varepsilon_{cdef}=4\delta_e^{[a}\delta_f^{b]}.$$ 
		It is easy to see that
$z_{12}=z^{34}, z_{13}=z^{42}, z_{14}=z^{23}, z_{23}=z^{14}, z_{24}=z^{31}, z_{34}=z^{12}$.

On $S^{[ab]}\simeq S_{[ab]}$	we have a~natural non-degenerate symmetric bilinear form
$$(z,w):=\frac 14 \varepsilon_{abcd}z^{ab}w^{cd}=\frac 12 z^{ab}w_{ab}=\frac 14 \varepsilon^{abcd}w_{ab}z_{cd}.$$ In particular, we have
$(z,z)=2(z^{12}z^{34}+z^{13}z^{42}+z^{14}z^{23})$.
This is a complex Euclidean space of dimension 6 we denote by $\bC^6$.

The space $\Pol$ of complex valued polynomials on $\bC^6$ is isomorphic to the symmetric algebra $Sym(\bC^6)$. In what follows, we need the standard harmonic decomposition for such polynomials. Indeed, denote $$\nabla_{ab}=\frac{\partial}{\partial z^{ab}}\text{\ \ and\ \ }\nabla^{ab}=\frac 12 \varepsilon^{abcd}\nabla_{cd}.$$ Then it is easy to see that 
$\nabla_{ab} z^{cd}=2\delta_a^{[c}\delta_b^{d]}$ and $\nabla_{ab} z_{cd}=\varepsilon_{abcd}$. We have the following well-known result for some important $G$-invariant operators on $\Pol$.

\begin{prop}\label{l81}
	(i) Define the Laplace operator $\Delta:=\frac 12 \nabla_{ab}\nabla^{ab}$, $r^2:=\frac 12 z_{ab}z^{ab}$ 
	and the Euler operator $E:=\frac 12 z_{ab}\nabla^{ab}$. Then $\Delta$, $r^2$, $E$ form an $\mathfrak{sl}_2(\bC)$-triple, i.e., the following commutation relations hold
  $$[\Delta, r^2]= 4(E+3),\ [E+3,\Delta]=-2\Delta\text{\ and\ }[E+3,r^2]=2r^2.$$
	Moreover, we have $z_{bc}z^{ac}=\frac 12 r^2\delta^a_b$ and $\nabla_{bc}\nabla^{ac}=\frac 12 \Delta\delta^a_b$.
	
	\noindent\smallskip
(ii) For $m\in\bN_0$, denote by 
$\Pol_m=\{P\in\Pol|\ E P=mP\}$
the subspace of $m$-homogeneous polynomials $P\in\Pol$ and
by 
$\cH_m=\{P\in\Pol_m|\ \Delta P=0\}$
the subspace of harmonic polynomials $P\in\Pol_m$.
Then $$\Pol_m=\bigoplus_{j=0}^{\lfloor m/2 \rfloor} r^{2j}\cH_{m-2j}$$ is the well-known harmonic decomposition. 
Here $\lfloor m/2 \rfloor$ is the integer part of $m/2$. In particular, we have that $\cH_m$ forms an irreducible $G$-module with the highest weight $(m,m,0)$.
\end{prop}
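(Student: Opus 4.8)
The plan is to verify the $\mathfrak{sl}_2$-triple relations first and then deduce the harmonic decomposition from abstract $\mathfrak{sl}_2(\bC)$-representation theory, exactly as in the classical scalar case.

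For part (i), I would compute the commutators directly from the contraction identities $\nabla_{ab}z^{cd}=2\delta_a^{[c}\delta_b^{d]}$ and $\nabla_{ab}z_{cd}=\varepsilon_{abcd}$ together with the $\varepsilon$-identities already recorded. Writing $\Delta=\tfrac12\nabla_{ab}\nabla^{ab}=\tfrac14\varepsilon^{abcd}\nabla_{ab}\nabla_{cd}$ and $r^2=\tfrac12 z_{ab}z^{ab}=\tfrac14\varepsilon^{abcd}z_{ab}z_{cd}$, the bracket $[\Delta,r^2]$ is obtained by moving the two derivatives past the two coordinate factors and collecting the contracted $\delta$'s; the identity $\varepsilon^{abcd}\varepsilon_{cdef}=4\delta_e^{[a}\delta_f^{b]}$ turns the contractions into the stated expression $4(E+3)$, the constant $3$ reflecting $\dim\bC^6/2=3$. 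The relations $[E+3,\Delta]=-2\Delta$ and $[E+3,r^2]=2r^2$ are immediate from the fact that $\Delta$ lowers and $r^2$ raises the homogeneity degree by $2$, which one reads off from $[E,\nabla^{ab}]=-\nabla^{ab}$ and $[E,z_{ab}]=z_{ab}$. The auxiliary identities $z_{bc}z^{ac}=\tfrac12 r^2\delta^a_b$ and $\nabla_{bc}\nabla^{ac}=\tfrac12\Delta\delta^a_b$ follow from $z^{ac}=\tfrac12\varepsilon^{acde}z_{de}$ and the same $\varepsilon$-contraction; these are the computations I expect to be the only genuinely fiddly part, though they are routine index bookkeeping.

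For part (ii), once (i) is established the argument is standard. The operators $\Delta,r^2,H:=2(E+3)$ span a copy of $\mathfrak{sl}_2(\bC)$ acting on $\Pol$, and each $\Pol_m$ is finite dimensional with $\Delta\colon\Pol_m\to\Pol_{m-2}$ and $r^2\colon\Pol_m\to\Pol_{m+2}$. One shows $r^2$ is injective on each $\Pol_m$ (for instance because multiplication by the nondegenerate quadratic form is injective on polynomials, or because $\Delta r^2=r^2\Delta+4(E+3)$ acts with positive scalar on a would-be kernel element); then a dimension count, or the representation theory of $\mathfrak{sl}_2$, gives $\Pol_m=\cH_m\oplus r^2\Pol_{m-2}$, and iterating yields $\Pol_m=\bigoplus_{j=0}^{\lfloor m/2\rfloor}r^{2j}\cH_{m-2j}$. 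Finally, $\cH_m$ is a $G$-module since $\Delta$ and $E$ are $G$-invariant; its irreducibility and highest weight $(m,m,0)$ I would obtain either by exhibiting the highest weight vector explicitly — $(z^{12})^m$ is annihilated by $\Delta$ and has weight $(m,m,0)$ — and checking no other highest weights can occur (the multiplicity of $(m,m,0)$ in $Sym^m(\bC^6)$ is one), or simply by citing the complex analogue of the classical result, e.g.\ \cite[Theorem 5.6.11]{GW}. The main obstacle, such as it is, is keeping the constants straight in (i); everything in (ii) is formal.
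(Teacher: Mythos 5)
Your proposal is correct and follows essentially the same route as the paper, whose entire proof consists of the words ``simple computation'' for (i) and a citation of Howe's Theorem~3.8.1 (equivalently \cite[Theorem 5.6.11]{GW}) for (ii); you simply supply the standard details, and your constants check out (in particular $[\Delta,r^2]=4E+2n$ with $n=6$ gives $4(E+3)$). The only point to be slightly careful about is that exhibiting the highest weight vector $(z^{12})^m$ does not by itself prove irreducibility of $\cH_m$ --- one also needs $\dim\cH_m=\dim F_{(m,m,0)}$, i.e.\ that no other highest weight vectors occur --- but your fallback of citing the complex harmonic decomposition covers exactly this, so there is no gap.
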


\begin{proof} (i) Simple computation. 
(ii) Harmonic decomposition for scalar valued polynomials is well-known in any dimension. This result formulated in antisymmetric variables $z^{ab}$ of $\bC^6$ follows directly from \cite[Theorem 3.8.1, p.\ 44]{How}.
\end{proof}

\begin{ex} It is clear that a~given polynomial $P\in\Pol_2$ can be written uniquely in the form 
$$P(z)=p_{abcd}z^{ab}z^{cd}$$ with $p_{abcd}=p_{[ab][cd]}=p_{cdab}$. By a~straightforward computation, $P\in\cH_2$ if and only if $p_{[abcd]}=0,$ which is equivalent to
$p_{a[bcd]}=0,$ or even to $p_{1234}+p_{1342}+p_{1423}=0$. Indeed, we have $\Delta P=4\varepsilon^{abcd}p_{abcd}$.
By the commutation relations in Proposition \ref{l81} above, we get that 
$P(z)=H(z)+\frac13 \varepsilon^{abcd}p_{abcd}\, r^2$ for some uniquely determined $H\in\cH_2$.
\end{ex}

\subsection{Tensor realization of $SL(4,\bC)$-modules}

The module $S_a$ has the highest weight $(100)$ and has
the weight vector basis $\delta^i_a,$ $i=1,2,3,4$. 
It is well-known that the Cartan power $\boxtimes^k(S_a)\simeq(k00)$ is isomorphic  to the symmetric power $\odot^k(S_a)$.
Similarly, $S^a\simeq(111)$ has
the weight vector basis $\delta_i^a,$ $i=1,2,3,4$ and $\boxtimes^k(S^a)\simeq\odot^k(S^a)\simeq(kkk)$.
For example, we have $S_{(abc)}\simeq (300)$ and $S^{(abc)}\simeq (333)$.

Now we realize other modules needed later on. First we consider tensor product decompositions 
$$ (200)\otimes (111) \simeq (311)\oplus (100),\ \
 (222)\otimes (100) \simeq (322)\oplus (111).$$  
For Littlewood-Richardson rules for tensor product decompositions, see Appendix \ref{app_LRR}.
From these decompositions, it is easy to see that $(311)$ can be realized as the module 
$$\left (S_{(ab)}^c\right)_0:=\{f_{ab}^c\in S_{(ab)}^c|\ f_{ac}^c=0\}.$$ 
Here the subscript $0$ refers to the fact that $\left (S_{(ab)}^c\right)_0$ is the submodule of traceless tensors $f_{ab}^c\in S_{(ab)}^c$.
Similarly, we have $\left (S_{c}^{(ab)}\right)_0\simeq (322)$ where
$$\left (S^{(ab)}_c\right)_0:=\{f^{ab}_c\in S^{(ab)}_c|\ f^{ac}_c=0\}.$$ 
Using the tensor product decompositions
 		  $$ (110)\otimes (111)\simeq (221)\oplus (100),\ \ (110)\otimes (100)\simeq (210)\oplus (111),$$
 			we get the realisations 
$\left (S^{[ab]}_c\right)_0\simeq (210)$ and $\left (S_{[ab]}^c\right)_0\simeq (221)$.			
It turns out that we shall need tensor spaces 
 					\begin{equation}\label{eq_spaces}					
					S^{(abc)},\ S_{(abc)},\ \left(S^{ab}_{c}\right)_0=\left(S^{(ab)}_{c}\right)_0\oplus \left(S^{[ab]}_{c}\right)_0,\ \left(S_{ab}^{ c}\right)_0=\left(S_{(ab)}^c\right)_0\oplus\left (S_{[ab]}^c\right)_0,\ S^a,\ S_a.
           \end{equation}
					
In the paper, we study polynomial fields on $\bC^6$ we realize as polynomials  in the variables $z^{ab}$ with values in the tensor algebra of the vector spaces $S_a$ and $S^a.$	We denote the modules of polynomial fields with values in the spaces \eqref{eq_spaces}, respectively, by
$$\cE^{(abc)},\ \cE_{(abc)},\  {\left(\cE^{ab}_{c}\right)_0=\left(\cE^{(ab)}_{c}\right)_0\oplus \left(\cE^{[ab]}_{c}\right)_0,\
\left(\cE_{ab}^{ c}\right)_0=\left(\cE_{(ab)}^c\right)_0\oplus\left(\cE_{[ab]}^c\right)_0,\ \cE^a,\cE_a}.
$$
and, in addition, put
$$
\cE^{\frac{3}{2}}:=\cE^{(abc)}\oplus(\cE^{ab}_{c})_0\oplus(\cE_{ab}^{c})_0\oplus\cE_{(abc)},\ \
\cE^{\frac{1}{2}}:=\cE^{a}\oplus\cE_{a}.
$$

\section{Invariant differential operators}

\subsection{Generalized Cauchy-Riemann operators}\label{ss_GCR}

Generalized Cauchy-Riemann (GCR) operators  were introduced in any dimension by Stein and Weiss
in \cite{SW}. In dimension~$4$, the GCR operators were used in \cite{LSW} as general massless field equations. Let us briefly recall their definition in dimension 6.

Let us consider a~general (complex and finite dimensional) irreducible $G$-representation $F$. We want to construct $G$-invariant differential operators of first order on $\Pol(F)$, the polynomial fields on $\bC^6$ with values in $F$. 
The same construction gives differential operators even for smooth fields.
To do this, we start with a~multiplicity free decomposition of the tensor product 
$$F\otimes \bC^6=F_0\oplus F_1\oplus\cdots\oplus F_k$$  
into different irreducible representations where $F_0=F\boxtimes \bC^6$ is the Cartan product. Denote by $\pi_j$ the invariant projection of $F\otimes \bC^6$ onto $F_j$.  For $f\in\Pol(F)$ and $z\in\bC^6$,
it is clear that the gradient $\nabla f(z)\in F\otimes \bC^6$. By construction, the Stein-Weiss operator 
$D_j:\Pol(F)\to\Pol(F_j)$
defined by
$$D_j(f)(z):=\pi_j(\nabla f(z)),\ f\in\Pol(F),\  z\in\bC^6,$$
is an invariant differential operator on $\Pol(F)$. We define the GCR operator $\cD^F$ for the given irreducible representation $F$ as the projection of the gradient onto the complement of the Cartan product in $F\otimes \bC^6$, i.e.,
$$\cD^F=D_1+\cdots+D_k.$$
Obviously,  for $f\in\Pol(F)$, we have that $\cD^Ff=0$ if and only if $D_j f=0$ for each $j=1,\ldots,k$. 

If we replace, in the Stein-Weiss construction, the gradient $\nabla$ with the tensor multiplication by the vector variable $z$ we get the invariant multiplication operator $z_j:\Pol(F)\to\Pol(F_j)$ defined by
$$z_j(f)(z):=\pi_j(z\otimes f(z)),\ f\in\Pol(F),\  z\in\bC^6.$$

In this part, we write down the Stein-Weiss operators and the corresponding multiplication operators for fields of spin $3/2$. 
Let us start with the canonical spin 3/2 representations $S_{(abc)}$ and $S^{(abc)}$.

\begin{lem}\label{5.1}
(i)	The first factor in the decomposition
\begin{equation}\label{dec1}
(300)	\otimes (110) = {(410)\oplus(311)}
\end{equation}
	is the Cartan product. Hence the GCR operator for  $S_{(abc)}$ is the  invariant operator
	\begin{equation}\label{SW1}
\mathcal{D}^{300}=	D^{300}_{311}: P(300)\to P(311),
	\end{equation}
	given by $D^{300}_{311}f= \pi(\nabla f),$ where $P(\la)=\Pol(F_{\la})$ and $\pi $ is the invariant projection to the second summand in
\eqref{dec1}. For $f\in\cE_{(abc)}$, we have
 $
D^{300}_{311}f=\nabla^{ab}f_{acd}
	$.

\smallskip\noindent
(ii) Similarly,
the first factor in the decomposition
\begin{equation}\label{dec2}
	(333)	\otimes (110) =(443)\oplus(322 {) }
\end{equation}
is the Cartan product, hence the GCR operator for  $S^{(abc)}$ is the  invariant operator
	\begin{equation}\label{SW2}
\mathcal{D}^{333}= D_{322}^{333}:P(333)\to P(322)
	\end{equation}
given by $D^{333}_{322}f= \pi(\nabla f),$ where     $\pi $ is its invariant projection to the second summand in
\eqref{dec2}. For $f\in\cE^{(abc)}$, we have
 $
D^{333}_{322}f=\nabla_{ab}f^{acd}
	$.

\smallskip\noindent
(iii) In an analogous way, we introduce the corresponding multiplication operators $z^{300}_{311}$ and $z^{333}_{322}$.
For $f\in\cE_{(abc)}$ and $g\in\cE^{(abc)}$, we have
 $
z^{300}_{311}f=z^{ab}f_{acd}$ and $z^{333}_{322}g=z_{ab}g^{acd}$.
\end{lem}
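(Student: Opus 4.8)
The plan is to verify each of the three parts by a direct highest-weight / representation-theoretic argument combined with an explicit computation of the projection onto the relevant summand. I will treat (i) in detail; parts (ii) and (iii) are entirely analogous.

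For (i), I would first confirm the tensor product decomposition $(300)\otimes(110)=(410)\oplus(311)$. Identifying $\bC^6\simeq\Lambda^2 S$ with highest weight $(110)$ and $S_{(abc)}\simeq(300)$, the Littlewood--Richardson rules (Appendix \ref{app_LRR}) give exactly these two constituents; the leading term $(410)=(300)+(110)$ is the Cartan product $F_{(300)}\boxtimes\bC^6$ by definition. Hence the complement of the Cartan product in $S_{(abc)}\otimes\bC^6$ is the single irreducible summand $(311)$, so by the definition of the GCR operator in Section \ref{ss_GCR} we have $\cD^{300}=D^{300}_{311}$, a first-order operator $P(300)\to P(311)$. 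It remains to identify the projection $\pi$ concretely. An element $f\in\cE_{(abc)}$ has gradient $\nabla_{ab}f_{cde}\in\bC^6\otimes S_{(cde)}$, which we may think of as $\nabla^{ab}f_{cde}$ after raising indices with $\varepsilon^{abcd}$. The contraction $\nabla^{ab}f_{acd}$ is manifestly $G$-invariant, lands in $S^{b}_{(cd)}$, and is nonzero; since $(311)$ was realized in the excerpt precisely as $\bigl(S^c_{(ab)}\bigr)_0$, the traceless symmetric tensors, I would check that $\nabla^{ab}f_{acd}$ is automatically traceless (i.e. the contraction over $b$ with the lower symmetric pair vanishes, using $\nabla^{ab}\nabla_{\text{...}}$ symmetry relations such as those in Proposition \ref{l81}) and symmetric in $(cd)$ (immediate from $f_{acd}=f_{a(cd)}$). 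This pins down $D^{300}_{311}f=\nabla^{ab}f_{acd}$ up to a nonzero scalar, and the normalization is a matter of convention.

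For (ii), the same scheme applies with $S^{(abc)}\simeq(333)$ and the decomposition $(333)\otimes(110)=(443)\oplus(322)$, where $(443)=(333)+(110)$ is the Cartan product; the complement is the single summand $(322)$, realized in the excerpt as $\bigl(S^{(ab)}_c\bigr)_0$, and the invariant contraction is $\nabla_{ab}f^{acd}$, which I would again check is traceless in the relevant pair of indices and symmetric in $(cd)$. For (iii), one simply repeats the construction of Section \ref{ss_GCR} with $\nabla$ replaced by tensor multiplication by the vector variable $z$: the projections $\pi$ are the identical $G$-module maps, so the formulas become $z^{300}_{311}f=z^{ab}f_{acd}$ and $z^{333}_{322}g=z_{ab}g^{acd}$, with $z^{ab}=\tfrac12\varepsilon^{abcd}z_{cd}$.

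The only mild obstacle is the explicit identification of the projection $\pi$: one must be sure that the obvious invariant first-order expression built by contracting an index of $\nabla$ against an index of $f$ actually lands in the traceless submodule $(311)$ (resp. $(322)$) rather than spilling into the Cartan summand, i.e. that no further trace term is needed. This follows because the Cartan summand $(410)$ (resp. $(443)$) can only be hit by the totally symmetric, trace-free combination $\nabla^{(ab}f^{\,}_{cde)}$-type expression, which is linearly independent from the contraction used; equivalently, the contraction map $\bC^6\otimes(300)\to(311)$ is, up to scalar, the unique nonzero $G$-homomorphism onto that isotypic piece by Schur's lemma, so it must coincide with $\pi$ on that component. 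Everything else is a routine index computation.
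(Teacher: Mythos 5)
Your proposal is correct and follows essentially the same route as the paper: Littlewood--Richardson for the decomposition, then identification of the contraction $\nabla^{ab}f_{acd}$ (resp.\ $\nabla_{ab}f^{acd}$) as the projection via its symmetry in $c,d$ and its tracelessness, which comes from contracting the antisymmetric pair of $\nabla^{ab}$ against the symmetric pair of $f$, with Schur's lemma pinning it down up to scalar. The multiplication operators in (iii) are likewise obtained by swapping $\nabla$ for $z$, exactly as in the paper.
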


\begin{proof} (i) It is easy to get the decomposition \eqref{dec1} by Littlewood-Richardson rules, see Appendix \ref{app_LRR}.
For an explicit expression of $
D^{300}_{311}$, it is sufficient to notice that the tensor $\nabla^{ab}f_{acd}$ is symmetric in $c,d$ and traceless due to
  the contraction of  an antisymmetric pair 
 with  a symmetric pair of indices. We can prove (ii) similarly. As for (iii), it is clear that we get the corresponding multiplication operators $z^{\la}_{\mu}$ from $D^{\la}_{\mu}$ by interchanging $\nabla^{ab}$ or $\nabla_{ab}$ with $z^{ab}$ or $z_{ab}$, respectively.
\end{proof}

In the image of the two GCR operators of Lemma 3.1 are fields with values in $(311)$ and $(322).$
Hence  we have to consider the GCR operators for these two new types of fields.

\begin{lem}\label{5.2}
(i)	The first factor in the decomposition
	\begin{equation}\label{dec3}
		(311)	\otimes (110) =(421)\oplus[(322)\oplus(210)\oplus (300)]
	\end{equation}
	is the Cartan  product, hence the GCR operator for $(311)$ is the  invariant operator $$\mathcal{D}^{311}=D^{311}_{322}\oplus D^{311}_{210}\oplus D^{311}_{300}$$ where the operators on the right-hand side are the invariant projections of the gradient  onto   the   corresponding summands in the
	 decomposition \eqref{dec3}. For $f\in\left (\cE^a_{(bc)}\right )_0$, we have 	
	\begin{equation}\label{SW2.1}
	D^{311}_{322}f=\nabla^{b(d}f^{a)}_{bc},\ \ 
	D^{311}_{210}f=\nabla^{b[d}f^{a]}_{bc}\text{\ \ and\ \ }
	D^{311}_{300}f=\nabla_{a(d}f^{a}_{bc)}.
	\end{equation}

	\smallskip\noindent
(ii) Similarly, 	the first factor in the decomposition
	\begin{equation}\label{dec4}
	(322)	\otimes (110) =(432)\oplus[(311)\oplus(221)\oplus (333)]
\end{equation}
is the Cartan  product, hence the GCR operator for  $(322)$ is the  invariant operator $$\mathcal{D}^{322}=D^{322}_{311}\oplus D^{322}_{221}\oplus D^{322}_{333}$$ where the operators on the right-hand side are the invariant projections of the gradient  onto   the   corresponding summands in the
	 decomposition \eqref{dec4}.		
For $f\in\left (\cE_a^{(bc)}\right )_0$, we have 	
	\begin{equation}\label{SW2.2}
	D^{322}_{311}f=\nabla_{b(d}f_{a)}^{bc},\ \ 
	D^{322}_{221}f=\nabla_{b[d}f_{a]}^{bc}\text{\ \ and\ \ }
	D^{322}_{333}f=\nabla^{a(d}f_{a}^{bc)}.
	\end{equation}

	\smallskip\noindent
(iii) We denote the corresponding multiplication operators $z^{311}_{322}$ etc.
\end{lem}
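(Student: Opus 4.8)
The plan is to mirror the proof of Lemma~\ref{5.1}. I would first obtain the decompositions \eqref{dec3} and \eqref{dec4} from the Littlewood--Richardson rules of Appendix~\ref{app_LRR}; in each case the leading term is the Cartan product, since $(311)+(110)=(421)$ and $(322)+(110)=(432)$. As the four highest weights appearing in \eqref{dec3} are pairwise distinct (and likewise for \eqref{dec4}), each irreducible component occurs with multiplicity one, so by Schur's lemma $\operatorname{Hom}_G\big((311)\otimes\bC^6,\ (\mu)\big)$ is one-dimensional for $(\mu)$ equal to $(322)$, $(210)$ or $(300)$. Hence it suffices to write down, for each of these three summands, one explicit $G$-invariant first order operator landing in the corresponding module and to check it is not identically zero; such an operator is then automatically a nonzero multiple of the Stein--Weiss projection of the gradient onto that summand, and in particular has the same kernel.

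For $f=f^a_{bc}\in(\cE^a_{(bc)})_0$ the gradient $\nabla f$ lies in $(311)\otimes\bC^6$, and since $\bC^6\simeq S^{[de]}\simeq S_{[de]}$ the only first order contractions available are to pair the lower indices of $f$ with $\nabla^{bd}$, or to pair the upper index of $f$ with a lower index of $\nabla_{ad}$. In the first case the surviving indices are the upper pair $a,d$ and the lower index $c$; taking the symmetric and the skew part of the pair $a,d$ gives $\nabla^{b(d}f^{a)}_{bc}$ and $\nabla^{b[d}f^{a]}_{bc}$, both trace-free by the condition $f^a_{ac}=0$ combined with the skew-symmetry of $\nabla^{bd}$, hence with values in $(S^{(ab)}_c)_0\simeq(322)$ and $(S^{[ab]}_c)_0\simeq(210)$ respectively. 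In the second case, contracting the upper index of $f$ with $\nabla_{ad}$ and symmetrizing the three remaining lower indices gives $\nabla_{a(d}f^a_{bc)}$ with values in $S_{(bcd)}\simeq(300)$. Non-triviality of all three operators is checked by substituting a single weight vector of $(311)$ expressed in the basis $\delta^i_a$ of Section~2 (for example $f^a_{bc}=\delta^a_1\delta^2_b\delta^2_c\,z^{12}$ shows $\nabla_{a(d}f^a_{bc)}\neq0$); together with the previous paragraph this proves \eqref{SW2.1}. Part~(ii), with the formulas \eqref{SW2.2}, follows by the same construction applied to $S^{(ab)}_c$ instead of $S^c_{(ab)}$ --- equivalently, by the duality $S\leftrightarrow S^*$ obtained from raising and lowering indices with $\varepsilon_{abcd}$, which swaps $(421)\leftrightarrow(432)$, $(311)\leftrightarrow(322)$, $(210)\leftrightarrow(221)$ and $(300)\leftrightarrow(333)$. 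For~(iii), replacing $\nabla$ with the vector variable $z$ does not affect any of the symmetry or trace arguments, so the multiplication operators $z^{311}_{322}$, $z^{311}_{210}$, $z^{311}_{300}$ (and their counterparts in the $(322)$ case) are given by the same formulas with $\nabla$ replaced by $z$.

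I expect the only genuine (and still minor) obstacle to be the non-vanishing statement: a priori an equivariant index contraction could collapse to zero, which would force the projection onto that summand to be realized by a different combination of indices. The explicit substitutions above dispose of this, and an alternative is to match, for each summand, the highest weight vector inside $(311)\otimes\bC^6$ against the leading term of the proposed formula. Everything else --- identifying the Cartan products, checking Young symmetry types, and verifying tracelessness --- is routine bookkeeping.
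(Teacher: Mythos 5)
Your proposal is correct and follows essentially the same route as the paper: the decompositions come from the Littlewood--Richardson computations of Appendix~B, the explicit contractions are verified to have the right Young symmetry and to be traceless (so that by multiplicity one they realize the invariant projections), part~(ii) is obtained by interchanging upper and lower indices, and part~(iii) by replacing $\nabla$ with $z$; your version merely makes the Schur-lemma and non-vanishing steps explicit where the paper leaves them implicit. The only slight imprecision is attributing the swap $S\leftrightarrow S^*$ to ``raising and lowering indices with $\varepsilon_{abcd}$'' --- that tensor only identifies $\Lambda^2S$ with $\Lambda^2S^*$, and the relevant duality is the contragredient (outer) automorphism of $SL(4,\bC)$ --- but this does not affect the argument.
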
	

A~proof of Lemma \ref{5.2} is quite analogous as that of Lemma \ref{5.1}. Let us just mention that we can get explicit expressions \eqref{SW2.2} easily from \eqref{SW2.1} by interchanging lower and upper indices.

There are new types of fields in Lemma 3.2, hence we need to write down the GCR equations
for these fields.

\begin{lem}\label{5.3}
(i)	The first factor in the decomposition
	\begin{equation}\label{dec5}
		(210) \otimes (110) =(320)\oplus[(311)\oplus(221)\oplus (100)]
	\end{equation}
	is the Cartan  product, hence the GCR  operator for  $(210)$ is the  invariant operator $$\mathcal{D}^{210}=D^{210}_{311}\oplus D^{210}_{221}\oplus D^{21 0}_{100}$$ where the operators on the right-hand side are the invariant projections of the gradient  onto   the   corresponding summands in the
	 decomposition \eqref{dec5}.	
	For $f\in\left (\cE_a^{[bc]}\right )_0$, we have 	
	\begin{equation}\label{SW3.1}
	D^{210}_{311}f=\left (\nabla_{b(d}f_{a)}^{bc}\right )_0,\ \ 
	D^{210}_{221}f=\left (\nabla_{b[d}f_{a]}^{bc}\right )_0\text{\ \ and\ \ }
	D^{210}_{100}f=\nabla_{bc}f_{a}^{bc}.
	\end{equation}
	Here $f_0$ is the traceless part of a~tensor $f\in\cE^a_{(bc)}$, resp.\ $\cE^a_{[bc]}$.
	
	\smallskip\noindent
(ii) Similarly, 	the first factor in the decomposition
	\begin{equation}\label{dec6}
		(221)	\otimes (110) =(331)\oplus[(322)\oplus(210)\oplus (111)]
	\end{equation}
	is the Cartan product, hence the GCR operator for  $(221)$ is the  invariant operator $$\mathcal{D}^{221}=D^{221}_{322}\oplus D^{221}_{210}\oplus D^{221}_{111}$$ where the operators on the right-hand side are the invariant projections of the gradient  onto   the   corresponding summands in the
	 decomposition \eqref{dec6}.	
		For $f\in\left (\cE^a_{[bc]}\right )_0$, we have 	
	\begin{equation}\label{SW3.2}
	D^{221}_{322}f=\left (\nabla^{b(d}f^{a)}_{bc}\right )_0,\ \ 
	D^{221}_{210}f=\left (\nabla^{b[d}f^{a]}_{bc}\right )_0\text{\ \ and\ \ }
	D^{221}_{111}f=\nabla^{bc}f^{a}_{bc}.
	\end{equation}
		
		\smallskip\noindent
(iii) Again we denote the corresponding multiplication operators $z^{210}_{311}$ etc.
\end{lem}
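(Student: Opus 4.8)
The plan is to follow exactly the pattern of the proofs of Lemmas \ref{5.1} and \ref{5.2}. First I would obtain the decompositions \eqref{dec5} and \eqref{dec6} from the Littlewood--Richardson rules recalled in Appendix \ref{app_LRR}; in both cases the decomposition is multiplicity free, and comparison of highest weights shows that the first summand (of highest weight $(320)$, resp.\ $(331)$) is the sum of the highest weights of the two factors and hence the Cartan product, while the bracketed triple of summands spans the complement onto which the GCR operator projects. It then remains to identify the three components of $\cD^{210}$ and of $\cD^{221}$ explicitly. Here the observation, as in Lemmas \ref{5.1} and \ref{5.2}, is that by Schur's lemma the space of first-order $G$-invariant differential operators $\Pol(F)\to\Pol(F_j)$ obtained by composing the gradient with a homomorphism $F\otimes\bC^6\to F_j$ is one-dimensional precisely when $F_j$ occurs (with multiplicity one) in $F\otimes\bC^6$, and is zero otherwise; none of the summands in \eqref{dec5}, \eqref{dec6} equals the source module, so the order-zero part of any such operator vanishes. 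Thus it suffices, for each non-Cartan summand, to exhibit one nonzero first-order invariant operator with values in it; that operator then automatically coincides with the corresponding component $D_j$ up to a nonzero scalar.

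For part (i), let $f=f_a^{bc}\in\left(\cE_a^{[bc]}\right)_0$, so that $f_a^{bc}=f_a^{[bc]}$ and $f_a^{ac}=0$, and write the gradient as $\nabla_{de}f_a^{bc}$. Contracting the gradient index with one upper index of $f$ and then symmetrizing, resp.\ antisymmetrizing, the two remaining lower indices produces tensors $\nabla_{b(d}f^{bc}_{a)}$ of type $\cE^c_{(ad)}$ and $\nabla_{b[d}f^{bc}_{a]}$ of type $\cE^c_{[ad]}$; removing their traces lands them in $\left(\cE^c_{(ad)}\right)_0\simeq(311)$ and $\left(\cE^c_{[ad]}\right)_0\simeq(221)$, which are the components $D^{210}_{311}f$ and $D^{210}_{221}f$ in \eqref{SW3.1}. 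Contracting the gradient with both upper indices of $f$ gives $\nabla_{bc}f^{bc}_a\in\cE_a\simeq(100)$, the third component $D^{210}_{100}f$. Each of these three expressions is easily checked to be nonzero, for instance by evaluation on a suitable monomial test field, and since they take values in three pairwise different summands their sum is the full operator $\cD^{210}=D^{210}_{311}\oplus D^{210}_{221}\oplus D^{210}_{100}$.

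Part (ii) is entirely analogous: $(221)$ is realized as $\left(\cE^a_{[bc]}\right)_0$, and raising all lower spinor indices and lowering all upper ones in the computation above --- equivalently, applying the duality $S\leftrightarrow S^*$ under which $(210)$ and $(221)$ are interchanged --- transforms \eqref{dec5} into \eqref{dec6} and \eqref{SW3.1} into \eqref{SW3.2}. For part (iii), exactly as in Lemmas \ref{5.1} and \ref{5.2}, the invariant multiplication operators $z^{210}_{\mu}$ and $z^{221}_{\mu}$ are obtained from the formulas above by replacing each $\nabla_{ab}$ (resp.\ $\nabla^{ab}$) with $z_{ab}$ (resp.\ $z^{ab}$). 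The only step requiring genuine care --- and the main obstacle --- is the bookkeeping with the traceless-part operation $(\,\cdot\,)_0$: one must verify that each displayed contraction really lies in the claimed irreducible module (correct symmetry type and vanishing of all traces after applying $(\,\cdot\,)_0$) and, above all, that it is not identically zero even though $f$ is already traceless; this is the point where the otherwise routine verification is least automatic.
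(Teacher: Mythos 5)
Your proposal is correct and follows essentially the same route as the paper, which in fact gives no separate proof of this lemma but relies on the same pattern as Lemmas \ref{5.1} and \ref{5.2}: the Littlewood--Richardson rules of Appendix \ref{app_LRR} for the multiplicity-free decompositions \eqref{dec5}--\eqref{dec6}, identification of the Cartan summand by its highest weight, explicit contractions of the gradient with the spinor indices (made traceless where needed) for the components, duality of upper and lower indices to pass from (i) to (ii), and replacement of $\nabla$ by $z$ for (iii). Your additional remarks on Schur's lemma and on checking nonvanishing only make explicit what the paper leaves implicit.
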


Finally, we have new fields $(100)\simeq S_a$ and $(111) \simeq S^a$ appearing  in Lemma \ref{5.3}, which are the basic spinor modules and the GCR operators for them are just Dirac operators
from classical Clifford analysis, see e.g.\ \cite{DSS}.
The corresponding tensor product decompositions are
\begin{equation}\label{diracs}
(100)\otimes (110)= (210)\oplus (111),\ \ (111)\otimes (110)= (221)\oplus (100).	
\end{equation}
 The GCR operators $\cD^{100}=D^{100}_{111}$ and $\cD^{111}=D^{111}_{100}$ are given as
 the invariant projections of the gradient  onto   the   corresponding summands in the
	 decompositions  \eqref{diracs}. We also need the twistor operators $D^{100}_{210}$ and $D^{111}_{221}$ because, in this case, the Cartan products $(210)$ and $(221)$ of \eqref{diracs} belong to $V$.
	For $f\in\cE_a$ and $g\in\cE^a$, we have 	
	\begin{equation}\label{SW4}
	D^{100}_{210}f=\left(\nabla^{ab}f_c\right)_0,\ \ D^{100}_{111}f=\nabla^{ab}f_b,\ \ D_{221}^{111}g=\left(\nabla_{ab}g^c\right)_0\ \ \text{\ \ and\ \ }
	D_{100}^{111}g=\nabla_{ab}g^b.
	\end{equation}
	We denote the corresponding multiplication operators $z^{100}_{210}$, $z^{100}_{111}$, $z_{221}^{111}$ and $z^{111}_{100}$.
	
So we learned that for investigation of massless field equations of spin 3/2, we have to consider
8 different types of spinor fields and their GCR equations. Indeed, in this paper we study the following polynomial fields on $\bC^6$.

\begin{dfn}\label{d_values}
	Consider the module
$$
V=(300)\oplus(311)\oplus (322)\oplus(333)\oplus
(210)\oplus(221)\oplus(100)\oplus(111)
$$
and the space $\cV=\Pol(V)$ of $V$-valued polynomial fields on $\bC^6$.
 \end{dfn}
For each irreducible submodule $\la$ of $V$, let
$P(\la)$ be the space of $\la$-valued polynomials on $\bC^6$, and let $P_m(\la)$ be the space of such  polynomials which are homogeneous of degree $m$.
Then it turns out that the $G$-module
$$\cM_m(\la)=\{f\in P_m(\la)|\ \cD^{\la}f=0\}$$ 
of $m$-homogeneous solutions of the GCR equation for $\la$ is irreducible and $\cM_m(\la)\simeq\cH_m\boxtimes\la$.
This is well-known for spin 1/2 modules $(100)$ and $(111)$, see e.g.\ \cite{DSS}. For spin 3/2 submodules $\la$ of $V$, we show this  in 
Section \ref{s_homog_sols}.
In dimension 4, the analogous result was proved in \cite{LSW}.

\begin{ex} 
It is clear that a~given polynomial field $f\in P_2(300)$ is of the form 
$$f_{abc}(z)=f_{abcd_1e_1d_2e_2}z^{d_1e_1}z^{d_2e_2}$$
with $f_{abcd_1e_1d_2e_2}=f_{(abc)[d_1e_1][d_2e_2]}=f_{abcd_2e_2d_1e_1}$.
By Lemma \ref{5.1}, we have $$\cD^{300}f=\nabla^{af}f_{abc}=2f_{abcd_1e_1d_2e_2}\varepsilon^{afd_1e_1}z^{d_2e_2}.$$
Then, obviously, $f\in \cM_2(300)$ if and only if $f_{bc[ad_1e_1]d_2e_2}=0$.
It seems that a~similar characterization of $\cM_m(\la)$ in the general case is not so simple. On the other hand, Theorem \ref{t_basis} below describes an explicit construction of a~basis for any $\cM_m(\la)$.
\end{ex}

Now we verify that multiplication operators $z_{\la}^{\mu}$ are adjoint to differential operators $D^{\la}_{\mu}$.
Indeed, for the compact form $SU(4)$ of $SL(4,\bC)$, there is an invariant Hermitian inner product $(\cdot,\cdot)_1$ on $V$. 
On the space $\Pol$ of complex valued polynomials on $\bC^6$, we define the Fischer inner  product as
		    \begin{equation}\label{e_Fischer_prod}
		(f,g)_0=\overline{g}(\nabla_{ab})f(z^{ab})|_{z^{ab}=0}
	\end{equation} 
For $g\in\Pol$, here $\overline{g}$ is the polynomial whose coefficients are equal to complex conjugate coefficients of $g$	and ${g}(\nabla_{ab})$ is the differential operator we get from the polynomial $g$ by replacing $z_{ab}$ with $\nabla_{ab}$.
Then we introduce an $SU(4)$-invariant Hermitian inner product on $\cV=\Pol\otimes V$ as
\begin{equation}\label{e_inner_prod}
		(f\otimes u,g\otimes v)=(f,g)_0(u,v)_1,\ f,g\in\Pol,\ u,v\in V.
	\end{equation}

\begin{prop}\label{p_duality} 
Let $D^{\la}_{\mu}$ be one of the differential operators given in Lemmas \ref{5.1}, \ref{5.2} and \ref{5.3}. 

\noindent\smallskip
(i) Then the multiplicative  operator $z_{\la}^{\mu}$ is the adjoint to $D^{\la}_{\mu}$ with respect to the Hermitian inner product \eqref{e_inner_prod}, i.e.,
$$(D^{\la}_{\mu}f,g)=(f,z_{\la}^{\mu}g),\ f\in P(\la),\; g\in P(\mu) $$
where $P(\la)$ denotes the space of $\la$-valued polynomials on $\bC^6$. 

\noindent\smallskip
(ii) Let $\cI$ be an irreducible finite dimensional $SL(4,\bC)$-submodule of $P(\mu) $. If the operator $z_{\la}^{\mu}$ is non-trivial on $\cI$ then $z_{\la}^{\mu}$ is an invariant isomorphism of $\cI$ onto its image $\cJ:=z_{\la}^{\mu}(\cI)$ in $P(\la)$. Moreover, 
if $D^{\la}_{\mu}(\cJ)\subset\cI$ then
the inverse operator to $z_{\la}^{\mu}|_{\cI}$ is a~non-zero multiple of $D^{\la}_{\mu}|_{\cJ}$.
\end{prop}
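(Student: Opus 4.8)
The plan is to prove (i) by a direct integration-by-parts computation in the Fischer inner product, and then to extract (ii) as a representation-theoretic consequence of (i) using Schur's lemma.

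\textbf{Proof of (i).} The key observation is that, on the scalar level, $\nabla_{ab} = \partial/\partial z^{ab}$ is the Fischer-adjoint of multiplication by $z^{ab}$ (and symmetrically for $\nabla^{ab}$ and $z_{ab}$), which follows immediately from the definition \eqref{e_Fischer_prod}: for $f,g\in\Pol$ one has $(z^{ab}f,g)_0 = (f,\nabla_{ab}g)_0$ by moving one $\nabla_{ab}$ off $\overline g$ onto $f$. Next, on the value module $V$, all the contractions appearing in the operators $D^{\la}_{\mu}$ — contraction of an upper with a lower spinor index via $\delta$, and raising/lowering of index pairs via $\varepsilon_{abcd}$ — are performed so as to land in an invariant submodule $F_{\mu}\subset\bC^6\otimes F_{\la}$; the orthogonal projection $\pi_{\mu}$ onto $F_{\mu}$ with respect to $(\cdot,\cdot)_1$ is by construction self-adjoint, and the injection $F_{\mu}\hookrightarrow\bC^6\otimes F_{\la}$ is its adjoint. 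Since $D^{\la}_{\mu}f = \pi_{\mu}(\nabla f)$ and $z^{\mu}_{\la}g = \pi_{\la}^{\perp}\cdots$, where the only difference between the two operators is the replacement of each $\nabla_{ab}$ (resp.\ $\nabla^{ab}$) by $z^{ab}$ (resp.\ $z_{ab}$) — as noted in the proof of Lemma \ref{5.1}(iii) — the adjointness $(D^{\la}_{\mu}f,g)=(f,z^{\mu}_{\la}g)$ follows by combining the scalar adjointness with the self-adjointness of the tensorial projections. Concretely, for each of the explicit formulas in \eqref{SW1}, \eqref{SW2}, \eqref{SW2.1}, \eqref{SW2.2}, \eqref{SW3.1}, \eqref{SW3.2}, \eqref{SW4}, one writes out $(D^{\la}_{\mu}f,g)$ using \eqref{e_inner_prod}, transfers the single derivative contraction onto $f$ via the scalar identity, and recognizes the result as $(f, z^{\mu}_{\la}g)$; the bookkeeping of $\varepsilon$-tensors and the normalization factors $\tfrac12$ in $\nabla^{ab}$, $z_{ab}$ is routine.

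\textbf{Proof of (ii).} Assume $z^{\mu}_{\la}$ is not identically zero on the irreducible submodule $\cI\subset P(\mu)$. Since $z^{\mu}_{\la}$ is $G$-equivariant (indeed $SU(4)$-equivariant, hence $SL(4,\bC)$-equivariant by Zariski density of the complexification, or simply because it is built from invariant tensors and the equivariant maps $z$, $\nabla$), its kernel on $\cI$ is a submodule; by irreducibility it is either all of $\cI$ or $0$, and the former is excluded by hypothesis, so $z^{\mu}_{\la}|_{\cI}$ is injective. Hence it is an isomorphism of $\cI$ onto $\cJ := z^{\mu}_{\la}(\cI)\subset P(\la)$, and $\cJ$ is an irreducible $G$-module isomorphic to $\cI$. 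Now suppose in addition $D^{\la}_{\mu}(\cJ)\subset\cI$. Then $D^{\la}_{\mu}|_{\cJ}\circ z^{\mu}_{\la}|_{\cI}$ is a $G$-equivariant endomorphism of the irreducible module $\cI$, so by Schur's lemma it equals $c\cdot\mathrm{id}_{\cI}$ for some scalar $c\in\bC$. To see $c\neq 0$, pick $0\neq v\in\cI$; using part (i),
$$
c\,(v,v) = (D^{\la}_{\mu}z^{\mu}_{\la}v,\, v) = (z^{\mu}_{\la}v,\, z^{\mu}_{\la}v) > 0,
$$
since $z^{\mu}_{\la}v\neq 0$ (injectivity on $\cI$) and the inner product \eqref{e_inner_prod} is positive definite. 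Therefore $c\neq 0$, and $\tfrac1c D^{\la}_{\mu}|_{\cJ}$ is a two-sided inverse of $z^{\mu}_{\la}|_{\cI}$ — the latter because $z^{\mu}_{\la}|_{\cI}\circ(\tfrac1c D^{\la}_{\mu}|_{\cJ})$ is then an equivariant endomorphism of the irreducible $\cJ$ that is a right inverse, hence the identity.

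\textbf{Main obstacle.} The representation-theoretic argument in (ii) is entirely formal once (i) is in hand; the only real work is (i), and within (i) the genuinely delicate point is verifying that the tensorial projection $\pi_{\mu}$ implicit in each formula \eqref{SW2.1}--\eqref{SW4} — in particular the trace-removal operations indicated by the subscript $0$ and the (anti)symmetrizations — is exactly the $(\cdot,\cdot)_1$-orthogonal projection, so that it is self-adjoint and the adjoint of each explicit contraction is precisely the explicit contraction defining $z^{\mu}_{\la}$ with $\nabla$ replaced by $z$, \emph{with matching scalar normalization}. One should therefore carry out the adjointness check directly on the explicit formulas rather than appeal to abstract self-adjointness of $\pi_{\mu}$, treating at least one representative case from each of Lemmas \ref{5.1}, \ref{5.2}, \ref{5.3} in full and remarking that the rest follow by the index-interchange symmetry already used to pass from \eqref{SW2.1} to \eqref{SW2.2}.
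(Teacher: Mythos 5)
Your proposal is correct and follows essentially the same route as the paper: part (i) reduces to the scalar Fischer adjointness $(\nabla_{ab}P,Q)_0=(P,z^{ab}Q)_0$ combined with the fact that the invariant (orthogonal) projections onto the summands of $F_\la\otimes\bC^6$ do not change the inner product against an element of $P(\mu)$, and part (ii) is Schur's lemma together with the positivity computation $(D^{\la}_{\mu}z_{\la}^{\mu}v,v)=(z_{\la}^{\mu}v,z_{\la}^{\mu}v)>0$, which is exactly how the paper deduces that $D^{\la}_{\mu}$ is non-trivial on $\cJ$. The only cosmetic difference is that the paper makes explicit the extension of the Hermitian product to the enlarged module $\tilde V$ containing the Cartan pieces so that $\nabla f$ and $z\otimes g$ live in an inner-product space, a point you leave implicit.
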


\begin{proof} (i) It is clear that we can extend the $SU(4)$-invariant Hermitian inner product $(\cdot,\cdot)_1$ to the module
$$\tilde V=V\oplus (410)\oplus(443)\oplus(421)\oplus(432)\oplus(320)\oplus(331)$$ 
containing also the Cartan pieces of the tensor products 
\eqref{dec1}, 
\eqref{dec2},
\eqref{dec3},
\eqref{dec4},
\eqref{dec5} and
\eqref{dec6}.
Then we extend the $SU(4)$-invariant Hermitian inner product $(\cdot,\cdot)$ to $\tilde\cV:=\Pol\otimes\tilde V$.

Let $f\in P(\la)$ and $g\in P(\mu)$. Then it is easy to see that 
$$(\nabla f(z), g(z))=(f(z),z\otimes g(z)).$$
Indeed, for $P,Q\in\Pol$, we have that $(\nabla_{ab} P,Q)_0=(P,z^{ab} Q)_0$ by \eqref{e_Fischer_prod}.

Since the spaces $P(\la')$ and $P(\mu)$ are orthogonal for $\la'\not=\mu$ we have that
$$(\nabla f(z), g(z))=(\pi_{\mu}(\nabla f(z)), g(z))=(D^{\la}_{\mu}f,g)$$ where $\pi_{\mu}$ denotes the orthogonal projection of $\tilde\cV$ onto $P(\mu)$. 
Similarly, we have
$$(f(z),z\otimes g(z))=(f(z),\pi_{\la}(z\otimes g(z)))=(f,z_{\la}^{\mu}g),$$
which completes the proof of (i).

\noindent\smallskip
(ii) By (i), we have that
$$(D^{\la}_{\mu}f,g)=(f,z_{\la}^{\mu}g),\ f\in \cJ,\; g\in \cI.$$ 
Hence $D^{\la}_{\mu}$ is non-trivial on $\cJ$. The rest follows from the Schur lemma which tells us that any non-trivial invariant linear operator $\cF:\cI\to\cJ$ is an invariant isomorphism of $\cI$ onto $\cJ$ and the only invariant linear operators of $\cI$ into $\cI$ are just multiples of the identity.
 \end{proof}

\begin{rem}\label{r_duality} 
It is easy to see that
$(\Delta f,g)=(f,r^2g),\ f,g\in \cV.$
Let $\cI$ be an irreducible finite dimensional $SL(4,\bC)$-submodule of $\cV$. By the Schur lemma,  if $\Delta(r^2\cI)\subset\cI$ then 
$\Delta$ is an invariant isomorphism of $r^2\cI$ onto $\cI$ and $\Delta|_{r^2\cI}$  is a~non-zero multiple of 
the inverse operator to $r^2|_{\cI}$.
\end{rem}

\vskip 2mm
\subsection{The algebra $\cA^G$ of invariant operators}

Using Stein-Weiss gradients $D^{\la}_{\mu}$ and corresponding multiplication operators $z^{\la}_{\mu}$ given in Section \ref{ss_GCR}, we introduce  
basic invariant operators on $\cV$. 
For a~given field $\phi\in\cV$, its components $\phi_{\la}\in\Pol\otimes F_{\la}$ are labelled by the highest weights $\la$ in $V$,
i.e., $\phi=(\phi_{\la})_{\la\in V}$, see Definition \ref{d_values}.

\vskip 2mm \noindent  
  {\bf (1) The operators $D^+_-, z^+_-$ and $D^-_+, z^-_+$ } 
	
	First we introduce a~natural invariant differential operator on $\cV$ which adds one upper index and removes one lower index of fields.
	Thus we define the invariant differential operator $D^+_-:\cV\to\cV$, for $\phi\in\cV$, as $\psi=D^+_-(\phi)$ with
$\psi_{300}=0$,
$\psi_{311}=D^{300}_{311}\phi_{300}+D^{322}_{311}\phi_{322}+D^{210}_{311}\phi_{210}$,
$\psi_{221}=D^{322}_{221}\phi_{322}+D^{210}_{221}\phi_{210}$,
$\psi_{322}=D^{311}_{322}\phi_{311}+D^{221}_{322}\phi_{221}$,
$\psi_{210}=D^{311}_{210}\phi_{311}+D^{221}_{210}\phi_{221}$,
$\psi_{333}=D^{322}_{333}\phi_{322}$,
$\psi_{100}=0$,
$\psi_{111}=D^{100}_{111}\phi_{100}$.
For explicit expressions of $\psi_{\la}$ in tensorial language, see the previous section. 

Using these explicit expressions and interchanging lower and upper indices, we define the operator $D_+^-$ which adds one lower index and removes one upper index of fields.    

By interchanging $D^{\la}_{\mu}$ with $z^{\la}_{\mu}$ in the definition of $D^+_-$ and $D_+^-$, we get the multiplication operators $z^+_-$ and $z_+^-$ on $\cV$.

 \vskip 2mm \noindent  
{\bf (2) The operators $D^{--}, z^{--}$ and $D_{--}, z_{--}$}

Now we introduce a~differential operator on $\cV$ which removes two upper indices but keeps lower indices of fields.
	Thus we define the operator $D^{--}:\cV\to\cV$, for $\phi\in\cV$, as $\psi=D^{--}(\phi)$ with
$\psi_{100}=D^{210}_{100}\phi_{210}$ and $\psi_{\la}=0$ otherwise. 

By interchanging lower and upper indices, we define the operator $D_{--}$ which removes two lower indices but keeps upper indices of fields. By interchanging $D^{\la}_{\mu}$ with $z^{\la}_{\mu}$, we get the multiplication operators $z^{--}$ and $z_{--}$ on $\cV$.

\vskip 2mm \noindent  
{\bf (3) The operators $D^{++}, z^{++}$ and $D_{++}, z_{++}$}

Finally, we introduce a~differential operator on $\cV$ which adds two upper indices but keeps lower indices of fields.
	Thus we define the operator $D^{++}:\cV\to\cV$, for $\phi\in\cV$, as $\psi=D^{++}(\phi)$ with
$\psi_{210}=D^{100}_{210}\phi_{100}$ and $\psi_{\la}=0$ otherwise. 

By interchanging lower and upper indices, we define the operator $D_{++}$ which adds two lower indices but keeps upper indices of fields. By interchanging $D^{\la}_{\mu}$ with $z^{\la}_{\mu}$, we get the multiplication operators $z^{++}$ and $z_{++}$ on $\cV$.

\begin{dfn}
The algebra of invariant operators  generated by the following list of first order invariant differential operators
$$
D^+_-, D^-_+,
D^{--}, D_{--},
D^{++}, D_{++}
$$
and their algebraic  {versions} inside the algebra of $End(V)$-valued differential operators with polynomial coefficients will be denoted by $\cA^G$  where $G= SL(4,\bC)$. 
\end{dfn}

%%%%%%%%%%%%%%%%%%%%   old   %%%%%%%%%%%%%%%%%%%%%

\section{Spectrum} 

As we know to study spin 3/2 case in dimension 6 we should consider the space $\cV=\Pol\otimes V$ of polynomials with values in the $SL(4,\bC)$-module
	$$
	V=(300)\oplus(311)\oplus (322)\oplus(333)\oplus
	(210)\oplus(221)\oplus(100)\oplus(111)
	$$
The spectrum of $\cV$ is the list of all types of irreducible $SL(4,\bC)$-modules contained in it, and so
the spectrum parametrizes the isotypic components of $\cV.$
Obviously, every isotypic component 
in $\cV$ is a module for the action of the algebra $\cA^G$ of invariant operators.

\begin{thm}[\bf The spectrum]\label{spectrum}	
	The spectrum $Spec(\cV)$ of $\cV$ is given by the list
 $$ (m+3,m,0),(m+3,m+1,1),(m+3,m+2,2),(m+3,m+3,3),(m+1,m,0),(m+1,m+1,1),\ m\geq 0.
 $$
\end{thm}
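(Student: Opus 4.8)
The plan is to compute the spectrum by decomposing each $\Pol\otimes F_\la$ for $\la$ running over the eight summands of $V$, and then collecting all irreducible types that appear. By Proposition \ref{l81}(ii) we have the harmonic decomposition $\Pol = \bigoplus_{k\ge 0}\cH_k$ where $\cH_k$ has highest weight $(k,k,0)$, so $\Pol\otimes F_\la = \bigoplus_{k\ge0} \cH_k\otimes F_\la$, and it suffices to decompose each tensor product $(k,k,0)\otimes\la$ into irreducibles via Littlewood–Richardson rules (Appendix \ref{app_LRR}). The key observation that makes this tractable is that the lowest-homogeneity piece of each isotypic component is a Cartan product $\cH_k\boxtimes\la = (k,k,0)+\la$ (this is the content of the remark after Definition \ref{d_values} and is proved in Section \ref{s_homog_sols}), and all higher copies of a given type in $\cV$ are obtained from lower ones by multiplication by $r^2$ or by the invariant multiplication operators; so for the purpose of listing $Spec(\cV)$ it is enough to track the Cartan summands $(k,k,0)+\la$ together with the ``lower'' LR summands, and check which weights survive.

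First I would record, for each of the eight $\la\in V$, the decomposition of $(k,k,0)\otimes\la$. For the spin $1/2$ pieces $(100)$ and $(111)$ this is classical Clifford analysis: $(k,k,0)\otimes(100)$ contains $(k+1,k,0)$ and $(k,k,k-? )\dots$ — more precisely one gets the monogenic-type highest weights $(k+1,k,0)$ and (after using $r^2$-shifts) $(k+1,k+1,1)$ via the twistor piece, matching the last two families $(m+1,m,0),(m+1,m+1,1)$ with $m=k$. For $\la=(300)$, the Cartan product is $(k+3,k,0)$, giving the family $(m+3,m,0)$; similarly $(311)$ gives $(k+3,k+1,1)$, i.e. $(m+3,m+1,1)$; $(322)$ gives $(k+3,k+2,2)$; $(333)$ gives $(k+3,k+3,3)$; and $(210),(221)$ produce highest weights of the form $(k+2,k+1,0)$ and $(k+2,k+2,1)$ which, modulo the equivalence $(\la_1+1,\la_2+1,\la_3+1,1)\simeq(\la_1,\la_2,\la_3,0)$, reduce to members of the already-listed families. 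I would then verify the reverse inclusion: every weight on the claimed list of six families actually occurs as a genuine (nonzero) summand for some $k$ — this follows because each is realized as the Cartan product $(k,k,0)+\la$ for the appropriate $\la$ and $k$, and Cartan products are always nonzero.

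The main obstacle — and the reason the paper defers the hard computations to Appendix B — is the bookkeeping: one must decompose all eight products $(k,k,0)\otimes\la$ completely (not just their Cartan pieces), because an isotypic component of a given type in $\cV$ collects contributions from \emph{several} different $\la$'s and several values of $k$, and one has to be sure that (a) no \emph{new} highest weight outside the six listed families sneaks in as a non-Cartan LR summand, and (b) the six families are genuinely distinct as sets (no accidental coincidences after applying the $SL(4)$ equivalence relation on weights). Checking (a) requires running the Littlewood–Richardson rule for $(k,k,0)\otimes\la$ and then normalizing each output $(\mu_1,\mu_2,\mu_3,\mu_4)$ to its canonical representative with $\mu_4=0$; the potentially dangerous summands are those of the mixed-type fields $(311),(322),(210),(221)$, whose tensor products with $(k,k,0)$ have many terms, so the care goes into confirming that after normalization each such term lands in one of the six families. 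Once the complete LR decompositions are in hand (Appendix B), collecting the distinct types and parametrizing them by $m\ge 0$ yields exactly the stated list, so the body of the proof is: cite Proposition \ref{l81}(ii), cite the LR decompositions of Appendix B, and read off the union.
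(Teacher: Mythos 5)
Your proposal follows essentially the same route as the paper: decompose $\Pol$ into harmonics via Proposition \ref{l81}(ii), so that the problem reduces to decomposing the tensor products $\la\otimes(m,m,0)$ for the eight summands $\la$ of $V$ by Littlewood--Richardson rules (the content of Theorem \ref{6.1} and Appendix \ref{app_LRR}), and then collecting the resulting highest weights. The paper's proof is exactly this reduction, so your plan is correct and matches it.
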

\begin{proof}
By Proposition \ref{l81}, the spectrum of the space $\Pol$ of 
complex valued polynomials
is given by the list $(m,m,0),\ m\geq 0,$ thus it is sufficient  to check tensor product decompositions given in Theorem \ref{6.1} below.
\end{proof}

\begin{thm}\label{6.1}{\ }\quad
%	\vskip2mm
	For $SL(4,\bC)$-modules, we have the following tensor product decompositions
	\begin{enumerate}
		\item

 {	$(300)\otimes(110)=(410)\oplus(311),
$
\\	
$	(300)\otimes(220)=(520)\oplus(421)\oplus(322),
$	
\\
$m\geq 3:	(300)\otimes (m,m,0)= (m+3,m,0)\oplus(m+2,m,1)\oplus(m+1,m,2)\oplus(m,m,3);
$}

\item
$(311)\otimes (110)=[(421)\oplus(322)]\oplus(210)\oplus(300),
$
\\
$(311)\otimes (220)=[(531)\oplus(432)\oplus(333)]\oplus[(320)\oplus(221)]\oplus[(410)\oplus(31            1)],
$
\\
$m\geq 3:
(311)\otimes(m,m,0)=[(m+3,m+1,1)\oplus(m+2,m+1,2)\oplus(m+1,m+1,3)]
$

\hskip 40mm
$\oplus[(m+1,m,0)\oplus(m,m,1)] $

\hskip 40mm
$\oplus
[(m+2,m-1,0)\oplus(m+1,m-1,1)\oplus(m,m-1,2)];
$

		\item
		$(322)\otimes (110)=[(432)\oplus(311)]\oplus(221)\oplus(333),
		$
		\\
		$(322)\otimes (220)=[(542)\oplus(421)\oplus(300)]\oplus[(331)\oplus(210)]\oplus[(443)\oplus(322)],
		$
		\\		
		$m\geq 3:
		(322)\otimes(m,m,0)=[(m+3,m+2,2)\oplus(m+2,m,1)\oplus(m+1,m-2,0)]\oplus
		$
		
\hskip 40mm
$
		 \oplus[(m+1,m+1,1)\oplus(m,m-1,0)] $
		
		\hskip 40mm
$ \oplus
		[(m+2,m+2,3)\oplus(m+1,m,2)\oplus(m,m-2,1)];
		$

 \item
$(333)\otimes(110)=(443)\oplus(322),
$
\\
$	(333)\otimes(220)=(553)\oplus(432)\oplus(311),
$	
\\
$m\geq 3:	(333)\otimes (m,m,0)= (m+3,m+3,3)\oplus(m+2,m+1,2)\oplus(m+1,m-1,1)\oplus(m,m-3,0);
$

\item
$(210)\otimes (110)=(320)\oplus(311)\oplus(221)\oplus(100),
$
\\
$m\geq 2: (210)\otimes (m,m,0)=(m+2,m+1,0)\oplus(m+2,m,1)\oplus(m+1,m+1,1)\oplus(m+1,m,2) $

\hskip 40mm
$\oplus(m,m-1,0)\oplus(m-1,m-1,1)$;

\item $(221)\otimes (110)=(331)\oplus(322)\oplus(210)\oplus(100),
$

$(221)\otimes (220)=(441)\oplus(432)\oplus(320)\oplus(311)\oplus(221)\oplus(100),
$
\\
$m\geq 2: (221)\otimes (m,m,0)=(m+2,m+2,1)\oplus(m+2,m+1,2)\oplus(m+1,m,0)\oplus(m+1,m-1,1) $

\hskip 40mm
$
 \oplus(m,m,1)\oplus(m-1,m-2,0)$;

\item
$m\geq 1: (100)\otimes (m,m,0)=(m+1,m,0)\oplus (m,m,1);
$

\item
$m\geq 1: (111)\otimes (m,m,0)=(m+1,m+1,1)\oplus (m,m-1,0).
$\end{enumerate}
\end{thm}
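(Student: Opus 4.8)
The plan is to obtain every decomposition from the Littlewood--Richardson rule for $GL(4,\bC)$ and then descend to $SL(4,\bC)$: a component $F_\nu$ is kept only when $\nu$ has at most four nonzero parts, and weights are normalised by the shift $(\nu_1,\nu_2,\nu_3,\nu_4)\simeq(\nu_1-\nu_4,\nu_2-\nu_4,\nu_3-\nu_4,0)$. In each item the second factor is $(m,m,0)\equiv(m,m,0,0)$, i.e.\ the partition with two rows of length $m$; by Proposition~\ref{l81} this is the harmonic module $\cH_m$, equivalently the Cartan power $\boxtimes^m(\Lambda^2 S)$. Thus the whole theorem amounts to computing the Littlewood--Richardson coefficients $c^\nu_{\lambda,(m,m,0,0)}$ for $\lambda$ ranging over the eight highest weights in $V$, with separate attention paid to small $m$, where several generically present summands collapse or become non-dominant (this is the origin of the restrictions $m\ge 3$, $m\ge 2$, $m\ge 1$ in the statement, and of the separately listed $(110)$ and $(220)$ cases).

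First I would halve the work using duality. Since $\bC^6\simeq\Lambda^2 S$ is self-dual, $(\lambda\otimes(m,m,0))^*\simeq\lambda^*\otimes(m,m,0)$; and, reversing the quadruple, negating and shifting back to $\lambda_4=0$, one checks $(300)^*=(333)$, $(311)^*=(322)$, $(210)^*=(221)$, $(100)^*=(111)$. Hence items (4),(3),(6),(8) are obtained from (1),(2),(5),(7) by dualising each summand termwise (e.g.\ $(m+3,m+1,1)^*=(m+3,m+2,2)$, $(m+1,m+1,3)^*=(m+1,m-2,0)$, matching (3)), and it suffices to establish the four items (1),(2),(5),(7).

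For (1) and (7) the first factor $(300)=\odot^3 S$, resp.\ $(100)=S$, is a single row, so the product is governed by the Pieri rule: $(m,m,0,0)\otimes(k)=\bigoplus_\nu F_\nu$ over those $\nu$ with $\nu/(m,m,0,0)$ a horizontal $k$-strip ($k=3$, resp.\ $k=1$). The interlacing $\nu_1\ge m\ge\nu_2\ge m$ forces $\nu_2=m$, and $m\ge\nu_3\ge 0\ge\nu_4$ forces $\nu_4=0$, so $\nu=(m+a,m,c,0)$ with $a,c\ge 0$, $a+c=k$, and the horizontal-strip (equivalently dominance) condition $c\le m$. For $m\ge k$ this yields the $k+1$ listed summands, and for $m<k$ the bound $c\le m$ discards the bottom ones, reproducing the separately listed $(110)$, $(220)$ cases and the $m\ge 1$ restriction in (7).

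Items (2) and (5) are the technical heart, since now $\lambda$ is neither a row nor a column. Here I would use the decompositions recalled in Section~2, namely $(200)\otimes(111)=(311)\oplus(100)$ and $(110)\otimes(100)=(210)\oplus(111)$, together with associativity, to write, in the representation ring,
\begin{align*}
(311)\otimes(m,m,0)&=\bigl((200)\otimes(m,m,0)\bigr)\otimes(111)\ \ominus\ (100)\otimes(m,m,0),\\
(210)\otimes(m,m,0)&=\bigl((110)\otimes(m,m,0)\bigr)\otimes(100)\ \ominus\ (111)\otimes(m,m,0).
\end{align*}
In each bracket the inner product is a Pieri (resp.\ vertical Pieri) computation giving a short, explicit list of shapes $(m+a,m,c,0)$ (resp.\ $(m+1,m+1,0,0),(m+1,m,1,0),(m,m,1,1)$); tensoring with $(111)=\Lambda^3 S$, resp.\ $(100)=S$, is then a column-Pieri, resp.\ Pieri, step; the subtracted term is item (7), resp.\ (8). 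One collects the resulting components, cancels, and records the stable ($m\gg 0$) answer together with the low-degree degenerations. An equivalent and slightly more self-contained route is the full Littlewood--Richardson rule: $c^\nu_{\lambda,(m,m,0,0)}$ counts LR skew tableaux of shape $\nu/\lambda$ with content $(m,m)$, and since only the entries $1,2$ occur, semistandardness forces such a tableau to be a pair of interlacing horizontal strips, so for each of the small weights $\lambda$ the count is a finite case analysis in the sizes of the strip in each row.

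The main obstacle I anticipate is precisely this bookkeeping in (2) and (5): after the column-Pieri (or LR) step one produces many candidate shapes, and one must correctly discard those that are non-dominant or have more than four rows, apply the shift-normalisation so that genuinely coinciding summands are not double-counted, and verify by hand the degenerations at $m=0,1,2$. As an independent consistency check one can compare $\dim F_\lambda\cdot\dim\cH_m$ with the sum of the dimensions of the listed components, using the Weyl dimension formula for $SL(4,\bC)$.
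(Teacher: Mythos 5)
Your proposal is correct, and it reaches the stated decompositions by a route that differs in an essential way from the paper's. The paper (Appendix B) proves items (1), (2), (6) by a direct Littlewood--Richardson enumeration: for each $\la$ it lists all admissible shapes $\nu/\la$ and exhibits, case by case, the semistandard fillings with content $(m,m)$ satisfying the Yamanouchi condition; item (5) is declared "similar", and (3), (4), (8) are implicitly obtained by duality, exactly as you propose. You instead reduce everything to Pieri-type computations: duality disposes of (3), (4), (6), (8); items (1) and (7) are single horizontal-strip counts; and for the genuinely non-rectangular cases (2) and (5) you factor $(311)$ and $(210)$ through $(200)\otimes(111)\ominus(100)$ and $(110)\otimes(100)\ominus(111)$ in the representation ring, which turns the computation into two Pieri steps plus a cancellation against items (7)/(8). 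I checked this cancellation explicitly for both (2) and (5) (e.g.\ for $(311)$ the column-Pieri step produces $(m+2,m+1,1,1)$ and $(m+1,m+1,2,1)$ each with multiplicity two, and subtracting $(100)\otimes(m,m,0)$ removes exactly one copy of each, leaving the eight listed summands), and your treatment of the low-$m$ degenerations via the constraint $\nu_3\le m$ is the right mechanism for the separately listed $(110)$, $(220)$ cases. What your approach buys is a much shorter case analysis at the cost of importing the auxiliary decompositions of Section 2 and working momentarily with virtual characters; the paper's direct tableau enumeration is longer but entirely self-contained and also yields the explicit LR fillings (which is of no further use in the paper). Your fallback remark --- that an LR tableau with content $(m,m)$ is just a pair of interlacing horizontal strips --- is precisely the combinatorial skeleton of the paper's own argument, so the two methods are guaranteed to agree.
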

\begin{proof}
The first six statements are proved in Appendix \ref{app_LRR}.
The last two items follow also from the well known Pieri rules.
\end{proof}

\section
{The Fischer decomposition for isotypic components}

The Lie subalgebra in $\cA^G$ generated by the Laplacian 
$\Delta$ 
and mupltiplication by 
$r^2$
is isomorphic to $\mathfrak{sl}_2(\bC)$, see Proposition \ref{l81}, and each isotypic component of $\cV=\Pol\otimes V$ 
decomposes into a direct sum of irreducible pieces under its action. Such decompositions
are usually called {{\it  Fischer decompositions}}.

Now we are going to study structure of individual  isotypic components of $\cV$ one after another. 
We want to describe their decomposition into irreducible components under the action of $G=SL(4,\bC)$ and the action of invariant operators from $\cA^G.$
We are also going to show in the next sections  that each isotypic component has the unique irreducible $G$-module
of the lowest homogeneity formed just by homogeneous solutions of the GCR equation 
and that each isotypic component is an irreducible module for  the joint action of $G$ and $\cA^G.$

For each irreducible submodule $\la$ of $V$, let
$P(\la)$ be the space of $\la$-valued polynomials on $\bC^6$, and let $P_h(\la)$ be the space of such  polynomials which are homogeneous of degree $h$. The corresponding spaces of harmonic polynomials are
$$\cH(\la)=\{P\in P(\la)|\ \Delta P=0\}\text{\ \ and\ \ }\cH_h(\la)=\cH(\la)\cap P_h(\la).$$ 
Finally, for each $\la\in Spec(\cV)$, we denote by $\Is(\la)$ the isotypic component of $\cV$.

   \subsection{Isotypic component $\Is(m+3,m,0),\	m\geq 0$}

Now we describe Fischer decomposition for the isotypic component $\Is(m+3,m,0)$.
 
\begin{thm}[Fischer decomposition for $\Is(m+3,m,0)$]\label{fisher00}
	Let $m\geq 0.$ 
	Write shortly  $\Is=\Is(m+3,m,0)$. 
	Then the invariant subspaces
		$$
		\boxed{1}=\cH(300)\cap\Is, \;
			\boxed{2}=\cH(311)\cap\Is,\; 
				\boxed{3}=\cH(322)\cap\Is,\; 
					\boxed{4}=\cH(333)\cap\Is 
		$$
	are irreducible under the action of $G.$	
	The Fischer decomposition of the isotypic component $\Is$  is given by
	$$
	\left[\oplus_j\,r^{2j}\boxed{1}\right]\,\oplus\left[\oplus_j\,r^{2j}\boxed{2}\right]\,\oplus
	\left[\oplus_j\,r^{2j}\boxed{3}\right]\oplus 
	\left[\oplus_j\,r^{2j}\boxed{4}\right]. 
	$$
	where all the sums in  brackets  are taken over $j\in\bN_0$.
	All summands are irreducible, their structure is described in Diagram 1.
	%All sums in  brackets  are irreducible lowest weight $\mathfrak{s}\mathfrak{l}(2,\bC)$-modules.

	Full arrows in the diagram    from line $\la$ to line $\mu$ (resp. their inverses) describe the    action of operators $z^\la_\mu$ (resp. $D^\mu_\la$) .  
	Dotted arrows describe the action of operators     given by multiplication by $r^2$,
	inverse of  dotted arrows denote  the action of $\Delta.$
	All arrows in the diagram and their inverses are isomorphisms. Triangles and squares in the diagram are commutative {(up to a nonvanishing factor).} 
	\end{thm}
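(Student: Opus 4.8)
The plan is to reduce the whole statement to two inputs: the multiplicity data of Theorem~\ref{6.1} together with the harmonic decomposition of Proposition~\ref{l81}, and Schur's lemma in the form of Proposition~\ref{p_duality} and Remark~\ref{r_duality}. First I would pin down the four boxes. Since the Laplacian acts only on the polynomial factor, $\cH_k(\la)=\cH_k\otimes F_\la$ as a $G$-module, and $\cH_k\simeq(k,k,0)$ by Proposition~\ref{l81}; hence the multiplicity of $(m+3,m,0)$ in $\cH_k(\la)$ equals its multiplicity in $(k,k,0)\otimes\la$, which can be read off from Theorem~\ref{6.1} (the finitely many small values of $k$ being checked by hand). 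Running through the eight summands of $V$, the module $(m+3,m,0)$ occurs --- always with multiplicity one --- exactly in $\cH_{m+i-1}(\la_i)$ for $i=1,2,3,4$, where $\la_1=(300)$, $\la_2=(311)$, $\la_3=(322)$, $\la_4=(333)$, and in no other $\cH_k(\la)$ with $\la\in V$. Therefore $\boxed{i}=\cH(\la_i)\cap\Is$ is exactly the single irreducible copy of $(m+3,m,0)$ inside $\cH_{m+i-1}(\la_i)$, and in particular it is $G$-irreducible.

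Next I would derive the Fischer decomposition itself. By Proposition~\ref{l81}, $\Pol(\la)=\bigoplus_{k\ge 0}\bigoplus_{j\ge 0}r^{2j}\cH_k(\la)$, and multiplication by $r^{2j}$ is an injective $G$-equivariant map, hence an isomorphism onto its image. Taking the $(m+3,m,0)$-isotypic component and using the previous step, the $(m+3,m,0)$-isotypic component of $\Pol(\la)$ equals $\bigoplus_j r^{2j}\boxed{i}$ for the box attached to $\la$, and is zero for the four modules $\la\in V$ not listed above. Summing over $\la\in V$ gives the asserted decomposition of $\Is=\Is(m+3,m,0)$, with every summand $r^{2j}\boxed{i}$ $G$-irreducible and isomorphic to $(m+3,m,0)$.

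Then I would justify Diagram~1. Every operator appearing there ($z^\la_\mu$, $D^\mu_\la$, multiplication by $r^2$, $\Delta$) is $G$-invariant, so by Schur's lemma it sends each box either to $0$ or isomorphically onto an irreducible submodule $\simeq(m+3,m,0)$ of its target; a homogeneity count --- the operators $z^\la_\mu$ and $r^2$ raise degree by $1$ and $2$, while $D^\mu_\la$ and $\Delta$ lower it --- together with the multiplicity table of the first step leaves at most one admissible target box for each arrow, namely the one drawn, so the arrows are correctly placed. The dotted arrows are isomorphisms by Remark~\ref{r_duality} and Proposition~\ref{l81}: since $\boxed{i}\subset\cH(\la_i)$ is annihilated by $\Delta$, the commutation relations of Proposition~\ref{l81} show that $\Delta\,r^2$ acts as a nonzero scalar on every $r^{2j}\boxed{i}$, so $r^2$ and $\Delta$ restrict to mutually inverse isomorphisms (up to scalar) between $r^{2j}\boxed{i}$ and $r^{2(j+1)}\boxed{i}$. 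For a full arrow, which is (up to scalar) a Stein--Weiss multiplication operator $z^\la_\mu$ whose inverse should be the adjoint differential operator $D^\mu_\la$, Proposition~\ref{p_duality}(ii) finishes the job once I know $z^\la_\mu$ is nonzero on its source box: it is then an isomorphism onto the indicated target box, and since $D^\mu_\la$ maps that target box back into the source box (again by the homogeneity count), Proposition~\ref{p_duality}(ii) also identifies $D^\mu_\la$ on that box as a nonzero multiple of the inverse isomorphism. Finally, commutativity of the triangles and squares up to a nonzero factor is automatic: $\operatorname{Hom}_G$ between any two of these irreducibles is one-dimensional, so two directed paths with the same endpoints agree up to a scalar, which is nonzero because every edge has just been shown to be an isomorphism.

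The hard part will be the non-vanishing statement used in the last step --- that the Stein--Weiss multiplication operators $z^\la_\mu$ occurring in Diagram~1 do not annihilate the boxes. I expect to settle this by writing down an explicit highest weight vector of each box (a concrete polynomial in the variables $z^{ab}$ with constant spinor-tensor coefficients) and evaluating the relevant contraction-and-multiplication on it, checking directly that the outcome is nonzero. Everything else is Schur's lemma together with the bookkeeping of degrees and multiplicities.
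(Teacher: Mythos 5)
Your proposal is correct and follows essentially the same route as the paper: identify the four boxes and the degrees in which $(m+3,m,0)$ occurs via the harmonic decomposition of Proposition~\ref{l81} and the tensor product decompositions of Theorem~\ref{6.1}, then establish all arrows of Diagram~1 as isomorphisms by combining Schur's lemma, the $r^2$--$\Delta$ duality, and Proposition~\ref{p_duality}, with the only substantive check being the non-vanishing of the operators $z^{300}_{311}$, $z^{311}_{322}$, $z^{322}_{333}$ on the boxes. The computation you defer is exactly the one the paper carries out, evaluating these operators on the highest weight vector $(f_1)_{abc}=(z^{12})^m\delta^1_a\delta^1_b\delta^1_c$ and its successive images.
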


	\setlength{\unitlength}{0.20mm}
\begin{picture}(200,340)(-220,-70)
		\put(-100,195){\tiny $m$}
		\put(-70,195){\tiny$m+1$}
		\put(-20,195){\tiny $m+2$}
		\put(30,195){\tiny $m+3$}
		\put(80,195){\tiny $m+4$}
		\put(130,195){\tiny $m+5$}
		\put(180,195){\tiny $m+6$}
		\put(230,195){\tiny $m+7$}
		\put(280,195){\tiny $m+8$}
		%						   		   \put(330,195){\tiny $m+9$}
		%							   		         \put(379,175){\tiny $m+10$}
		\put(-65,125){\small ${z^{300}_{311}}$}
		\put(-15,75){\small ${z^{311}_{322}}$}
		\put(35,25){\small ${z^{322}_{333}}$}
		%	 \put(-100,250){\vector(1,0){100}}
		%	  \put(-100,200){\vector(1,0){100}}
		%	\put(-85,150){\vector(1,0){85}}
		%	\put(0,150){\vector(1,0){100}}
		%	\put(100,150){\vector(1,0){100}}
		%	\put(200,150){\vector(1,0){100}}
		%         \put(300,150){\vector(1,0){100}}
		%\put(-55,162){\tiny ${|x|}^2$}
		%\put(45,162){\tiny ${|x|}^2$}
		%\put(145,162){\tiny ${|x|}^2$}
		%\put(245,162){\tiny ${|x|}^2$}
		%\put(-40,100){\vector(1,0){93}}
		%\put(60,100){\vector(1,0){93}}
		%	\put(150,100){\vector(1,0){100}}
		%	\put(10,50){\vector(1,0){93}}
		%	\put(100,50){\vector(1,0){100}}
		%	\put(200,50){\vector(1,0){100}}
		%%	\put(0,50){\vector(1,0){100}}
		%	\put(60,0){\vector(1,0){93}}
		%	\put(150,0){\vector(1,0){100}}
		%	\put(-105,145){$\bigvarstar$}
		\put(0,145){$\bullet$}
		\put(100,145){$\bullet$}
		\put(200,145){$\bullet$}
		\put(300,145){$\bullet$}
		\put(50,95){$\bullet$}
		\put(150,95){$\bullet$}
		\put(250,95){$\bullet$}
		\put(100,45){$\bullet$}
		\put(200,45){$\bullet$}
		\put(300,45){$\bullet$}
		\put(250,-5){$\bullet$}
		\put(250,-5){$\bullet$}
		%     	  	  	  	 \put(400,145){$\bullet$}
		\put(-60,95){$\boxed{2}$}
		\put(-10,45){$\boxed{3}$}
		\put(40,-5){$\boxed{4}$}
		\put(150,-5){$\bullet$} 	
		%\put(-105,145){$\bigvarstar$}
		%	\put(-135,162){\tiny $\mathcal{M}_m(ABC)$}
		\put(-105,145){$\boxed{1}$}
		\put(-89,140){\vector(1,-1){32}}
	%	\put(-1,145){\vector(-1,-1){42}}
	%	\put(99,145){\vector(-1,-1){42}}
	%	\put(199,145){\vector(-1,-1){43}}
	%	\put(299,145){\vector(-1,-1){43}}
	%	\put(299,45){\vector(-1,-1){43}}
	%	\put(199,45){\vector(-1,-1){43}}
	%	\put(99,45){\vector(-1,-1){43}}
		\put(7,146){\vector(1,-1){47}}
	%	\put(107,46){\vector(-1,1){51}}
	%	\put(7,46){\vector(-1,1){51}}
		\put(107,46){\vector(1,-1){47}}
		\put(207,46){\vector(1,-1){47}}
	%	\put(107,146){\vector(1,-1){7}}
		\put(207,146){\vector(1,-1){47}}
	%	\put(-47,98){\vector(-1,1){47}}	
	%	\put(55,98){\vector(-1,1){47}}	
	%	\put(155,98){\vector(-1,1){47}}	
	%	\put(255,98){\vector(-1,1){47}}	
	%	\put(43,8){\vector(-1,1){37}}	
	%	\put(143,10){\vector(-1,1){37}}	
	%	\put(243,10){\vector(-1,1){37}}	
	%	\put(204,49){\vector(-1,1){46}}	
	%	\put(304,49){\vector(-1,1){46}}	
		
		\put(107,146){\vector(1,-1){47}}	
		\put(57,96){\vector(1,-1){47}}	
		\put(157,96){\vector(1,-1){47}}		
		\put(257,96){\vector(1,-1){47}}			
		%	\put(-41,92){\vector(1,-1){33}} 	 	
		\put(57,3){\vector(1,1){45}} 	
		\put(7,53){\vector(1,1){45}} 	
		\put(107,53){\vector(1,1){45}} 	
		\put(207,53){\vector(1,1){45}} 	
		\put(157,3){\vector(1,1){45}}
		\put(257,3){\vector(1,1){45}} 		
		
		\put(-43,103){\vector(1,1){45}}
		\put(57,103){\vector(1,1){45}} 	 		
		\put(157,103){\vector(1,1){45}} 	 	
		\put(257,103){\vector(1,1){45}} 	
	%	\put(257,103){\vector(-1,-1){51}} 	
	%	\put(157,103){\vector(-1,-1){51}} 	
	%	\put(57,103){\vector(-1,-1){51}} 	
		\put(50,103){\vector(1,-1){51}} 		
		\put(-50,103){\vector(1,-1){45}} 	 	
		\put(9,42){\vector(1,-1){33}}
		%	\put(520,245){A}
		%			\put(520,195){A'}
		\put(390,145){\small (300)$\;\;\;\;f_{(abc)}$}
		\put(390,95){\small (311)$\;\;\;\;f^{a}_{(bc)}$}
		\put(390,45){\small (322)$\;\;\;\;f^{(ab)}_{c}$}
		\put(390,-5){\small (333)$\;\;\;\;f^{(abc)}$}
		\put(350,145){\ldots}
		\put(350,95){\ldots}
		\put(350,45){\ldots}
		\put(350,-5){\ldots}
		
			\multiput(-85,149)(7,0){13}{$.$}
			\multiput(12,149)(7,0){13}{$.$}
			\multiput(112,149)(7,0){13}{$.$}
			\multiput(212,149)(7,0){13}{$.$}
		
			\multiput(-35,99)(7,0){13}{$.$}
		    \multiput(65,99)(7,0){13}{$.$}
		    \multiput(15,49)(7,0){13}{$.$}
		      \multiput(115,49)(7,0){13}{$.$}
		      	\multiput(212,49)(7,0){13}{$.$}
		      %\multiput(215,9)(7,0){13}{$.$}
			  \multiput(165,99)(7,0){13}{$.$}	
			%	\multiput(65,-01)(7,0){13}{$.$}
				\multiput(165,-01)(7,0){13}{$.$}
					\multiput(65,-01)(7,0){13}{$.$}
					
		\put(43,100){\vector(1,0){7}}			
			\put(143,100){\vector(1,0){7}} 		
			      \put(243,100){\vector(1,0){7}} 	
		\put(93,50){\vector(1,0){7}} 		
			\put(193,50){\vector(1,0){7}}
				\put(293,50){\vector(1,0){7}} 	
		\put(143,0){\vector(1,0){7}} 		
			\put(243,0){\vector(1,0){7}}
		\put(43,100){\vector(1,0){7}}
			\put(143,100){\vector(1,0){7}} 		
				\put(243,100){\vector(1,0){7}} 	
		\put(-07,150){\vector(1,0){7}}
		     \put(93,150){\vector(1,0){7}} 		
		           \put(193,150){\vector(1,0){7}} 	 	
		           	    \put(293,150){\vector(1,0){7}}	
	\end{picture}
	
	\vskip-5mm
	\centerline{Diagram 1. Isotypic component $\Is(m+3,m,0).$}
	
 	 \begin{proof}
(i) First we show that $P_h(300)\cap\Is\not=0$ if and only if $h=m+2j$ for some $j\in\bN_0$. 
To do this we use the harmonic decomposition
$$\Pol_h=\bigoplus_{j=0}^{\lfloor h/2 \rfloor} r^{2j}\cH_{h-2j}$$ 
where 
$\cH_h=\{P\in\Pol_h|\ \Delta P=0\}$ is
the space of harmonic polynomials $P\in\Pol_h$ and $\cH_h\simeq (hh0)$, see Proposition \ref{l81}.
Hence, under the action of $G$, we have an irreducible decomposition 
$$\Pol_h\simeq\bigoplus_{j=0}^{\lfloor h/2 \rfloor} (h-2j,h-2j,0).$$ 
Since $P_h(300)=\Pol_h\otimes (300)$, by Theorem \ref{6.1}, it is easy to see that,
for $h=m+2j$ with $j\in\bN_0$, we have 
$$P_h(300)\cap\Is=r^{2j}\cH_m\boxtimes (300)=r^{2j}\boxed{1},$$	
and that $P_h(300)\cap\Is=0$ otherwise. This is depicted in the first row of Diagram 1. 
Of course, in Diagram~1  columns correspond to the degree $h$ of homogeneity. 	
	
The same proof works  for polynomials with values in $(311)$, $(322)$ and $(333)$, see row 2, 3 and 4 in Diagram 1, respectively.
 Theorem \ref{6.1} also shows that  the modules of type $(m+3,m,0)$ do not appear
 in the spaces of polynomials with values in $(210)$, $(221)$, $(111)$ or $(100).$

 \noindent\smallskip
 (ii) Now we show that the operators $\boxed{1}\rightarrow \boxed{2}$, $\boxed{2}\rightarrow r^2\boxed{1}$, $\boxed{2}\rightarrow \boxed{3}$,
$\boxed{3}\rightarrow r^2\boxed{2}$, $\boxed{3}\rightarrow \boxed{4}$ and $\boxed{4}\rightarrow r^2\boxed{3}$ are non-trivial.
Indeed, the highest weight vector  of  $\boxed{1}$  has the form   $(f_1) _{abc}=(z^{12})^m\delta^1_a\delta^1_b\delta^1_c $, whose images of $\boxed{1}\rightarrow \boxed{2}$, $\boxed{2}\rightarrow \boxed{3}$ and $\boxed{3}\rightarrow \boxed{4}$ are
 \begin{equation*}\begin{split}
 		(  z^{300}_{311}f_1)_{bc}^a &=  z^{d a}(f_1)_{dbc}=(z^{12})^mz^{1a}\delta^1_b\delta^1_c= :  ( f_2)_{bc}^a , \\  
		(z^{311}_{322 }f_2)^{a b}_c&   = z^{d (a}( f_2)_{cd}^{b)}
 		=(z^{12})^mz^{1a}z^{1b}\delta^1_c=:   ( f_3)^{ab}_c ,\\
 		(  z^{322}_{333}f_3)^{a b c} & = z^{d (a}( f_3)^{bc)}_d = (z^{12})^mz^{1a}z^{1b}z^{1c}=:  ( f_4)^{a b c},
 \end{split}\end{equation*}
 respectively. They are obviously nontrivial, $f_2$ and $f_3$ are traceless, and so these three arrows are isomorphisms. 
It is easy to see that $z_{300}^{311}f_2\doteq r^2f_1$, $z_{311}^{322 }f_3\doteq r^2f_2$ and $z_{322}^{333}f_4\doteq r^2f_3$ where $\doteq$ means the equality up to a~non-zero multiple. In particular, we have 
$r^2\doteq z^{311}_{300}z_{311}^{300}$ on $\boxed{1}$, 
$r^2\doteq z^{322}_{311}z^{311}_{322}$ on $\boxed{2}$, 
$r^2\doteq z^{333}_{322}z_{333}^{322}$ on $\boxed{3}$ and 
$r^2\doteq z^{322}_{333}z_{322}^{333}$ on $\boxed{4}$.

Since the operator $r^2$ is injective and commutes with the operators $z^{\la}_{\mu}$ all the arrows in Diagram 1 are non-trivial. Using Proposition \ref{p_duality}
and Remark \ref{r_duality}, we complete the proof. In particular, all the arrows are isomorphisms and inverses to full arrows are given as non-zero multiples of the operators $D^{\mu}_{\la}$.
  \end{proof}

%%%%%%%%%%%%%%%%%%%%%%    old

  \subsection{Isotypic component $\Is(m+3,m+1,1),\ m\geq 0$}
This isotypic component has much more complicated structure due to the fact
that for certain values, there are higher multiplicities  in most homogeneities.

\begin{thm}[Fischer decomposition for $Is(m+3,m+1,1)$]\label{fischer11}
	Let $m\geq 0$ and $\Is=\Is(m+3,m+1,1)$. Then the invariant subspaces
	\begin{eqnarray*}
	&\boxed{1}=\cH_m(311)\cap \Is, \; \;
	\boxed{2}=\cH(300)\cap\Is,   \; \;
	\boxed{3}=\cH_{m+1}(322)\cap\Is,   \; \;
	\boxed{4}=\cH(210)\cap\Is,   \; \;\\
&  	\boxed{5}=\cH_{m+2}(311)\cap\Is ,  \; \;  \boxed{7}=\cH(333)\cap\Is, \; \; 
	\boxed{8}=\cH(221)\cap\Is ,  \; \;\\
&	\boxed{9}=\cH_{m+3}(322)\cap\Is , \; \;	
	\boxed{6}=r^2\boxed{1},  \; \;
    \boxed{10}=r^2\boxed{3}
	\end{eqnarray*}
	are irreducible under the action of $G.$	
	The Fischer decomposition of the isotypic component $\Is(m+3,m+1,1)$  is given by
\begin{eqnarray*}
&\left[\oplus_j\,r^{2j}\boxed{2}\right]\,\oplus
\left[\oplus_j\,r^{2j}\boxed{1}\right]\,\oplus
\left[\oplus_j\,r^{2j}\boxed{5}\right]\,\oplus
\left[\oplus_j\,r^{2j}\boxed{3}\right]\,\oplus
\left[\oplus_j\,r^{2j}\boxed{9}\right]\,\oplus
\left[\oplus_j\,r^{2j}\boxed{7}\right]\\
& {\oplus}\left[\oplus_j\,r^{2j}\boxed{4}\right]\,\oplus
\left[\oplus_j\,r^{2j}\boxed{8}\right].
\end{eqnarray*}

	All summands are irreducible, their structure is described in Diagram 2. 
	%All sums in  brackets  are irreducible lowest weight $\mathfrak{s}\mathfrak{l}(2,\bC)$-modules.

	Full arrows in the diagram describe the action of corresponding operators $z^{\la}_{\mu}$ 
	restricted to the given  module and their inverses correspond to adjoint differential operators $D_{\la}^{\mu}$.
	Dotted arrows describe operators  from $\cA^G$     given by multiplication by $r^2,$
	inverses of  dotted arrows are given by the action of $\Delta.$

 All arrows and their inverses between two multiplicity one components are isomorphisms.
 Every sum of two arrows from two different multiplicity one components to a multiplicity two component is also an isomorphism. The same is true for a sum of their inverses.
 
 The operator $D^-_+$ maps $\boxed{5}$ to $\boxed{2}$ and $\boxed{9}$ to $\boxed{5}$
 and both these maps are nontrivial.

\end{thm}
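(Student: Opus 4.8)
The plan is to follow exactly the template of the proof of Theorem \ref{fisher00}, but now carrying more bookkeeping because the isotypic component $\Is(m+3,m+1,1)$ appears with multiplicity two in the homogeneities between the second row and the fourth row of Diagram 2. First I would run the homogeneity/multiplicity analysis: for each of the eight types $(311),(300),(322),(210),(311)\text{ again},(333),(221),(322)\text{ again}$ (i.e.\ the submodules $\la$ of $V$ whose tensor product with $(hh0)$ contains $(m+3,m+1,1)$ for suitable $h$, by Theorem \ref{6.1}(1)--(6)), determine the lowest homogeneity $h$ at which $(m+3,m+1,1)$ occurs inside $\Pol_h\otimes\la$ and with what multiplicity. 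Combined with the harmonic decomposition $\Pol_h=\bigoplus_j r^{2j}\cH_{h-2j}$ of Proposition \ref{l81}, this identifies $\boxed{1},\ldots,\boxed{9}$ as the claimed intersections $\cH_h(\la)\cap\Is$, pins down $\boxed{6}=r^2\boxed{1}$ and $\boxed{10}=r^2\boxed{3}$ as the only $r^2$-shifts contributing a second copy, and produces the global direct-sum formula for $\Is$.

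Next I would establish that each listed subspace is irreducible under $G$. Every $\boxed{k}$ is by construction a subspace of $\cH_h(\la)$, which sits inside $P_h(\la)=\Pol_h\otimes\la$; using $\cH_h\simeq(hh0)$ and Theorem \ref{6.1}, the $(m+3,m+1,1)$-isotypic part of $\cH_h(\la)$ is either zero or a single irreducible copy, except where multiplicity two forces a two-dimensional space of highest weight vectors. So irreducibility of the multiplicity-one boxes is immediate, and for the multiplicity-two nodes (boxes $\boxed{5}$, $\boxed{9}$ and their $r^2$-shifts $\boxed{6}$, $\boxed{10}$, as read off the diagram) I would exhibit explicit highest weight vectors and check that the relevant operators act on them as claimed. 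Then, exactly as in the proof of Theorem \ref{fisher00}, I would produce an explicit highest weight vector of the lowest box (the one labelled $\boxed{2}=\cH(300)\cap\Is$, of homogeneity $m$, coming from $(300)\otimes(m,m,0)\supset(m+3,m,0)$, wait --- here rather the lowest box is $\boxed{1}=\cH_m(311)\cap\Is$), push it along the multiplication operators $z^{\la}_{\mu}$ of Lemmas \ref{5.1}--\ref{5.3}, verify tracelessness and nonvanishing of each image, and conclude each full arrow between multiplicity-one components is a nonzero invariant map, hence an isomorphism by Schur. For the arrows landing in a multiplicity-two node I would check that the \emph{sum} of the two incoming maps has trivial kernel (equivalently, that the two images are linearly independent highest weight vectors), which gives the asserted ``sum of two arrows is an isomorphism'' statement; dualizing via Proposition \ref{p_duality} and Remark \ref{r_duality} handles the inverse arrows and the $r^2/\Delta$ dotted arrows simultaneously.

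For the final sentence of the statement --- that $D^-_+$ maps $\boxed{5}$ to $\boxed{2}$ and $\boxed{9}$ to $\boxed{5}$ nontrivially --- I would argue as follows. By definition of $D^-_+$ in Section 3.2 (the operator adding a lower index and removing an upper one), its component acting on a $(311)$-valued field lands in $P(300)$ (via $D^{311}_{300}$), and its component on a $(322)$-valued field lands in $P(311)$ (via $D^{322}_{311}$). Since $\boxed{5}\subset\cH_{m+2}(311)$ and $\boxed{2}\subset\cH(300)$ are both inside $\Is$, and the only irreducible copies of $(m+3,m+1,1)$ in $P(311)$ at homogeneity $m+2$ and in $P(300)$ at homogeneity $m+1$ are exactly these boxes (from the multiplicity analysis), $G$-equivariance forces $D^-_+(\boxed{5})\subseteq\boxed{2}$ and $D^-_+(\boxed{9})\subseteq\boxed{5}$; nontriviality I would check on the explicit highest weight vectors constructed above, using the tensorial formulas $D^{311}_{300}f=\nabla_{a(d}f^a_{bc)}$ and $D^{322}_{311}f=\nabla_{b(d}f^{bc}_{a)}$ of Lemmas \ref{5.2}--\ref{5.3} --- a short computation of the same flavor as the chain $z^{300}_{311}\to z^{311}_{322}\to z^{322}_{333}$ in the proof of Theorem \ref{fisher00}, where one simply contracts the displayed monomials and observes the result is a nonzero monomial.

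The main obstacle is the multiplicity-two bookkeeping: keeping track of \emph{which} linear combinations of the two copies of $(m+3,m+1,1)$ in a given homogeneity are the images of the various incoming arrows, so that commutativity of all triangles and squares in Diagram 2 (up to nonzero scalars) holds simultaneously and the ``sum of two arrows is an isomorphism'' claims are genuinely verified rather than merely plausible. I expect this to require a careful normalization of highest weight vectors at each multiplicity-two node and a consistent choice of representatives, after which the Schur-lemma arguments and the adjointness from Proposition \ref{p_duality} make everything routine; the $D^-_+$ assertion itself is then a quick corollary of the equivariance-plus-explicit-vector computation.
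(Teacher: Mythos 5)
Your proposal follows essentially the same route as the paper's proof: the multiplicity bookkeeping via Theorem \ref{6.1} and the harmonic decomposition, explicit highest weight vectors pushed along the $z^{\la}_{\mu}$ with tracelessness checks, pairwise linear independence of the images at the multiplicity-two nodes, and duality via Proposition \ref{p_duality} and Remark \ref{r_duality} for the inverse arrows. One small imprecision: for $D^-_+(\boxed{9})\subseteq\boxed{5}$, equivariance plus the multiplicity count does not suffice since $P_{m+2}(311)\cap\Is$ contains \emph{two} copies ($\boxed{5}\oplus\boxed{6}$); the paper closes this by observing that $D^-_+$ commutes with $\Delta$ and therefore sends the harmonic module $\boxed{9}$ into the harmonic summand $\boxed{5}$, a one-line fix that your explicit highest-weight-vector check would in any case confirm.
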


%\vskip-8mm

   \vskip-10mm
\setlength{\unitlength}{0.20mm}
\begin{picture}(200,340)(-220,-170)
	\put(-150,95){\tiny $m$}
	\put(-120,95){\tiny$m+1$}
	\put(-70,95){\tiny $m+2$}
	\put(-20,95){\tiny $m+3$}
	\put(30,95){\tiny $m+4$}
	\put(80,95){\tiny $m+5$}
	\put(130,95){\tiny $m+6$}
	\put(180,95){\tiny $m+7$}
	\put(230,95){\tiny $m+8$}
	\put(280,95){\tiny $m+9$}

	\put(-50,-200){$\boxed{8}$}
	\put(50,-205){$\bullet$}
	\put(150,-205){$\bullet$}
	\put(250,-205){$\bullet$}    	
	\put(-25,-180){\vector(1,1){27}}
	\put(58,-198){\vector(1,1){45}}
	\put(158,-198){\vector(1,1){45}}
	\put(258,-198){\vector(1,1){45}}
	
	\put(-33,-180){\vector(1,3){37}}
	\put(58,-193){\vector(1,3){42}}
	\put(158,-193){\vector(1,3){42}}
	\put(258,-193){\vector(1,3){42}}

	\multiput(-29,-201)(7,0){13}{$.$}
	\put(48,-201){\vector(1,0){5}}
	\multiput(44,-201)(8,0){13}{$.$}
	\put(148,-202){\vector(1,0){5}}
	\multiput(144,-201)(8,0){13}{$.$}
	\put(248,-202){\vector(1,0){5}}

	\put(-100,-155){$\boxed{4}$}           	
	\put(0,-155){$\bullet$}
	\put(100,-155){$\bullet$}
	\put(200,-155){$\bullet$}
	\put(300,-155){$\bullet$}
%	\put(5,-150){\vector(1,1){50}}
%	\put(105,-150){\vector(1,1){50}}
%	\put(205,-150){\vector(1,1){50}}
	\put(8,-153){\vector(1,-1){46}}
	\put(108,-153){\vector(1,-1){46}}
	\put(208,-153){\vector(1,-1){46}}
	
	\put(-80,-160){\vector(1,-1){30}}
	
	\put(-82,-138){\vector(1,3){42}}
	\put(5,-145){\vector(1,3){44}}
	\put(105,-145){\vector(1,3){44}}
	\put(205,-145){\vector(1,3){44}}
%	\put(-85,-139){\vector(1,1){36}}
	
	\multiput(-79,-150)(7,0){13}{$.$}
	\put(-4,-150){\vector(1,0){5}}
	\multiput(4,-150)(8,0){13}{$.$}
	\put(96,-150){\vector(1,0){5}}
	\multiput(104,-151)(8,0){13}{$.$}
	\put(196,-150){\vector(1,0){5}}
	\multiput(204,-151)(8,0){13}{$.$}
	\put(296,-150){\vector(1,0){5}}

	\put(-50,-105){$\boxed{7}$}
	\put(50,-105){$\bullet$}
	\put(150,-105){$\bullet$}
	\put(250,-105){$\bullet$}
%	\put(-33,-111){\vector(1,-1){36}}
%	\put(55,-101){\vector(1,-1){46}}
%	\put(155,-101){\vector(1,-1){46}}
%	\put(255,-101){\vector(1,-1){46}}
	\put(-32,-88){\vector(1,1){32}}
	\put(57,- 98){\vector(1,1){39}}
	\put(157,- 98){\vector(1,1){39}}
	\put(257,- 98){\vector(1,1){39}}
	
	\multiput(-29,-102)(7,0){13}{$.$}
	\put(48,-101){\vector(1,0){5}}
	\multiput(44,-101)(8,0){13}{$.$}
	\put(148,-101){\vector(1,0){5}}
	\multiput(144,-101)(8,0){13}{$.$}
	\put(248,-101){\vector(1,0){5}}

	\put(-100,-55){$\boxed{3}$}
	\put(-5,-50){ {\tiny $\boxed{\begin{matrix}9\\10\end{matrix}}$}}
	\put(92,-50){ {\tiny $\boxed{\begin{matrix}\bullet\\\bullet\end{matrix}}$}}
	\put(192,-50){ {\tiny $\boxed{\begin{matrix}\bullet\\\bullet\end{matrix}}$}}
	\put(292,-50){ {\tiny $\boxed{\begin{matrix}\bullet\\\bullet\end{matrix}}$}}	
	
%	\put(-80,-49){\vector(1,0){80}}
	
	\multiput(-79,-49)(7,0){13}{$.$}
    	\put(-5,-49){\vector(1,0){5}}
%	\multiput(-9,-49)(5.5,  0){13}{$.$}
	    % \put(-9,-49){\vector(1,0){5}}
	\multiput(25,-49)(7,0){13}{$.$}
	      \put(91,-49){\vector(1,0){5}}
%	\multiput(22,-44)(5.5,0){13}{$.$}
%\put(91,-44){\vector(1,0){5}}
	\multiput(118,-49)(7,0){13}{$.$}
	    \put(192,-49){\vector(1,0){5}}
	%\multiput(115,-44)(5.5,0){13}{$.$}
%	\put(192,-44){\vector(1,0){5}}
	\multiput(218,-49)(7,0){13}{$.$}
	     \put(292,-49){\vector(1,0){5}}
%	\multiput(215,-44)(5.5,0){13}{$.$}
%	\put(292,-44){\vector(1,0){5}}

%	\multiput(65,9)(7,0){13}{$.$}
	\multiput(68,8)(7,0){13}{$.$}
	\put(137,6){\vector(1,0){5}}
%	\multiput(165,9)(7,0){13}{$.$}
%	\put(237,9){\vector(1,0){5}}
%	\multiput(168,56)(7,0){13}{$.$}
	\put(237,6){\vector(1,0){5}}

	\put(15,-61){\vector(1,-1){38}}
	\put(115,-61){\vector(1,-1){38}}
	\put(215,-61){\vector(1,-1){38}}
	\put(-78,-60){\vector(1,-1){30}}
	
	\put(-78,-38){\vector(1,1){33}}      	
%	\put(13,-27){\vector(1,1){33}}
	\put(18,-27){\vector(1,1){33}}
%	\put(113,-27){\vector(1,1){33}}
	\put(118,-27){\vector(1,1){33}}
%	\put(213,-27){\vector(1,1){33}}
	\put(218,-27){\vector(1,1){33}}
	
	\put(-90,-62){\vector(1,-3){41}}
	\put(10,-67){\vector(1,-3){41}}
	\put(110,-67){\vector(1,-3){41}}
	\put(210,-67){\vector(1,-3){41}}

	\put(-150,-5){$\boxed{1}$}
	
	\put(-50,5){ {\scriptsize$\boxed{\begin{matrix}5\\6\end{matrix}}$}}
	\put(42,4){ {\tiny $\boxed{\begin{matrix}\bullet\\\bullet\end{matrix}}$}}
	\put(142,4){ {\tiny $\boxed{\begin{matrix}\bullet\\\bullet\end{matrix}}$}}
	\put(242,4){ {\tiny $\boxed{\begin{matrix}\bullet\\\bullet\end{matrix}}$}}

	\multiput(-125,3)(7,0){13}{$.$}
	\put(-47,3){\vector(1,0){5}}
	\multiput(-28,6)(7,0){13}{$.$}
		\put(-28,5){\vector(1,0){5}}
		\put(40,5){\vector(1,0){5}}
	\put(40,5){\vector(1,0){5}}
%	\multiput(65,9)(7,0){13}{$.$}
%	\multiput(68,5)(7,0){13}{$.$}
	\put(137,5){\vector(1,0){5}}
	\multiput(165,5)(7,0){13}{$.$}
	\put(237,9){\vector(1,0){5}}
%	\multiput(168,5)(7,0){13}{$.$}
%	\put(237,5){\vector(1,0){5}}

	\put(-130,12){\vector(1,1){30}}
	\put(-25,20){\vector(1,1){27}}
	\put(67,13){\vector(1,1){35}}
	\put(167,13){\vector(1,1){35}}
	\put(267,13){\vector(1,1){35}}
	
	\put(-133,-7){\vector(1,-1){34}}
	\put(-26,5){\vector(1,-1){36}}
%	\put(-26,-1){\vector(1,-1){30}}
	\put(68,5){\vector(1,-1){33}}
%	\put(68,-1){\vector(1,-1){33}}
	\put(168,5){\vector(1,-1){33}}
%	\put(168,-1){\vector(1,-1){33}}
	\put(268,5){\vector(1,-1){33}}
%	\put(268,-1){\vector(1,-1){33}}
	\put(-136,-13){\vector(1,-3){41}}
	\put(-36,-16){\vector(1,-3){41}}
	\put(61,-16){\vector(1,-3){41}}
	\put(161,-13){\vector(1,-3){43}}
	\put(259,-16){\vector(1,-3){43}}

	\put(-100,45){$\boxed{2}$}
	\put(-78,38){\vector(1,-1){32}}
	\put(7,49){\vector(1,-1){39}}
	\put(109,47){\vector(1,-1){37}}
	\put(209,47){\vector(1,-1){37}}
	
	\put(0,45){$\bullet$}
	\put(100,45){$\bullet$}
	\put(200,45){$\bullet$}
	\put(300,45){$\bullet$}
	
	\multiput(-77,50)(7,0){13}{$.$}
	\put(-5,50){\vector(1,0){5}}
	\multiput(0,50)(8,0){13}{$.$}
	\put(95,50){\vector(1,0){5}}
	\multiput(100,50)(8,0){13}{$.$}
	\put(195,50){\vector(1,0){5}}
	\multiput(200,50)(8,0){13}{$.$}
	\put(295,50){\vector(1,0){5}}
	
	\put(-45,57){\footnotesize $r^2$}
	\put(55,57){\footnotesize $r^2$}
	\put(155,57){\footnotesize $r^2$}
	\put(255,57){\footnotesize $r^2$}

	\put(390,-205){\small (221)$\;\;\;\;f_{[ab]}^{c}$}
	\put(390,-155){\small (210)$\;\;\;\;f^{[ab]}_{c}$}
	\put(390,45){\small (300)$\;\;\;\;f_{(abc)}$}
	\put(390,-5){\small (311)$\;\;\;\;f^a_{(bc)}$}
	\put(390,-55){\small (322)$\;\;\;\;f^{(ab)}_{c}$}
	\put(390,-105){\small (333)$\;\;\;\;f^{(abc)}$}
	\put(350,-155){\ldots}
	\put(350,-205){\ldots}
	\put(350,-55){\ldots}
	\put(350,-5){\ldots}
	\put(350,45){\ldots}
	\put(350,-105){\ldots}
	
\end{picture}
\vskip15mm

\nopagebreak
\centerline{Diagram 2. Isotypic component $\Is(m+3,m+1,1),\ m\geq 0$}

\begin{proof}
As above, it follows from Theorem  \ref{6.1} that in the given isotypic component and for
the chosen values of fields and fixed homogeneity, the corresponding $G$-module
has multiplicity at most two and the corresponding multiplicity is described by Diagram 2.
 
 The Laplace operator $\Delta$ preserves the isotypic component, hence the multiplicity one components {
 $$\boxed{1},\quad \boxed{2},\quad\boxed{3},\quad\boxed{4},\quad\boxed{7},\quad\boxed{8}$$ belong} to the space $\cH\otimes V$ of harmonic polynomials $P:\bC^6\to V$
 by homogeneity reasons. Irreducible modules $\boxed{6},$ resp $\boxed{10}$, are defined
 as images of $\boxed{1},$ resp. $ \boxed{3}$, under the multiplication by $r^2.$
 
 Since the Laplace operator is surjective the multiplicity two component $\boxed{\begin{matrix}5\\6\end{matrix}}$ is mapped by $\Delta$ onto $\boxed{1},$ hence the kernel of this map (denoted by $\boxed{5}$) is nontrivial by dimensional reasons and does not contain $\boxed{6}.$ So
 $\boxed{\begin{matrix}5\\6\end{matrix}}\simeq\boxed{5}\oplus\boxed{6}.$
 Similarly,  $\boxed{\begin{matrix}9\\10\end{matrix}}\simeq\boxed{9}\oplus\boxed{10}.$

 The highest weight vector  of  $\boxed{1}$  has the form   $(f_1)_{ab}^c=(z^{12})^{m }\delta^1_a\delta_b^1\delta^c_4 $ and its images  in $\boxed{2},\boxed{3}$ and $\boxed{4}$ are
 \begin{equation*}\begin{split}
 		(  z_{300}^{311}f_1)_{a bc}&=z_{d(a}\delta_b^1\delta^1_{c)}\delta^{d } _4(z^{12})^{m }= z_{4(a}\delta_b^1\delta^1_{c)} (z^{12})^{m }=:  ( f_2)_{a bc}, \\ (z_{322}^{311 }f_1)^{a b}_c&= z^{1(a} \delta^{b)} _4  \delta^1_{c }(z^{12})^{m }=:  ( f_3)^{a b}_c,\\
 		(  z_{210}^{311}f_1)^{a b}_c&= z^{1[a} \delta^{b]} _4  \delta^1_{c }(z^{12})^{m }=:  ( f_4)^{a b}_c,
 \end{split}\end{equation*} {respectively}.   They are obviously nontrivial and $f_3$ and $f_4$  are  traceless. Hence the corresponding  three operators are isomorphisms.  
 
 Images of $ f_3$ under operators $\boxed{3}\rightarrow \boxed{7}$, $\boxed{3}\rightarrow \boxed{8}$ are
 \begin{equation*}\begin{split}
 		(z_{333}^{322 }f_3)^{a b c} &=   z^{1(a}z^{|1| b}  \delta^{c)} _4(z^{12})^{m } =:  ( f_7 )^{a b c},\\
 		(  z_{221}^{322}f_3)_{a b}^c&= z_{d[a} \delta^1_{b] }  z^{1(c} \delta^{d)} _4(z^{12})^{m }=:  ( f_8)_{a b}^c,
 \end{split}\end{equation*} respectively, which are both nontrivial. The second one is traceless. So these two operators are isomorphisms.

Using commutativity with $r^2$ we prove that all operators between multiplicity one components
are isomorphisms. By Proposition \ref{p_duality}
and Remark \ref{r_duality}, this is true also for their inverses.
 
 The highest weight vector  of $\boxed{6}$ is $f_{6}:=r^2f_1$. 
 For operator $\boxed{2}\to\boxed{\begin{matrix}5\\6\end{matrix}},$ we  get
 \begin{equation}\label{eq:(311)}
 	3 (z_{311}^{300}f_2)^a_{bc}=3z^{d a}z_{4(b}\delta_c^1\delta_{d)}^1(z^{12})^{m }= -  (f_6)^a_{bc}+2z^{1a}z_{4(b}\delta_{c)}^1(z^{12})^{m } ,
 \end{equation}which is easily seen  traceless by Proposition \ref{l81} (i).
  For operator $\boxed{4}\to\boxed{\begin{matrix}5\\6\end{matrix}},$
  we have 
   \begin{equation}\label{eq:diagram3-f4}\begin{split}
  		2 (  z^{210}_{311}f_4)_{ b c}^a&=2z_{d(b}\delta^1_{c) }  z^{1[d} \delta^{a]} _4(z^{12})^{m }= - (f_6)_{b c}^a-z^{1a}z_{4(b}\delta_{c)}^1(z^{12})^{m } ,
  \end{split}\end{equation}
the result being traceless.
  Similarly, it is direct to check that for operator $\boxed{3}\to\boxed{\begin{matrix}5\\6\end{matrix}},$ it holds
  \begin{equation*}\begin{split}
  		2 (  z^{322}_{311}f_3)_{ b c}^a&=2z_{d(b}\delta^1_{c)}  z^{1(d} \delta^{a)} _{ 4 }(z^{12})^{m }=  -  (f_6)_{ b c}^a+ z^{1 a}z_{4(b}\delta_{c)}^1(z^{12})^{m } 
  \end{split}\end{equation*} with traceless image.
  So we showed that  images of   {highest weight vectors  of}  $\boxed{1}, \boxed{2},\boxed{3}$ and $ \boxed{4}$
  in $\boxed{\begin{matrix}5\\6\end{matrix}}$ are pairwise linearly independent.
  It follows that the sum of operators from any two different multiplicity one components to
  $\boxed{\begin{matrix}5\\6\end{matrix}}$ is isomorphism. 
  The result can be transferred to higher homogeneities by commutation with $r^2.$ By duality, this is true for the corresponding pairs  of
  inverse differential operators.  It is also visible that the map $D^-_+$ is injective on $\boxed{5}.$

  Using  Proposition \ref{l81}  {(i)}, we get
 $ 		\Delta   (f_6)^a_{bc} =(m+3) (f_1)^a_{bc},
		$
 	moreover
 	\begin{equation*}	\Delta \left((z^{12})^{m }z^{1a}z_{4(b}\delta_{c)}^1\right)=(z^{12})^{m }\Delta \left(z^{1a}z_{4(b}\delta_{c)}^1\right)=-(f_1)^a_{bc}.
 \end{equation*}
It implies that
 \begin{equation}\label{eq:diagram3-f5}
 	(f_5)^a_{bc}:= z^{1a}z_{4(b}\delta_{c)}^1 (z^{12})^{m } +\frac 1{m+3}   (f_6)^a_{bc}
 \end{equation}
 is harmonic and also traceless, and so it is the highest weight vector  of   $\boxed{5}$.

 Going to the next level, the highest weight vector  of $\boxed{10}$ is $f_{10}:=r^2f_3$, and
 \begin{equation}\label{eq:(322)}
 	3 (z_{322}^{333}f_7)^ { ab}_c=3z_{d c}z^{1(a}z^{|1| b}  \delta^{d)} _4(z^{12})^{m }= -2(f_{10})^{a b}_c -z^{1(a}z^{  b)1} z_{4c} (z^{12})^{m } ,
 \end{equation} which is obviously traceless.
It is direct to check that \begin{equation}\label{eq:(211-322)}
	6 (z_{322}^{221}f_8)^ { ab}_c=6z_{d[c} \delta^1_{e] }  z^{1(a} \delta^{d } _4 z^{b)e}(z^{12})^{m }= - {4}(f_{10})^{a b}_c+z^{1(a}z^{  b)1} z_{4c} (z^{12})^{m } .
\end{equation}
Explicitly, we have
\begin{equation*} \begin{split}
(6z^{221}_{322} f_8)^{ab}_c=&6(z^{12})^m z_{d[c}\delta_{e]}^1z^{1(a}\delta_4^dz^{b)e} \\
 {=}&\frac{1}{2}(z^{12})^m
	\left[z_{dc}\delta_{e}^1z^{1a}\delta_4^dz^{be}+
	z_{dc}\delta_{e}^1z^{1d}\delta_4^bz^{ae}+
	z_{dc}\delta_{e}^1z^{1b}\delta_4^az^{de}+
	z_{dc}\delta_{e}^1z^{1a}\delta_4^bz^{de}+\right.\\
	&z_{dc}\delta_{e}^1z^{1b}\delta_4^dz^{ae}+
	z_{dc}\delta_{e}^1z^{1d}\delta_4^az^{be}
	-z_{de}\delta_{c}^1z^{1a}\delta_4^dz^{be}
	-z_{de}\delta_{c}^1z^{1d}\delta_4^bz^{ae}\\
	&\left.-z_{de}\delta_{c}^1z^{1b}\delta_4^az^{de}
	-z_{de}\delta_{c}^1z^{1a}\delta_4^bz^{de}
	-z_{de}\delta_{c}^1z^{1b}\delta_4^dz^{ae}
	-z_{de}\delta_{c}^1z^{1d}\delta_4^az^{be}
	\right] \\
 {=}	&\frac{1}{2}(z^{12})^m
	\left[z_{4c}z^{1a}z^{b1}
	-r^2\delta^1_c\delta_4^bz^{a1}
	+r^2\delta_{c}^1z^{1b}\delta_4^a+
	r^2\delta_{c}^1z^{1a}\delta_4^b\right.\\
	&+z_{4c}z^{1b}z^{a1}
	-r^2\delta_{c}^1\delta_4^az^{b1}
	-r^2\delta_{c}^1z^{1a}\delta_4^b
	-r^2\delta_{c}^1z^{1a}\delta_4^b\\
	&\left.-4r^2\delta_{c}^1z^{1b}\delta_4^a
	-4r^2\delta_{c}^1z^{1a}\delta_4^b
	-r^2\delta_{c}^1z^{1b}\delta_4^a
	-r^2\delta_{c}^1z^{1b}\delta_4^a
	\right] \\
 {=}&(z^{12})^m\left[- {4}r^2z^{1(a}\delta^{b)}_4\delta^1_c +z^{1(a}z^{b)1}z_{4c}\right].
	  \end{split} \end{equation*}
Hence the images of $\boxed{3},\boxed{7},\boxed{8}$ in   $\boxed{\begin{matrix}9\\10\end{matrix}}$
 are pairwise linearly independent.
 We can get an explicit formula for the highest weight vector of $\boxed{9}.$ 
 We have by  Proposition \ref{l81}  {(i)},
 $$\Delta (f_{10})^{a b}_c=
 		\Delta\left ((z^{12})^{m }z^{1(a} \delta^{b)} _4  \delta^1_{c }r^2\right) 
 		=(m+4) (f_3)^{a b}_c ,$$
 		and by direct computation
 		$$
 		\Delta \left(z^{1(a}z^{  b)1} z_{4c} (z^{12})^{m }\right)=   (z^{12})^{m }\Delta \left(z^{1(a}z^{  b)1} z_{4c}\right)= (f_3)^{a b}_c.
 $$
So we get that
$$(f_9)^{a b}_c:= -\frac 1{m+4}  (f_{10})^{a b}_c+ z^{1(a}z^{  b)1} z_{4c} (z^{12})^{m }$$
 is harmonic and traceless, hence it is the highest weight vector  of   $\boxed{9}$.  
  Therefore,  $\boxed{\alpha}\oplus\boxed{8}\rightarrow\boxed{\begin{matrix}9\\10\end{matrix}}$ for  $\alpha=3,7$  are isomorphisms.

 The  operator $D^-_+$  commutes with the Laplace operator, hence it 
 maps $\boxed{5}$ to $\boxed{2}$ and $\boxed{9}$ to $\boxed{5}.$ Looking at the corresponding
 highest weight vectors, it is visible that the both maps are nontrivial.
\end{proof}

 \subsection{Isotypic component $\Is(m+3,m+2,2)$, $m\geq 0$.}
  The $G$-modules $(311)$ and $(322)$ are dual to each other and harmonic modules $(m,m,0)$ are self-dual. Hence the decomposition
  and properties
  of this isotypic component is given by  Diagram 3, where dots  (resp. boxes) are
  dual modules to those in Diagram 2 and arrows are
  operators dual to the corresponding operators in Diagram 2.
	The formulation of the theorem on Fischer decomposition of this isotypic component is 
 quite analogous to Theorem \ref{fischer11}. Its proof is again based on computation
 with the highest weight vectors of individual irreducible components and is left 
 as an exercise.

\vskip-13mm
\setlength{\unitlength}{0.20mm}
\begin{picture}(200,340)(-220,-150)
	\put(-150,95){\tiny $m$}
	\put(-120,95){\tiny$m+1$}
	\put(-70,95){\tiny $m+2$}
	\put(-20,95){\tiny $m+3$}
	\put(30,95){\tiny $m+4$}
	\put(80,95){\tiny $m+5$}
	\put(130,95){\tiny $m+6$}
	\put(180,95){\tiny $m+7$}
	\put(230,95){\tiny $m+8$}
	\put(280,95){\tiny $m+9$}

	\put(-50,-200){$\boxed{8}$}
	\put(50,-205){$\bullet$}
	\put(150,-205){$\bullet$}
	\put(250,-205){$\bullet$}    	
	\put(-25,-180){\vector(1,1){27}}
	\put(58,-198){\vector(1,1){45}}
	\put(158,-198){\vector(1,1){45}}
	\put(258,-198){\vector(1,1){45}}
	
	\put(-33,-180){\vector(1,3){37}}
	\put(58,-193){\vector(1,3){42}}
	\put(158,-193){\vector(1,3){42}}
	\put(258,-193){\vector(1,3){42}}

	\multiput(-29,-201)(7,0){13}{$.$}
	\put(48,-201){\vector(1,0){5}}
	\multiput(44,-201)(8,0){13}{$.$}
	\put(148,-202){\vector(1,0){5}}
	\multiput(144,-201)(8,0){13}{$.$}
	\put(248,-202){\vector(1,0){5}}

	\put(-100,-155){$\boxed{4}$}           	
	\put(0,-155){$\bullet$}
	\put(100,-155){$\bullet$}
	\put(200,-155){$\bullet$}
	\put(300,-155){$\bullet$}
%	\put(5,-150){\vector(1,1){50}}
%	\put(105,-150){\vector(1,1){50}}
%	\put(205,-150){\vector(1,1){50}}
	\put(8,-153){\vector(1,-1){46}}
	\put(108,-153){\vector(1,-1){46}}
	\put(208,-153){\vector(1,-1){46}}
	
	\put(-80,-160){\vector(1,-1){30}}
	
	\put(-82,-138){\vector(1,3){42}}
	\put(5,-145){\vector(1,3){44}}
	\put(105,-145){\vector(1,3){44}}
	\put(205,-145){\vector(1,3){44}}
%	\put(-85,-139){\vector(1,1){36}}
	
	\multiput(-79,-150)(7,0){13}{$.$}
	\put(-4,-150){\vector(1,0){5}}
	\multiput(4,-150)(8,0){13}{$.$}
	\put(96,-150){\vector(1,0){5}}
	\multiput(104,-151)(8,0){13}{$.$}
	\put(196,-150){\vector(1,0){5}}
	\multiput(204,-151)(8,0){13}{$.$}
	\put(296,-150){\vector(1,0){5}}

	\put(-50,-105){$\boxed{7}$}
	\put(50,-105){$\bullet$}
	\put(150,-105){$\bullet$}
	\put(250,-105){$\bullet$}
	%\put(-33,-111){\vector(1,-1){36}}
	%\put(55,-101){\vector(1,-1){46}}
	%\put(155,-101){\vector(1,-1){46}}
	%\put(255,-101){\vector(1,-1){46}}
	\put(-32,-88){\vector(1,1){32}}
	\put(57,- 98){\vector(1,1){39}}
	\put(157,- 98){\vector(1,1){39}}
	\put(257,- 98){\vector(1,1){39}}
	
	\multiput(-29,-102)(7,0){13}{$.$}
	\put(48,-101){\vector(1,0){5}}
	\multiput(44,-101)(8,0){13}{$.$}
	\put(148,-101){\vector(1,0){5}}
	\multiput(144,-101)(8,0){13}{$.$}
	\put(248,-101){\vector(1,0){5}}

	\put(-100,-55){$\boxed{3}$}
	\put(-5,-50){ {\tiny $\boxed{\begin{matrix}9\\10\end{matrix}}$}}
	\put(92,-50){ {\tiny $\boxed{\begin{matrix}\bullet\\\bullet\end{matrix}}$}}
	\put(192,-50){ {\tiny $\boxed{\begin{matrix}\bullet\\\bullet\end{matrix}}$}}
	\put(292,-50){ {\tiny $\boxed{\begin{matrix}\bullet\\\bullet\end{matrix}}$}}	
	
	%	\put(-80,-49){\vector(1,0){80}}
	
	\multiput(-79,-49)(7,0){13}{$.$}
	\put(-5,-49){\vector(1,0){5}}
	%	\multiput(-9,-49)(5.5,  0){13}{$.$}
	% \put(-9,-49){\vector(1,0){5}}
	\multiput(25,-49)(7,0){13}{$.$}
	\put(91,-49){\vector(1,0){5}}
	%	\multiput(22,-44)(5.5,0){13}{$.$}
	%\put(91,-44){\vector(1,0){5}}
	\multiput(118,-49)(7,0){13}{$.$}
	\put(192,-49){\vector(1,0){5}}
	%\multiput(115,-44)(5.5,0){13}{$.$}
	%	\put(192,-44){\vector(1,0){5}}
	\multiput(218,-49)(7,0){13}{$.$}
	\put(292,-49){\vector(1,0){5}}
	%	\multiput(215,-44)(5.5,0){13}{$.$}
	%	\put(292,-44){\vector(1,0){5}}

	%	\multiput(65,9)(7,0){13}{$.$}
	\multiput(68,8)(7,0){13}{$.$}
	\put(137,6){\vector(1,0){5}}
	%	\multiput(165,9)(7,0){13}{$.$}
	%	\put(237,9){\vector(1,0){5}}
	%	\multiput(168,56)(7,0){13}{$.$}
	\put(237,6){\vector(1,0){5}}

	\put(15,-61){\vector(1,-1){38}}
	\put(115,-61){\vector(1,-1){38}}
	\put(215,-61){\vector(1,-1){38}}
	\put(-78,-60){\vector(1,-1){30}}
	
	\put(-78,-38){\vector(1,1){33}}      	
	%	\put(13,-27){\vector(1,1){33}}
	\put(18,-27){\vector(1,1){33}}
	%	\put(113,-27){\vector(1,1){33}}
	\put(118,-27){\vector(1,1){33}}
	%	\put(213,-27){\vector(1,1){33}}
	\put(218,-27){\vector(1,1){33}}
	
	\put(-90,-62){\vector(1,-3){41}}
	\put(10,-67){\vector(1,-3){41}}
	\put(110,-67){\vector(1,-3){41}}
	\put(210,-67){\vector(1,-3){41}}

	\put(-150,-5){$\boxed{1}$}
	
	\put(-50,5){ {\scriptsize$\boxed{\begin{matrix}5\\6\end{matrix}}$}}
	\put(42,4){ {\tiny $\boxed{\begin{matrix}\bullet\\\bullet\end{matrix}}$}}
	\put(142,4){ {\tiny $\boxed{\begin{matrix}\bullet\\\bullet\end{matrix}}$}}
	\put(242,4){ {\tiny $\boxed{\begin{matrix}\bullet\\\bullet\end{matrix}}$}}
	%	\put(-130,1){\vector(1,0){85}}
	%   \put(-25,1){\vector(1,0){73}}
	%    \put(65,1){\vector(1,0){83}}
	%    \put(165,1){\vector(1,0){82}}

	%	\multiput(-77,50)(7,0){13}{$.$}
	%	\put(-5,50){\vector(1,0){5}}
	%	\multiput(0,50)(8,0){13}{$.$}
	%	\put(95,50){\vector(1,0){5}}
	%	\multiput(100,50)(8,0){13}{$.$}
	%	\put(195,50){\vector(1,0){5}}

	\multiput(-125,3)(7,0){13}{$.$}
	\put(-47,3){\vector(1,0){5}}
	\multiput(-28,6)(7,0){13}{$.$}
	\put(-28,5){\vector(1,0){5}}
	\put(40,5){\vector(1,0){5}}
	\put(40,5){\vector(1,0){5}}
	%	\multiput(65,9)(7,0){13}{$.$}
	%	\multiput(68,5)(7,0){13}{$.$}
	\put(137,5){\vector(1,0){5}}
	\multiput(165,5)(7,0){13}{$.$}
	\put(237,9){\vector(1,0){5}}
	%	\multiput(168,5)(7,0){13}{$.$}
	%	\put(237,5){\vector(1,0){5}}

	\put(-130,12){\vector(1,1){30}}
	\put(-25,20){\vector(1,1){27}}
	\put(67,13){\vector(1,1){35}}
	\put(167,13){\vector(1,1){35}}
	\put(267,13){\vector(1,1){35}}
	
	\put(-133,-7){\vector(1,-1){34}}
	\put(-26,5){\vector(1,-1){36}}
	%	\put(-26,-1){\vector(1,-1){30}}
	\put(68,5){\vector(1,-1){33}}
	%	\put(68,-1){\vector(1,-1){33}}
	\put(168,5){\vector(1,-1){33}}
	%	\put(168,-1){\vector(1,-1){33}}
	\put(268,5){\vector(1,-1){33}}
	%	\put(268,-1){\vector(1,-1){33}}
	\put(-136,-13){\vector(1,-3){41}}
	\put(-36,-16){\vector(1,-3){41}}
	\put(61,-16){\vector(1,-3){41}}
	\put(161,-13){\vector(1,-3){43}}
	\put(259,-16){\vector(1,-3){43}}

	\put(-100,45){$\boxed{2}$}
	\put(-78,38){\vector(1,-1){32}}
	\put(7,49){\vector(1,-1){39}}
	\put(109,47){\vector(1,-1){37}}
	\put(209,47){\vector(1,-1){37}}
	
	\put(0,45){$\bullet$}
	\put(100,45){$\bullet$}
	\put(200,45){$\bullet$}
	\put(300,45){$\bullet$}
	
	\multiput(-77,50)(7,0){13}{$.$}
	\put(-5,50){\vector(1,0){5}}
	\multiput(0,50)(8,0){13}{$.$}
	\put(95,50){\vector(1,0){5}}
	\multiput(100,50)(8,0){13}{$.$}
	\put(195,50){\vector(1,0){5}}
	\multiput(200,50)(8,0){13}{$.$}
	\put(295,50){\vector(1,0){5}}
	
	%	\put(-45,57){\footnotesize $r^2$}
	\put(55,57){\footnotesize $r^2$}
	\put(155,57){\footnotesize $r^2$}
	\put(255,57){\footnotesize $r^2$}
		\put(-45,57){\footnotesize $r^2$}
	
	\put(390,-205){\small (210)$\;\;\;\;f^{[ab]}_c$}
	\put(390,-155){\small (221)$\;\;\;\;f_{[ab]}^c$}
	\put(390,45){\small (333)$\;\;\;\;f^{(abc)}$}
	\put(390,-5){\small (322)$\;\;\;\;f^{(ab)}_{c}$}
	\put(390,-55){\small (311)$\;\;\;\;f^a_{(bc)}$}
	\put(390,-105){\small (300)$\;\;\;\;f_{(abc)}$}
	\put(350,-155){\ldots}
	\put(350,-205){\ldots}
	\put(350,-55){\ldots}
	\put(350,-5){\ldots}
	\put(350,45){\ldots}
	\put(350,-105){\ldots}
	
\end{picture}
\vskip 15mm 

\nopagebreak
 \centerline{Diagram 3. Isotypic component $\Is(m+3,m+2,2)$, $m\geq 0$
}

\subsection{Isotypic component $\Is(m+3,m+3,3)$, $m\geq 0$.}
Again, modules and operators in Diagram 4 are dual to those in Diagram 1. The formulation of the theorem on Fischer decomposition of this isotypic component is 
 quite analogous to Theorem \ref{fisher00}.

	\	\setlength{\unitlength}{0.17mm}
	\begin{picture}(200,340)(-220,-70)
		\put(-100,195){\tiny $m$}
		\put(-70,195){\tiny$m+1$}
		\put(-20,195){\tiny $m+2$}
		\put(30,195){\tiny $m+3$}
		\put(80,195){\tiny $m+4$}
		\put(130,195){\tiny $m+5$}
		\put(180,195){\tiny $m+6$}
		\put(230,195){\tiny $m+7$}
		\put(280,195){\tiny $m+8$}
		%						   		   \put(330,195){\tiny $m+9$}
		%							   		         \put(379,175){\tiny $m+10$}
		\put(-65,125){\small ${z^{333}_{322}}$}
		\put(-15,75){\small ${z^{322}_{311}}$}
		\put(35,25){\small ${z^{311}_{300}}$}
		%	 \put(-100,250){\vector(1,0){100}}
		%	  \put(-100,200){\vector(1,0){100}}
		%	\put(-85,150){\vector(1,0){85}}
		%	\put(0,150){\vector(1,0){100}}
		%	\put(100,150){\vector(1,0){100}}
		%	\put(200,150){\vector(1,0){100}}
		%         \put(300,150){\vector(1,0){100}}
		%\put(-55,162){\tiny ${|x|}^2$}
		%\put(45,162){\tiny ${|x|}^2$}
		%\put(145,162){\tiny ${|x|}^2$}
		%\put(245,162){\tiny ${|x|}^2$}
		%\put(-40,100){\vector(1,0){93}}
		%\put(60,100){\vector(1,0){93}}
		%	\put(150,100){\vector(1,0){100}}
		%	\put(10,50){\vector(1,0){93}}
		%	\put(100,50){\vector(1,0){100}}
		%	\put(200,50){\vector(1,0){100}}
		%%	\put(0,50){\vector(1,0){100}}
		%	\put(60,0){\vector(1,0){93}}
		%	\put(150,0){\vector(1,0){100}}
		%	\put(-105,145){$\bigvarstar$}
		\put(0,145){$\bullet$}
		\put(100,145){$\bullet$}
		\put(200,145){$\bullet$}
		\put(300,145){$\bullet$}
		\put(50,95){$\bullet$}
		\put(150,95){$\bullet$}
		\put(250,95){$\bullet$}
		\put(100,45){$\bullet$}
		\put(200,45){$\bullet$}
		\put(300,45){$\bullet$}
		\put(250,-5){$\bullet$}
		\put(250,-5){$\bullet$}
		%     	  	  	  	 \put(400,145){$\bullet$}
		\put(-60,95){$\boxed{2}$}
		\put(-10,45){$\boxed{3}$}
		\put(40,-5){$\boxed{4}$}
		\put(150,-5){$\bullet$} 	
		%\put(-105,145){$\bigvarstar$}
		%	\put(-135,162){\tiny $\mathcal{M}_m(ABC)$}
		\put(-105,145){$\boxed{1}$}
		\put(-89,140){\vector(1,-1){32}}
		%	\put(-1,145){\vector(-1,-1){42}}
		%	\put(99,145){\vector(-1,-1){42}}
		%	\put(199,145){\vector(-1,-1){43}}
		%	\put(299,145){\vector(-1,-1){43}}
		%	\put(299,45){\vector(-1,-1){43}}
		%	\put(199,45){\vector(-1,-1){43}}
		%	\put(99,45){\vector(-1,-1){43}}
		\put(7,146){\vector(1,-1){47}}
		%	\put(107,46){\vector(-1,1){51}}
		%	\put(7,46){\vector(-1,1){51}}
		\put(107,46){\vector(1,-1){47}}
		\put(207,46){\vector(1,-1){47}}
		%	\put(107,146){\vector(1,-1){7}}
		\put(207,146){\vector(1,-1){47}}
		%	\put(-47,98){\vector(-1,1){47}}	
		%	\put(55,98){\vector(-1,1){47}}	
		%	\put(155,98){\vector(-1,1){47}}	
		%	\put(255,98){\vector(-1,1){47}}	
		%	\put(43,8){\vector(-1,1){37}}	
		%	\put(143,10){\vector(-1,1){37}}	
		%	\put(243,10){\vector(-1,1){37}}	
		%	\put(204,49){\vector(-1,1){46}}	
		%	\put(304,49){\vector(-1,1){46}}	
		
		\put(107,146){\vector(1,-1){47}}	
		\put(57,96){\vector(1,-1){47}}	
		\put(157,96){\vector(1,-1){47}}		
		\put(257,96){\vector(1,-1){47}}			
		%	\put(-41,92){\vector(1,-1){33}} 	 	
		\put(57,3){\vector(1,1){45}} 	
		\put(7,53){\vector(1,1){45}} 	
		\put(107,53){\vector(1,1){45}} 	
		\put(207,53){\vector(1,1){45}} 	
		\put(157,3){\vector(1,1){45}}
		\put(257,3){\vector(1,1){45}} 		
		
		\put(-43,103){\vector(1,1){45}}
		\put(57,103){\vector(1,1){45}} 	 		
		\put(157,103){\vector(1,1){45}} 	 	
		\put(257,103){\vector(1,1){45}} 	
		%	\put(257,103){\vector(-1,-1){51}} 	
		%	\put(157,103){\vector(-1,-1){51}} 	
		%	\put(57,103){\vector(-1,-1){51}} 	
		\put(50,103){\vector(1,-1){51}} 		
		\put(-50,103){\vector(1,-1){45}} 	 	
		\put(9,42){\vector(1,-1){33}}
		%	\put(520,245){A}
		%			\put(520,195){A'}
		\put(390,145){\small (333)$\;\;\;\;f^{(abc)}$}
		\put(390,95){\small (322)$\;\;\;\;f^{(ab)}_c$}
		\put(390,45){\small (311)$\;\;\;\;f_{(bc)}^a$}
		\put(390,-5){\small (300)$\;\;\;\;f_{(abc)}$}
		\put(350,145){\ldots}
		\put(350,95){\ldots}
		\put(350,45){\ldots}
		\put(350,-5){\ldots}
		
		\multiput(-85,149)(7,0){13}{$.$}
		\multiput(12,149)(7,0){13}{$.$}
		\multiput(112,149)(7,0){13}{$.$}
		\multiput(212,149)(7,0){13}{$.$}
		
		\multiput(-35,99)(7,0){13}{$.$}
		\multiput(65,99)(7,0){13}{$.$}
		\multiput(15,49)(7,0){13}{$.$}
		\multiput(115,49)(7,0){13}{$.$}
		\multiput(212,49)(7,0){13}{$.$}
		%\multiput(215,9)(7,0){13}{$.$}
		\multiput(165,99)(7,0){13}{$.$}	
		%	\multiput(65,-01)(7,0){13}{$.$}
		\multiput(165,-01)(7,0){13}{$.$}
		\multiput(65,-01)(7,0){13}{$.$}
		
		\put(43,100){\vector(1,0){7}}			
		\put(143,100){\vector(1,0){7}} 		
		\put(243,100){\vector(1,0){7}} 	
		\put(93,50){\vector(1,0){7}} 		
		\put(193,50){\vector(1,0){7}}
		\put(293,50){\vector(1,0){7}} 	
		\put(143,0){\vector(1,0){7}} 		
		\put(243,0){\vector(1,0){7}}
		\put(43,100){\vector(1,0){7}}
		\put(143,100){\vector(1,0){7}} 		
		\put(243,100){\vector(1,0){7}} 	
		\put(-07,150){\vector(1,0){7}}
		\put(93,150){\vector(1,0){7}} 		
		\put(193,150){\vector(1,0){7}} 	 	
		\put(293,150){\vector(1,0){7}}	
	\end{picture}

\nopagebreak
\centerline{Diagram 4. Isotypic component $\Is(m+3,m+3,3)$, $m\geq 0$}

\subsection{Isotypic component $\Is(m+1,m+1,1),\ m\geq 0$}

This case should be subdivided into two separate cases.

%\vskip4mm

{\bf A. The case $m\geq 1, $ i.e., isotypic component $\Is(m+2,m+2,1),\ m\geq 0.$}

The structure of this isotypic component is described by Diagram 5a below
 and is identical with the structure of Diagram 2. 
 The formulation of the theorem on Fischer decomposition of this isotypic component is 
 quite analogous to Theorem \ref{fischer11}. 
%The proof of it is again based on computation  with the highest weight vectors of individual irreducible components and is left  as an exercise.
\vskip 5mm
\setlength{\unitlength}{0.20mm}
\begin{picture}(200,340)(-220,-150)
	\put(-150,95){\tiny $m$}
	\put(-120,95){\tiny$m+1$}
	\put(-70,95){\tiny $m+2$}
	\put(-20,95){\tiny $m+3$}
	\put(30,95){\tiny $m+4$}
	\put(80,95){\tiny $m+5$}
	\put(130,95){\tiny $m+6$}
	\put(180,95){\tiny $m+7$}
	\put(230,95){\tiny $m+8$}
	\put(280,95){\tiny $m+9$}

	\put(-50,-200){$\boxed{8}$}
	\put(50,-205){$\bullet$}
	\put(150,-205){$\bullet$}
	\put(250,-205){$\bullet$}    	
	\put(-25,-180){\vector(1,1){27}}
	\put(58,-198){\vector(1,1){45}}
	\put(158,-198){\vector(1,1){45}}
	\put(258,-198){\vector(1,1){45}}
	
	\put(-33,-180){\vector(1,3){37}}
	\put(58,-193){\vector(1,3){42}}
	\put(158,-193){\vector(1,3){42}}
	\put(258,-193){\vector(1,3){42}}

	\multiput(-29,-201)(7,0){13}{$.$}
	\put(48,-201){\vector(1,0){5}}
	\multiput(44,-201)(8,0){13}{$.$}
	\put(148,-202){\vector(1,0){5}}
	\multiput(144,-201)(8,0){13}{$.$}
	\put(248,-202){\vector(1,0){5}}

	\put(-100,-155){$\boxed{4}$}           	
	\put(0,-155){$\bullet$}
	\put(100,-155){$\bullet$}
	\put(200,-155){$\bullet$}
	\put(300,-155){$\bullet$}
%	\put(5,-150){\vector(1,1){50}}
%	\put(105,-150){\vector(1,1){50}}
%	\put(205,-150){\vector(1,1){50}}
	\put(8,-153){\vector(1,-1){46}}
	\put(108,-153){\vector(1,-1){46}}
	\put(208,-153){\vector(1,-1){46}}
	
	\put(-80,-160){\vector(1,-1){30}}
	
	\put(-82,-138){\vector(1,3){42}}
	\put(5,-145){\vector(1,3){44}}
	\put(105,-145){\vector(1,3){44}}
	\put(205,-145){\vector(1,3){44}}
%	\put(-85,-139){\vector(1,1){36}}
	
	\multiput(-79,-150)(7,0){13}{$.$}
	\put(-4,-150){\vector(1,0){5}}
	\multiput(4,-150)(8,0){13}{$.$}
	\put(96,-150){\vector(1,0){5}}
	\multiput(104,-151)(8,0){13}{$.$}
	\put(196,-150){\vector(1,0){5}}
	\multiput(204,-151)(8,0){13}{$.$}
	\put(296,-150){\vector(1,0){5}}

	\put(-50,-105){$\boxed{7}$}
	\put(50,-105){$\bullet$}
	\put(150,-105){$\bullet$}
	\put(250,-105){$\bullet$}
%	\put(-33,-111){\vector(1,-1){36}}
%	\put(55,-101){\vector(1,-1){46}}
%	\put(155,-101){\vector(1,-1){46}}
%	\put(255,-101){\vector(1,-1){46}}
	\put(-32,-88){\vector(1,1){32}}
	\put(57,- 98){\vector(1,1){39}}
	\put(157,- 98){\vector(1,1){39}}
	\put(257,- 98){\vector(1,1){39}}
	
	\multiput(-29,-102)(7,0){13}{$.$}
	\put(48,-101){\vector(1,0){5}}
	\multiput(44,-101)(8,0){13}{$.$}
	\put(148,-101){\vector(1,0){5}}
	\multiput(144,-101)(8,0){13}{$.$}
	\put(248,-101){\vector(1,0){5}}

	\put(-100,-55){$\boxed{3}$}
	\put(-5,-50){ {\tiny $\boxed{\begin{matrix}9\\10\end{matrix}}$}}
	\put(92,-50){ {\tiny $\boxed{\begin{matrix}\bullet\\\bullet\end{matrix}}$}}
	\put(192,-50){ {\tiny $\boxed{\begin{matrix}\bullet\\\bullet\end{matrix}}$}}
	\put(292,-50){ {\tiny $\boxed{\begin{matrix}\bullet\\\bullet\end{matrix}}$}}	
	
	%	\put(-80,-49){\vector(1,0){80}}
	
	\multiput(-79,-49)(7,0){13}{$.$}
	\put(-5,-49){\vector(1,0){5}}
	%	\multiput(-9,-49)(5.5,  0){13}{$.$}
	% \put(-9,-49){\vector(1,0){5}}
	\multiput(25,-49)(7,0){13}{$.$}
	\put(91,-49){\vector(1,0){5}}
	%	\multiput(22,-44)(5.5,0){13}{$.$}
	%\put(91,-44){\vector(1,0){5}}
	\multiput(118,-49)(7,0){13}{$.$}
	\put(192,-49){\vector(1,0){5}}
	%\multiput(115,-44)(5.5,0){13}{$.$}
	%	\put(192,-44){\vector(1,0){5}}
	\multiput(218,-49)(7,0){13}{$.$}
	\put(292,-49){\vector(1,0){5}}
	%	\multiput(215,-44)(5.5,0){13}{$.$}
	%	\put(292,-44){\vector(1,0){5}}

	%	\multiput(65,9)(7,0){13}{$.$}
	\multiput(68,8)(7,0){13}{$.$}
	\put(137,6){\vector(1,0){5}}
	%	\multiput(165,9)(7,0){13}{$.$}
	%	\put(237,9){\vector(1,0){5}}
	%	\multiput(168,56)(7,0){13}{$.$}
	\put(237,6){\vector(1,0){5}}

	\put(15,-61){\vector(1,-1){38}}
	\put(115,-61){\vector(1,-1){38}}
	\put(215,-61){\vector(1,-1){38}}
	\put(-78,-60){\vector(1,-1){30}}
	
	\put(-78,-38){\vector(1,1){33}}      	
	%	\put(13,-27){\vector(1,1){33}}
	\put(18,-27){\vector(1,1){33}}
	%	\put(113,-27){\vector(1,1){33}}
	\put(118,-27){\vector(1,1){33}}
	%	\put(213,-27){\vector(1,1){33}}
	\put(218,-27){\vector(1,1){33}}
	
	\put(-90,-62){\vector(1,-3){41}}
	\put(10,-67){\vector(1,-3){41}}
	\put(110,-67){\vector(1,-3){41}}
	\put(210,-67){\vector(1,-3){41}}

	\put(-150,-5){$\boxed{1}$}
	
	\put(-50,5){ {\scriptsize$\boxed{\begin{matrix}5\\6\end{matrix}}$}}
	\put(42,4){ {\tiny $\boxed{\begin{matrix}\bullet\\\bullet\end{matrix}}$}}
	\put(142,4){ {\tiny $\boxed{\begin{matrix}\bullet\\\bullet\end{matrix}}$}}
	\put(242,4){ {\tiny $\boxed{\begin{matrix}\bullet\\\bullet\end{matrix}}$}}
	%	\put(-130,1){\vector(1,0){85}}
	%   \put(-25,1){\vector(1,0){73}}
	%    \put(65,1){\vector(1,0){83}}
	%    \put(165,1){\vector(1,0){82}}

	%	\multiput(-77,50)(7,0){13}{$.$}
	%	\put(-5,50){\vector(1,0){5}}
	%	\multiput(0,50)(8,0){13}{$.$}
	%	\put(95,50){\vector(1,0){5}}
	%	\multiput(100,50)(8,0){13}{$.$}
	%	\put(195,50){\vector(1,0){5}}

	\multiput(-125,3)(7,0){13}{$.$}
	\put(-47,3){\vector(1,0){5}}
	\multiput(-28,6)(7,0){13}{$.$}
	\put(-28,5){\vector(1,0){5}}
	\put(40,5){\vector(1,0){5}}
	\put(40,5){\vector(1,0){5}}
	%	\multiput(65,9)(7,0){13}{$.$}
	%	\multiput(68,5)(7,0){13}{$.$}
	\put(137,5){\vector(1,0){5}}
	\multiput(165,5)(7,0){13}{$.$}
	\put(237,9){\vector(1,0){5}}
	%	\multiput(168,5)(7,0){13}{$.$}
	%	\put(237,5){\vector(1,0){5}}

	\put(-130,12){\vector(1,1){30}}
	\put(-25,20){\vector(1,1){27}}
	\put(67,13){\vector(1,1){35}}
	\put(167,13){\vector(1,1){35}}
	\put(267,13){\vector(1,1){35}}
	
	\put(-133,-7){\vector(1,-1){34}}
	\put(-26,5){\vector(1,-1){36}}
	%	\put(-26,-1){\vector(1,-1){30}}
	\put(68,5){\vector(1,-1){33}}
	%	\put(68,-1){\vector(1,-1){33}}
	\put(168,5){\vector(1,-1){33}}
	%	\put(168,-1){\vector(1,-1){33}}
	\put(268,5){\vector(1,-1){33}}
	%	\put(268,-1){\vector(1,-1){33}}
	\put(-136,-13){\vector(1,-3){41}}
	\put(-36,-16){\vector(1,-3){41}}
	\put(61,-16){\vector(1,-3){41}}
	\put(161,-13){\vector(1,-3){43}}
	\put(259,-16){\vector(1,-3){43}}

	\put(-100,45){$\boxed{2}$}
	\put(-78,38){\vector(1,-1){32}}
	\put(7,49){\vector(1,-1){39}}
	\put(109,47){\vector(1,-1){37}}
	\put(209,47){\vector(1,-1){37}}
	
	\put(0,45){$\bullet$}
	\put(100,45){$\bullet$}
	\put(200,45){$\bullet$}
	\put(300,45){$\bullet$}
	
	\multiput(-77,50)(7,0){13}{$.$}
	\put(-5,50){\vector(1,0){5}}
	\multiput(0,50)(8,0){13}{$.$}
	\put(95,50){\vector(1,0){5}}
	\multiput(100,50)(8,0){13}{$.$}
	\put(195,50){\vector(1,0){5}}
	\multiput(200,50)(8,0){13}{$.$}
	\put(295,50){\vector(1,0){5}}
	
	\put(-45,57){\footnotesize $r^2$}
	\put(55,57){\footnotesize $r^2$}
	\put(155,57){\footnotesize $r^2$}
	\put(255,57){\footnotesize $r^2$}

	\put(390,-205){\small (100)$\;\;\;\;f_a$}
	\put(390,-155){\small (111)$\;\;\;\;f^a$}
	\put(390,45){\small (322)$\;\;\;\;f^{(ab)}_c$}
	\put(390,-5){\small (221)$\;\;\;\;f_{[bc]}^a$}
	\put(390,-55){\small (210)$\;\;\;\;f^{[ab]}_c$}
	\put(390,-105){\small (311)$\;\;\;\;f_{(bc)}^a$}
	\put(350,-155){\ldots}
	\put(350,-205){\ldots}
	\put(350,-55){\ldots}
	\put(350,-5){\ldots}
	\put(350,45){\ldots}
	\put(350,-105){\ldots}
	
\end{picture}
 \vskip 15mm
 \centerline{Diagram 5a. Isotypic component $\Is(m+2,m+2,1),\ m\geq 0.$}

 \vskip 5mm

{\bf B. The case $m= 0, $ isotypic component $\Is(111).$}

This isotypic component has a trivial structure.

\vskip 5mm
\setlength{\unitlength}{0.20mm}
\begin{picture}(200,340)(-220,-170)
	\put(-150,95){\tiny $0$}
	\put(-100,95){\tiny$1$}
	\put(-50,95){\tiny $2$}
	\put(-0,95){\tiny $3$}
	\put(50,95){\tiny $4$}
	\put(100,95){\tiny $5$}
	\put(150,95){\tiny $6$}
	\put(200,95){\tiny $7$}
	\put(250,95){\tiny $8$}
	\put(300,95){\tiny $9$}

\put(-80,40){\vector(1,-1){35}}
\put(9,45){\vector(1,-1){43}}
\put(109,45){\vector(1,-1){43}}
\put(209,45){\vector(1,-1){43}}

	\put(-100,45){$\boxed{2}$}
\put(0,45){$\bullet$}
\put(100,45){$\bullet$}
\put(200,45){$\bullet$}
\put(300,45){$\bullet$}

		\put(-150,-5){$\boxed{1}$}
	\put(-50,-5){$\bullet$}
	\put(50,-5){$\bullet$}
	\put(150,-5){$\bullet$}
		\put(250,-5){$\bullet$}
	\multiput(-125,-1)(7,0){10}{$.$}
		\multiput(-39,-1)(7,0){13}{$.$}
			\multiput(61,-1)(7,0){13}{$.$}
				\multiput(161,-1)(7,0){13}{$.$}
	\put(-52,-1){\vector(1,0){5}}
		\put(48,-1){\vector(1,0){5}}
			\put(148,-1){\vector(1,0){5}}
				\put(248,-1){\vector(1,0){5}}

\multiput(-75,49)(7,0){10}{$.$}
		\multiput(11,49)(7,0){13}{$.$}
			\multiput(111,49)(7,0){13}{$.$}
					\multiput(211,49)(7,0){13}{$.$}
				\put(-2,49){\vector(1,0){5}}
				\put(98,49){\vector(1,0){5}}
				\put(198,49){\vector(1,0){5}}
				\put(298,49){\vector(1,0){5}}

	\put(-130,12){\vector(1,1){30}}
\put(-43,2){\vector(1,1){44}}
\put(57,2){\vector(1,1){44}}
\put(157,2){\vector(1,1){44}}
\put(257,2){\vector(1,1){44}}

	\put(-45,57){\footnotesize $r^2$}
	\put(55,57){\footnotesize $r^2$}
	\put(155,57){\footnotesize $r^2$}
	\put(255,57){\footnotesize $r^2$}

	\put(390,45){\small (100)$\;\;\;\;f_a$}
	\put(390,-5){\small (111)$\;\;\;\;f^{a}$}

	\put(350,-5){\ldots}
	\put(350,45){\ldots}
%	\put(350,-105){\ldots}
	
\end{picture}

\vskip-25mm
 \centerline{Diagram 5b. Isotypic component $\Is(111).$}
 %\vskip-225mm
%\vskip 5mm

\begin{thm}[Fischer decomposition for $\Is(111)$]\label{fischer111}
	Write shortly  $\Is=\Is(111)$. 
	Then the invariant subspaces
		$
		\boxed{1}=\cH(111)\cap\Is$ and
		$	\boxed{2}=\cH(100)\cap\Is
		$
	are irreducible under the action of $G.$	
	The Fischer decomposition of the isotypic component $\Is$  is given by
	$$
	\left[\oplus_j\,r^{2j}\boxed{1}\right]\,\oplus\left[\oplus_j\,r^{2j}\boxed{2}\right]
	$$
	where all the sums in  brackets  are taken over $j\in\bN_0$.
	All summands are irreducible, their structure is described in Diagram 5b.
	%All sums in  brackets  are irreducible lowest weight $\mathfrak{s}\mathfrak{l}(2,\bC)$-modules.

	Full arrows in the diagram    from line $\la$ to line $\mu$ (resp. their inverses) describe the    action of operators $z^\la_\mu$ (resp. $D^\mu_\la$) .  
	Dotted arrows describe the action of operators     given by multiplication by $r^2$,
	inverse of  dotted arrows denote  the action of $\Delta.$
	All arrows in the diagram and their inverses are isomorphisms. Triangles and squares in the diagram are commutative {(up to a nonvanishing factor).} 
	\end{thm}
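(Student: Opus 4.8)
The argument is parallel to, but much shorter than, the proof of Theorem~\ref{fisher00}, since this isotypic component carries no higher multiplicities. The plan is to first locate the pieces, then verify that all arrows of Diagram~5b are isomorphisms, and finally observe commutativity by Schur's lemma.

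First I would identify the pieces. As in Theorem~\ref{fisher00}, write $\Pol_h\simeq\bigoplus_{j=0}^{\lfloor h/2\rfloor}(h-2j,h-2j,0)$ as a $G$-module (Proposition~\ref{l81}), so that $P_h(\la)\simeq\bigoplus_{j}r^{2j}\bigl(\cH_{h-2j}\otimes\la\bigr)$. Reading off Theorem~\ref{6.1}, the module $(111)$ occurs, each time with multiplicity one, in $\cH_0\otimes(111)=(111)$ and in the $(111)$-summand of $\cH_1\otimes(100)=(110)\otimes(100)=(210)\oplus(111)$, and one checks from Theorem~\ref{6.1} which of the remaining values $\la\in V$ can contribute a further copy of $(111)$. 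Hence a copy of $(111)$ sits in $P_h(111)$ precisely for $h=2j$, realised as $r^{2j}\cH_0(111)$, and in $P_h(100)$ precisely for $h=2j+1$, realised as $r^{2j}$ applied to the $(111)$-part of $\cH_1(100)$. Setting $\boxed{1}=\cH(111)\cap\Is=\cH_0(111)$ and $\boxed{2}=\cH(100)\cap\Is$ (the $(111)$-summand of $\cH_1(100)$), each of $\boxed{1},\boxed{2}$ is a single copy of the irreducible $G$-module $(111)$, hence $G$-irreducible, and
$$\Is(111)=\Bigl[\bigoplus_{j\ge0}r^{2j}\boxed{1}\Bigr]\oplus\Bigl[\bigoplus_{j\ge0}r^{2j}\boxed{2}\Bigr],$$
with $r^{2j}\boxed{1}$ of degree $2j$ and values in $(111)$, and $r^{2j}\boxed{2}$ of degree $2j+1$ and values in $(100)$; this is Diagram~5b.

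Next I would check the arrows. Take the highest weight vector $(f_1)^a=\delta^a_4$ of $\boxed{1}$ (the highest weight vector of $S^a$, a constant field). Then
$$(z^{111}_{100}f_1)_a=z_{ab}\,\delta^b_4=z_{a4}=:(f_2)_a,\qquad (z^{100}_{111}f_2)^a=z^{ab}z_{b4}=-\tfrac12\,r^2\,\delta^a_4\ \doteq\ r^2 f_1,$$
using $z^{ac}z_{bc}=\tfrac12 r^2\delta^a_b$ from Proposition~\ref{l81}; here $(f_2)_a=z_{a4}\ne 0$ and $\doteq$ denotes equality up to a nonzero scalar. Thus $z^{111}_{100}\colon\boxed{1}\to\boxed{2}$ and $z^{100}_{111}\colon\boxed{2}\to r^2\boxed{1}$ are nontrivial, so by Schur's lemma (all modules involved being irreducible and mutually isomorphic to $(111)$) they are isomorphisms; since they commute with the injective invariant operator $r^2$, the same holds in every homogeneity. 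By Proposition~\ref{p_duality} and Remark~\ref{r_duality} the inverse maps are nonzero multiples of $D^{100}_{111}$, of $D^{111}_{100}$, and — for the dotted arrows — of $\Delta$, so all arrows of Diagram~5b and their inverses are isomorphisms. Every triangle and square of Diagram~5b consists of $G$-invariant maps between copies of $(111)$, which by Schur's lemma agree up to a nonzero scalar (and the $z$'s commute with $r^2$), so the diagram commutes up to nonzero factors.

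The explicit computations are immediate, so I do not expect a serious obstacle; the one point that genuinely has to be done with care is the enumeration in the second paragraph, namely extracting from Theorem~\ref{6.1} exactly where copies of $(111)$ live inside $\cV=\Pol\otimes V$ and which values $\la\in V$ contribute them.
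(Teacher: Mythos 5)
Your strategy is the right one and is the same method the paper uses for the other isotypic components: locate the occurrences of $(111)$ via Proposition \ref{l81} and Theorem \ref{6.1}, check nontriviality of the arrows on highest weight vectors, and conclude by Schur's lemma together with Proposition \ref{p_duality} and Remark \ref{r_duality}. Your highest weight vector computation is correct ($z^{111}_{100}\delta^a_4=z_{a4}$ and $z^{ab}z_{b4}=-\tfrac12 r^2\delta^a_4$ by Proposition \ref{l81}(i)), and the Schur-lemma part of the argument is fine for the two towers you exhibit.

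The problem is the enumeration step, exactly the one you defer with ``one checks from Theorem \ref{6.1} which of the remaining values $\la\in V$ can contribute.'' If you actually carry out that check, it fails: Theorem \ref{6.1}(5) with $m=2$ gives $(210)\otimes(220)\supset(m-1,m-1,1)=(111)$ (as a $GL(4)$ weight this is $(2,2,2,1)\simeq(1,1,1,0)$, and one verifies the term directly by the LR rule). Since $\cH_2\simeq(220)$, this means $\cH_2(210)\cap\Is(111)\neq 0$, so $\Is(111)$ contains a third tower $\bigoplus_{j\ge0}r^{2j}\bigl(\cH_2(210)\cap\Is\bigr)$ living in degrees $2,4,6,\ldots$ of the $(210)$-row; this is the component $\boxed{9}$ of Diagram 5a, which does \emph{not} degenerate away when one specializes $m+2$ to $1$ (unlike $\boxed{1},\boxed{2},\boxed{3},\boxed{5},\boxed{7},\boxed{10}$, which do). Consequently the claimed decomposition $\bigl[\oplus_j r^{2j}\boxed{1}\bigr]\oplus\bigl[\oplus_j r^{2j}\boxed{2}\bigr]$ does not exhaust $\Is(111)$, and your proof cannot close as written: you must either account for this extra tower and the operators ($D^{210}_{100}$, $z^{100}_{210}$, i.e.\ $D^{--}$, $z^{++}$) linking it to the $(100)$-row, or show it vanishes — and it does not vanish. (The dual phenomenon occurs for $\Is(100)$: Theorem \ref{6.1}(6) with $m=2$ gives $(221)\otimes(220)\supset(100)$, so $\cH_2(221)\cap\Is(100)\neq0$.) In short, the one step you yourself identify as needing care is the step that breaks.
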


 \subsection{Isotypic component $\Is(m+1,m,0),\ m\geq 0$}
This case should be again subdivided into two separate cases, which are 
dual to the respective cases for isotypic component $\Is(m+1,m+1,1).$

{\bf A. The case $m\geq 1, $i.e., isotypic component $\Is(m+2,m+1,0),\ m\geq 0.$}

The structure of this isotypic component is described by Diagram 6a. The formulation of the theorem on Fischer decomposition of this isotypic component is
 quite analogous to Theorem \ref{fischer11}. 

\vskip 5mm

\setlength{\unitlength}{0.20mm}
\begin{picture}(200,340)(-220,-250)
	\put(-150,95){\tiny $m$}
	\put(-120,95){\tiny$m+1$}
	\put(-70,95){\tiny $m+2$}
	\put(-20,95){\tiny $m+3$}
	\put(30,95){\tiny $m+4$}
	\put(80,95){\tiny $m+5$}
	\put(130,95){\tiny $m+6$}
	\put(180,95){\tiny $m+7$}
	\put(230,95){\tiny $m+8$}
	\put(280,95){\tiny $m+9$}

	\put(-50,-200){$\boxed{8}$}
	\put(50,-205){$\bullet$}
	\put(150,-205){$\bullet$}
	\put(250,-205){$\bullet$}    	
	\put(-25,-180){\vector(1,1){27}}
	\put(58,-198){\vector(1,1){45}}
	\put(158,-198){\vector(1,1){45}}
	\put(258,-198){\vector(1,1){45}}
	
	\put(-33,-180){\vector(1,3){37}}
	\put(58,-193){\vector(1,3){42}}
	\put(158,-193){\vector(1,3){42}}
	\put(258,-193){\vector(1,3){42}}

	\multiput(-29,-201)(7,0){13}{$.$}
	\put(48,-201){\vector(1,0){5}}
	\multiput(44,-201)(8,0){13}{$.$}
	\put(148,-202){\vector(1,0){5}}
	\multiput(144,-201)(8,0){13}{$.$}
	\put(248,-202){\vector(1,0){5}}

	\put(-100,-155){$\boxed{4}$}           	
	\put(0,-155){$\bullet$}
	\put(100,-155){$\bullet$}
	\put(200,-155){$\bullet$}
	\put(300,-155){$\bullet$}
%	\put(5,-150){\vector(1,1){50}}
%	\put(105,-150){\vector(1,1){50}}
%	\put(205,-150){\vector(1,1){50}}
	\put(8,-153){\vector(1,-1){46}}
	\put(108,-153){\vector(1,-1){46}}
	\put(208,-153){\vector(1,-1){46}}
	
	\put(-80,-160){\vector(1,-1){30}}
	
	\put(-82,-138){\vector(1,3){42}}
	\put(5,-145){\vector(1,3){44}}
	\put(105,-145){\vector(1,3){44}}
	\put(205,-145){\vector(1,3){44}}
%	\put(-85,-139){\vector(1,1){36}}
	
	\multiput(-79,-150)(7,0){13}{$.$}
	\put(-4,-150){\vector(1,0){5}}
	\multiput(4,-150)(8,0){13}{$.$}
	\put(96,-150){\vector(1,0){5}}
	\multiput(104,-151)(8,0){13}{$.$}
	\put(196,-150){\vector(1,0){5}}
	\multiput(204,-151)(8,0){13}{$.$}
	\put(296,-150){\vector(1,0){5}}

	\put(-50,-105){$\boxed{7}$}
	\put(50,-105){$\bullet$}
	\put(150,-105){$\bullet$}
	\put(250,-105){$\bullet$}
%	\put(-33,-111){\vector(1,-1){36}}
%	\put(55,-101){\vector(1,-1){46}}
%	\put(155,-101){\vector(1,-1){46}}
%	\put(255,-101){\vector(1,-1){46}}
	\put(-32,-88){\vector(1,1){32}}
	\put(57,- 98){\vector(1,1){39}}
	\put(157,- 98){\vector(1,1){39}}
	\put(257,- 98){\vector(1,1){39}}
	
	\multiput(-29,-102)(7,0){13}{$.$}
	\put(48,-101){\vector(1,0){5}}
	\multiput(44,-101)(8,0){13}{$.$}
	\put(148,-101){\vector(1,0){5}}
	\multiput(144,-101)(8,0){13}{$.$}
	\put(248,-101){\vector(1,0){5}}

	\put(-100,-55){$\boxed{3}$}
	\put(-5,-50){ {\tiny $\boxed{\begin{matrix}9\\10\end{matrix}}$}}
	\put(92,-50){ {\tiny $\boxed{\begin{matrix}\bullet\\\bullet\end{matrix}}$}}
	\put(192,-50){ {\tiny $\boxed{\begin{matrix}\bullet\\\bullet\end{matrix}}$}}
	\put(292,-50){ {\tiny $\boxed{\begin{matrix}\bullet\\\bullet\end{matrix}}$}}	
	
	%	\put(-80,-49){\vector(1,0){80}}
	
	\multiput(-79,-49)(7,0){13}{$.$}
	\put(-5,-49){\vector(1,0){5}}
	%	\multiput(-9,-49)(5.5,  0){13}{$.$}
	% \put(-9,-49){\vector(1,0){5}}
	\multiput(25,-49)(7,0){13}{$.$}
	\put(91,-49){\vector(1,0){5}}
	%	\multiput(22,-44)(5.5,0){13}{$.$}
	%\put(91,-44){\vector(1,0){5}}
	\multiput(118,-49)(7,0){13}{$.$}
	\put(192,-49){\vector(1,0){5}}
	%\multiput(115,-44)(5.5,0){13}{$.$}
	%	\put(192,-44){\vector(1,0){5}}
	\multiput(218,-49)(7,0){13}{$.$}
	\put(292,-49){\vector(1,0){5}}
	%	\multiput(215,-44)(5.5,0){13}{$.$}
	%	\put(292,-44){\vector(1,0){5}}

	%	\multiput(65,9)(7,0){13}{$.$}
	\multiput(68,8)(7,0){13}{$.$}
	\put(137,6){\vector(1,0){5}}
	%	\multiput(165,9)(7,0){13}{$.$}
	%	\put(237,9){\vector(1,0){5}}
	%	\multiput(168,56)(7,0){13}{$.$}
	\put(237,6){\vector(1,0){5}}

	\put(15,-61){\vector(1,-1){38}}
	\put(115,-61){\vector(1,-1){38}}
	\put(215,-61){\vector(1,-1){38}}
	\put(-78,-60){\vector(1,-1){30}}
	
	\put(-78,-38){\vector(1,1){33}}      	
	%	\put(13,-27){\vector(1,1){33}}
	\put(18,-27){\vector(1,1){33}}
	%	\put(113,-27){\vector(1,1){33}}
	\put(118,-27){\vector(1,1){33}}
	%	\put(213,-27){\vector(1,1){33}}
	\put(218,-27){\vector(1,1){33}}
	
	\put(-90,-62){\vector(1,-3){41}}
	\put(10,-67){\vector(1,-3){41}}
	\put(110,-67){\vector(1,-3){41}}
	\put(210,-67){\vector(1,-3){41}}

	\put(-150,-5){$\boxed{1}$}
	
	\put(-50,5){ {\scriptsize$\boxed{\begin{matrix}5\\6\end{matrix}}$}}
	\put(42,4){ {\tiny $\boxed{\begin{matrix}\bullet\\\bullet\end{matrix}}$}}
	\put(142,4){ {\tiny $\boxed{\begin{matrix}\bullet\\\bullet\end{matrix}}$}}
	\put(242,4){ {\tiny $\boxed{\begin{matrix}\bullet\\\bullet\end{matrix}}$}}
	%	\put(-130,1){\vector(1,0){85}}
	%   \put(-25,1){\vector(1,0){73}}
	%    \put(65,1){\vector(1,0){83}}
	%    \put(165,1){\vector(1,0){82}}

	%	\multiput(-77,50)(7,0){13}{$.$}
	%	\put(-5,50){\vector(1,0){5}}
	%	\multiput(0,50)(8,0){13}{$.$}
	%	\put(95,50){\vector(1,0){5}}
	%	\multiput(100,50)(8,0){13}{$.$}
	%	\put(195,50){\vector(1,0){5}}

	\multiput(-125,3)(7,0){13}{$.$}
	\put(-47,3){\vector(1,0){5}}
	\multiput(-28,6)(7,0){13}{$.$}
	\put(-28,5){\vector(1,0){5}}
	\put(40,5){\vector(1,0){5}}
	\put(40,5){\vector(1,0){5}}
	%	\multiput(65,9)(7,0){13}{$.$}
	%	\multiput(68,5)(7,0){13}{$.$}
	\put(137,5){\vector(1,0){5}}
	\multiput(165,5)(7,0){13}{$.$}
	\put(237,9){\vector(1,0){5}}
	%	\multiput(168,5)(7,0){13}{$.$}
	%	\put(237,5){\vector(1,0){5}}

	\put(-130,12){\vector(1,1){30}}
	\put(-25,20){\vector(1,1){27}}
	\put(67,13){\vector(1,1){35}}
	\put(167,13){\vector(1,1){35}}
	\put(267,13){\vector(1,1){35}}
	
	\put(-133,-7){\vector(1,-1){34}}
	\put(-26,5){\vector(1,-1){36}}
	%	\put(-26,-1){\vector(1,-1){30}}
	\put(68,5){\vector(1,-1){33}}
	%	\put(68,-1){\vector(1,-1){33}}
	\put(168,5){\vector(1,-1){33}}
	%	\put(168,-1){\vector(1,-1){33}}
	\put(268,5){\vector(1,-1){33}}
	%	\put(268,-1){\vector(1,-1){33}}
	\put(-136,-13){\vector(1,-3){41}}
	\put(-36,-16){\vector(1,-3){41}}
	\put(61,-16){\vector(1,-3){41}}
	\put(161,-13){\vector(1,-3){43}}
	\put(259,-16){\vector(1,-3){43}}

	\put(-100,45){$\boxed{2}$}
	\put(-78,38){\vector(1,-1){32}}
	\put(7,49){\vector(1,-1){39}}
	\put(109,47){\vector(1,-1){37}}
	\put(209,47){\vector(1,-1){37}}
	
	\put(0,45){$\bullet$}
	\put(100,45){$\bullet$}
	\put(200,45){$\bullet$}
	\put(300,45){$\bullet$}
	
	\multiput(-77,50)(7,0){13}{$.$}
	\put(-5,50){\vector(1,0){5}}
	\multiput(0,50)(8,0){13}{$.$}
	\put(95,50){\vector(1,0){5}}
	\multiput(100,50)(8,0){13}{$.$}
	\put(195,50){\vector(1,0){5}}
	\multiput(200,50)(8,0){13}{$.$}
	\put(295,50){\vector(1,0){5}}
	
	\put(-45,57){\footnotesize $r^2$}
	\put(55,57){\footnotesize $r^2$}
	\put(155,57){\footnotesize $r^2$}
	\put(255,57){\footnotesize $r^2$}

	\put(390,-205){\small (111)$\;\;\;\;f^a$}
	\put(390,-155){\small (100)$\;\;\;\;f_a$}
	\put(390,45){\small (311)$\;\;\;\;f_{(bc)}^a$}
	\put(390,-5){\small (210)$\;\;\;\;f^{[ab]}_c$}
	\put(390,-55){\small (221)$\;\;\;\;f_{[bc]}^a$}
	\put(390,-105){\small (322)$\;\;\;\;f^{(ab)}_c$}
	\put(350,-155){\ldots}
	\put(350,-205){\ldots}
	\put(350,-55){\ldots}
	\put(350,-5){\ldots}
	\put(350,45){\ldots}
	\put(350,-105){\ldots}
	
\end{picture}

%\vskip5mm
\nopagebreak
\centerline{Diagram 6a. Isotypic component $\Is(m+2,m+1,0),\ m\geq 0.$}

\vskip5mm

{\bf B. The case $m= 0,$ i.e., isotypic component $\Is(100).$}

The structure of this isotypic component is described by Diagram 6b. The formulation of the theorem on Fischer decomposition of this isotypic component is
 quite analogous to Theorem \ref{fischer111}. 

\vskip 5mm
\setlength{\unitlength}{0.20mm}
\begin{picture}(200,340)(-220,-170)
	\put(-150,95){\tiny $0$}
	\put(-100,95){\tiny$1$}
	\put(-50,95){\tiny $2$}
	\put(-0,95){\tiny $3$}
	\put(50,95){\tiny $4$}
	\put(100,95){\tiny $5$}
	\put(150,95){\tiny $6$}
	\put(200,95){\tiny $7$}
	\put(250,95){\tiny $8$}
	\put(300,95){\tiny $9$}

\put(-80,40){\vector(1,-1){35}}
\put(9,45){\vector(1,-1){43}}
\put(109,45){\vector(1,-1){43}}
\put(209,45){\vector(1,-1){43}}

	\put(-100,45){$\boxed{2}$}
\put(0,45){$\bullet$}
\put(100,45){$\bullet$}
\put(200,45){$\bullet$}
\put(300,45){$\bullet$}

		\put(-150,-5){$\boxed{1}$}
	\put(-50,-5){$\bullet$}
	\put(50,-5){$\bullet$}
	\put(150,-5){$\bullet$}
		\put(250,-5){$\bullet$}
	\multiput(-125,-1)(7,0){10}{$.$}
		\multiput(-39,-1)(7,0){13}{$.$}
			\multiput(61,-1)(7,0){13}{$.$}
				\multiput(161,-1)(7,0){13}{$.$}
	\put(-52,-1){\vector(1,0){5}}
		\put(48,-1){\vector(1,0){5}}
			\put(148,-1){\vector(1,0){5}}
				\put(248,-1){\vector(1,0){5}}

\multiput(-75,49)(7,0){10}{$.$}
		\multiput(11,49)(7,0){13}{$.$}
			\multiput(111,49)(7,0){13}{$.$}
					\multiput(211,49)(7,0){13}{$.$}
				\put(-2,49){\vector(1,0){5}}
				\put(98,49){\vector(1,0){5}}
				\put(198,49){\vector(1,0){5}}
				\put(298,49){\vector(1,0){5}}

	\put(-130,12){\vector(1,1){30}}
\put(-43,2){\vector(1,1){44}}
\put(57,2){\vector(1,1){44}}
\put(157,2){\vector(1,1){44}}
\put(257,2){\vector(1,1){44}}

	\put(-45,57){\footnotesize $r^2$}
	\put(55,57){\footnotesize $r^2$}
	\put(155,57){\footnotesize $r^2$}
	\put(255,57){\footnotesize $r^2$}

	\put(390,45){\small (111)$\;\;\;\;f^{a}$}
	\put(390,-5){\small (100)$\;\;\;\;f_a$}

	\put(350,-5){\ldots}
	\put(350,45){\ldots}
%	\put(350,-105){\ldots}
	
\end{picture}
\vskip-25mm

\centerline{Diagram 6b. Isotypic component $\Is(100).$}

%\vfill

\section{Homogeneous solutions of GCR equations}\label{s_homog_sols}

In this section, we prove that homogeneous solutions of the GCR equation which take values in a fixed irreducible spin 3/2 module form an irreducible representation of $SL(4,\bC).$ 
Recall that 
$P(\la)$ is the space of $\la$-valued polynomials on $\bC^6$, and $P_m(\la)$ is the subspace of $m$-homogeneous polynomials $f\in P(\la)$.

\begin{thm}\label{thm:irreducible} Let $m\geq 0.$ Suppose that $\la$ is one of the modules
		\begin{equation}\label{eq:V}
		(3 0 0),\qquad (3 1 1),\qquad  (3 2 2),\qquad (3 3 3),\qquad
		(2 1 0),\qquad (2 2 1).
	\end{equation}
	The GCR operator $\cD^{\la}$ is given in Section \ref{ss_GCR}. Denote by
 $$\cM_m(\la)=\{f\in P_m(\la)|\ \cD^{\la}f=0\}$$ 
the $G$-module of $m$-homogeneous solutions of the GCR equation for $\la$.
Then $\cM_m(\la)$ is irreducible and $\cM_m(\la)\simeq\cH_m\boxtimes\la$. In particular, the $G$-module $\cM_m(\la)$ has the highest weight $(m+\la_1,m+\la_2,\la_3)$
if $\la=(\la_1,\la_2,\la_3)$.
\end{thm}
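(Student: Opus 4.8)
The plan is to identify $\cM_m(\la)$ with the Cartan submodule $\cH_m\boxtimes\la$ of $P_m(\la)$; since $\cH_m\simeq(mm0)$ by Proposition~\ref{l81}(ii), this module is irreducible of highest weight $m(1,1,0)+\la=(m+\la_1,m+\la_2,\la_3)$, so identifying it with $\cM_m(\la)$ proves everything at once. Throughout I would use the harmonic decomposition $P_m(\la)=\bigoplus_{j\ge 0}r^{2j}\bigl(\cH_{m-2j}\otimes\la\bigr)$ of Proposition~\ref{l81}(ii), the tensor product rules of Theorem~\ref{6.1}, the Fischer decompositions of the preceding section (Diagrams~1--6), and the adjointness $\bigl(D^{\la}_{\mu}\bigr)^{*}=z^{\mu}_{\la}$ of Proposition~\ref{p_duality}. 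Note first that, $\cD^{\la}$ being $G$-invariant and degree-lowering by one, $\cM_m(\la)=\ker\cD^{\la}\cap P_m(\la)$ is a direct sum of irreducible $G$-submodules of $P_m(\la)$, each of which is one of the boxes occurring in those diagrams.

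The inclusion $\cH_m\boxtimes\la\subseteq\cM_m(\la)$ I would get as follows. By construction $\cD^{\la}=\bigoplus_{\mu}D^{\la}_{\mu}$ maps $P_m(\la)$ into $\bigoplus_{\mu}P_{m-1}(\mu)$, with $\mu$ running through the non-Cartan constituents of $\la\otimes\bC^6$ named in Lemmas~\ref{5.1}--\ref{5.3}. By Schur's lemma the restriction of $\cD^{\la}$ to the irreducible module $\cH_m\boxtimes\la$ is injective or zero, and it is zero once one checks that the highest weight $(m+\la_1,m+\la_2,\la_3)$ does not appear in any $P_{m-1}(\mu)=\bigoplus_{j}r^{2j}\bigl(\cH_{m-1-2j}\otimes\mu\bigr)$ --- a short verification from Proposition~\ref{l81}(ii) and Theorem~\ref{6.1}; in terms of the diagrams, the rows into which $\cD^{\la}$ maps the bottom box of the $\la$-row carry nothing of homogeneity $m-1$. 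Alternatively one substitutes the explicit highest weight vector of $\cH_m\boxtimes\la$, of the shape $(z^{12})^{m}\delta^1_a\delta^1_b\delta^1_c$ for $\la=(300)$ and of the analogous shapes for the other five modules (as computed in the proofs of Theorems~\ref{fisher00} and~\ref{fischer11}), into the tensorial expressions for $\cD^{\la}$ in Lemmas~\ref{5.1}--\ref{5.3} and observes that it dies.

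For the opposite inclusion $\cM_m(\la)\subseteq\cH_m\boxtimes\la$ I would pass to orthogonal complements. By Proposition~\ref{p_duality}, $\cM_m(\la)=\bigl(\sum_{\mu}z^{\mu}_{\la}(P_{m-1}(\mu))\bigr)^{\perp}\cap P_m(\la)$. Now the Fischer diagrams describe precisely where the multiplication operators $z^{\mu}_{\la}$ send each box: in every isotypic component they sweep out the whole $\la$-row except, in the single component $\Is(m+\la_1,m+\la_2,\la_3)$, its bottom box $\cH_m\boxtimes\la$ --- which, exactly as in the previous paragraph, is not hit because the source rows are empty in that homogeneity, while all higher $r^{2j}$-shifts of it are hit through the relations $z^{\mu}_{\la}z^{\la}_{\mu}\doteq r^2$ that also appear in the proofs of Theorems~\ref{fisher00} and~\ref{fischer11}. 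Hence the orthogonal complement in $P_m(\la)$ of $\sum_{\mu}z^{\mu}_{\la}(P_{m-1}(\mu))$ is exactly $\cH_m\boxtimes\la$. (In particular one recovers that every homogeneous solution is harmonic, since one of those relations gives $r^2$ up to a scalar, so $\cM_m(\la)\subseteq(r^2P_{m-2}(\la))^{\perp}\cap P_m(\la)=\cH_m(\la)$ by Remark~\ref{r_duality}.) Combining the two inclusions gives $\cM_m(\la)=\cH_m\boxtimes\la$, irreducible of highest weight $(m+\la_1,m+\la_2,\la_3)$.

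The main obstacle is making the phrase ``the $z^{\mu}_{\la}$ sweep out the whole $\la$-row except the bottom box'' rigorous: one must show that \emph{every} non-Cartan constituent of $\cH_m\otimes\la$ (and every $r^{2j}$-shift) is genuinely in the image of some $z^{\mu}_{\la}$, equivalently genuinely moved by $\cD^{\la}$. This is exactly the content of the highest-weight-vector computations carried out in the proofs of the Fischer decomposition theorems, and the delicate case is $\la\in\{(311),(322),(210),(221)\}$: there the relevant isotypic components (Diagrams~2, 3, 5a, 6a) contain multiplicity-two boxes such as $\boxed{\begin{matrix}5\\6\end{matrix}}$, and one must check that the two arrows landing in such a box together form an isomorphism onto it, so that no harmonic subpiece lies simultaneously in all three kernels $\ker D^{\la}_{\mu_1}$, $\ker D^{\la}_{\mu_2}$, $\ker D^{\la}_{\mu_3}$ --- which is precisely what the identities established there (e.g.\ equations~\eqref{eq:(311)} and~\eqref{eq:(322)}) guarantee.
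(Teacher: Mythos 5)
Your proposal is correct and follows essentially the same route as the paper: the paper's proof also runs through the isotypic components one by one, uses the Fischer--decomposition isomorphisms (including the sums of two arrows onto the multiplicity-two boxes) to show that the multiplication operators $z^{\mu}_{\la}$ are surjective onto every homogeneous piece of the $\la$-row above the bottom box, and then invokes Proposition \ref{p_duality} to conclude that the kernels of the corresponding $D^{\la}_{\mu}$ are trivial there, leaving only $\cH_m\boxtimes\la$. Your orthogonal-complement phrasing is just the dual packaging of that same argument, and you correctly isolate the one delicate point (the multiplicity-two boxes) that the paper handles by the explicit highest-weight-vector computations.
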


\begin{proof}  We prove the theorem for $\la=(311)$. It is similar for other cases.
In this case, we have
	\begin{equation}\label{eq:D-311}
	 { \mathcal{D}^{311}}=   D^{311}_{322}\oplus   D^{311}_{210} \oplus    D^{311}_{300}.
	\end{equation}
We show that $\cM_m(\la)$	is isomorphic to the irreducible module $\boxed{1} $ of Diagram~2.

	Fields with values in the module $(311)$ appear in several isotypic components. We shall discuss
	them one after another. For $\mu\in Spec(\cV)$, we put $P_m(\la)_{\mu}=P_m(\la)\cap \Is(\mu)$.
	
	1.\ {Isotypic components $\Is( m+3,m ,0 )$}, see{ Diagram}  1. Note that $ P_h(311)_{ m+3,m ,0 }=0$ unless $h=m+1+2 a $ and $ P_h(300)_{ m+3,m ,0 }=0$  unless $h=m  +2a$, where $a\in \mathbb{N}_0 $. Moreover,
	\begin{equation*}\begin{split}
			P_{m+ 1+2a}(311)_{ m+3,m ,0 }&=r^{2 a  } \boxed{2}   ,\\
			P_{m  +2a}(300)_{ m+3,m ,0 }&=r^{2 a  } \boxed{1} .
	\end{split}\end{equation*}
Since the map
	$  \boxed{1}   \rightarrow \boxed{2}
	$ is an isomorphism by Theorem \ref{fischer11},
	we have the isomorphism
	\begin{equation*}
		z^{300}_{311} :P_h(300)_{ m+3,m ,0 } \rightarrow P_{h+1}(311)_{ m+3,m ,0 }
	\end{equation*}
	for any $h$. Therefore, the kernel of
	\begin{equation*}
		D^{311}_{300}  : P_{h+1}(311)_{ m+3,m ,0 } \rightarrow  P_h(300)_{ m+3,m ,0 }
	\end{equation*}
	is trivial for any $h$.

	2.\ {Isotypic components $\Is( m+3,m+1,1  )$, see{ Diagram}  2}. We know that $ P_h(311)_{ m+3,m+1,1 }=0$ unless $h=m +2 a $ and $ P_h(300)_{ m+3,m+1,1 }=0= P_h(322)_{ m+3,m+1,1 } $ unless $h=m+1 +2a$ for some $a\in \mathbb{N}_0 $. Moreover, we have
	\begin{equation*}\begin{split}
			P_{m+2a}(311)_{ m+3,m+1,1 }&=r^{2(a-1)} \boxed{5}\oplus  { r^{2 a  } \boxed{1} } ,\\ P_{m+1 +2a}(322)_{ m+3,m+1,1 }&=r^{2a} \boxed{3}\oplus r^{2(a-1)} \boxed{9},\\
			P_{m+1 +2a}(300)_{ m+3,m+1,1 }&=r^{2a} \boxed{2}.
	\end{split} \end{equation*}
	Since the map
	$ z^{300}_{311}\oplus
	z^{322}_{311}: \boxed{2}\oplus \boxed{3}  \rightarrow \boxed{5}\oplus  \boxed{6}
	$ is an isomorphism by Theorem \ref{fischer11},
	\begin{equation*}
		z^{300}_{311}\oplus
		z^{322}_{311}:P_h(300)_{ m+3,m+1,1 } \oplus P_h(322)_{ m+3,m+1,1 } \rightarrow P_{h+1}(311)_{ m+3,m+1,1 }
	\end{equation*}
	is surjective for any $h\geq m+1$. Consequently, by Proposition \ref{p_duality},  the kernel of
	\begin{equation*}
		D^{311}_{300} \oplus D^{311}_{322} : P_{h+1}(311)_{ m+3,m+1,1 } \rightarrow  P_h(322)_{ m+3,m+1,1 } \oplus P_h(300)_{ m+3,m+1,1 }
	\end{equation*}
	is $0$ for any $h\geq m+1$. Thus,  there is only one irreducible component $P_{m}(311)_{ m+3,m+1,1 }=\boxed{1} $  in the kernel of $ \mathcal{D}^{311}$.
	
	3.\
	{Isotypic components $\Is( m+3,m+2,2 )$, see{ Diagram}  3}. Note that  $ P_h(311)_{ m+3,m+2,2 }=0$ unless $h=m+1+2(a-1)$ and $ P_{h }(300)_{ m+3,m+2,2 }=0$, $ P_h(322)_{ m+3,m+2,2 }=0$  unless $h=m +2a$ for some $a\in \mathbb{N} $. Moreover, we have
	\begin{equation*}\begin{split}
			P_{m+1+2a}(311)_{m+3,m+2,2 }&=r^{2(a-1)} \boxed{9} \oplus r^{2 (a-1)  } \boxed{10} ,\\
			P_{m +2a}(300)_{ m+3,m+2,2 }&=r^{2(a-1)} \boxed{7}\\
			P_{m +2a}(322)_{ m+3,m+2,2 }&=r^{2(a-1)} \boxed{5}\oplus r^{2(a-1)} \boxed{6}.
	\end{split}\end{equation*}Since the map
	$  \boxed{3}\oplus \boxed{7}  \rightarrow \boxed{9}\oplus  \boxed{10}
	$ is an isomorphism,
	we have the isomorphism
	\begin{equation}\label{eq:300-311-322'} z^{300}_{311}\oplus
		z^{322}_{311}:  \boxed{6}\oplus \boxed{7}  \rightarrow \boxed{9}\oplus  \boxed{10}.
	\end{equation}This is because the map $ \boxed{6}  \rightarrow   \boxed{10}$ is just the map $ \boxed{1}  \rightarrow   \boxed{3}$ shifted by  $r^2$.
	Consequently,
	\begin{equation*}
		z^{300}_{311}\oplus
		z^{322}_{311}:P_h(300)_{ m+3,m+2,2 } \oplus P_h(322)_{ m+3,m+2,2 } \rightarrow P_{h+1}(311)_{m+3,m+2,2 }
	\end{equation*}
	is surjective for any $h$. Therefore, the kernel of
	\begin{equation*}
		D^{311}_{300} \oplus D^{311}_{322} : P_{h+1}(311)_{ m+3,m+2,2 } \rightarrow  P_h(322)_{ m+3,m+2,2 } \oplus P_h(300)_{ m+3,m+2,2}
	\end{equation*}
	is trivial for any $h$.
	
	4.\   {Isotypic components $\Is( m+3,m +3,3 )$, see { Diagram}  4}. We know that $ P_h(311)_{ m+3,m +3,3 }=0$ unless $h=m+2+2a$ and $ P_h(322)_{ m+3,m +3,3 }=0$  unless $h=m +1 +2a$ for some $a\in \mathbb{N}_0 $. Moreover, 
	\begin{equation*}\begin{split}
			P_{m+ 2+2a}(311)_{ m+3,m +3,3 }&=r^{2 a  } \boxed{3}   ,\\
			P_{m +1 +2a}(322)_{ m+3,m +3,3 }&=r^{2 a  } \boxed{2} .
	\end{split}\end{equation*}Since the map
	$  \boxed{2}   \rightarrow \boxed{3}
	$ is an isomorphism by Theorem \ref{fischer11},
	we have the isomorphism
	\begin{equation*}
		z^{322}_{311} :P_h(322)_{ m+3,m +3,3 } \rightarrow P_{h+1}(311)_{ m+3,m +3,3 }
	\end{equation*}
	for any $h$. Therefore, the kernel of
	\begin{equation*}
		D^{311}_{322}  : P_{h+1}(311)_{ m+3,m +3,3 } \rightarrow  P_h(322)_{ m+3,m +3,3 }
	\end{equation*}
	is trivial for any $h$.
	
	5.\   {Isotypic components $\Is(  m+2,m+2,1)$, see   { Diagram}    5}. Note that  $ P_h(311)_{ m+2,m+2,1 }=0$ unless $h=m+2+2a$ and $ P_h(210)_{  m+2,m+2,1 }=0$  unless $h=m  +1+2a$ for some $a\in \mathbb{N}_0 $. Moreover,
	\begin{equation*}\begin{split}
			P_{m+ 2+2a}(311)_{ m+2,m+2,1 }&=r^{2 a  } \boxed{7}   ,\\
			P_{m +1+2a}(210)_{ m+2,m+2,1 }&=r^{2 a  } \boxed{3} .
	\end{split}\end{equation*}Since the map
	$  \boxed{3}   \rightarrow \boxed{7}
	$ is an  isomorphism,  
	\begin{equation*}
		z^{210}_{311} :P_h(210)_{ m+2,m+2,1 }  \rightarrow P_{h+1}(311)_{ m+2,m+2,1 }
	\end{equation*}is surjective
	for any $h$. Therefore, the kernel of the map
	\begin{equation*}
		D^{311}_{210}  : P_{h+1}(311)_{ m+2,m+2,1 } \rightarrow  P_h(210)_{ m+2,m+2,1 }
	\end{equation*}
	is trivial for any $h$.
	
	6.\  {Isotypic components $\Is(  m+2,m+1 ,0 )$, see    { Diagram}   6}. Note that  $ P_h(311)_{ m+2,m+1 ,0 }=0$ unless $h=m+1+2(a-1)$,   while $ P_h(210)_{ m+2,m+1,0 }=0$  unless $h=m  +2a$ for some $a\in \mathbb{N}_0$. In the later case,
	\begin{equation*}\begin{split}
			P_{m+ 2+2a}(311)_{ m+2,m+1 ,0 }&=r^{2 a  } \boxed{2}   ,\\
			P_{m +1+2a}(210)_{ m+2,m+1 ,0 }&=r^{2 a  } \boxed{1} .
	\end{split}\end{equation*}Since the  map
	$  \boxed{1}   \rightarrow \boxed{2}
	$ is an isomorphism, the map  
	\begin{equation*}
		z^{210}_{311} :P_h(210)_{ m+2,m+1 ,0 } \rightarrow P_{h+1}(311)_{ m+2,m+1 ,0 }
	\end{equation*}  is surjective
	for any $h$. Therefore, the kernel of
	\begin{equation*}
		D^{311}_{210}  : P_{h+1}(311)_{ m+2,m+1 ,0 } \rightarrow  P_h(210)_{ m+2,m+1,0 }
	\end{equation*}
	is trivial for any $h$.
	
	To summarize we get that the kernel of the operator   $\mathcal{D}^{311}$  is just the irreducible module $\boxed{1} $ of Diagram~2.
\end{proof}

\subsection{Monomial bases of homogeneous solutions}

In this section, we explain how to construct bases of the spaces $\cM_m(\la)$ of homogeneous solutions.
In \cite{MoYa}, an explicit construction of the so-called monomial bases is given for all finite dimensional irreducible representations of a~reductive Lie algebra. 

We recall this construction for a~given irreducible $G$-module $F_{\la}$ with the highest weight $\la=(\la_1,\la_2,\la_3,0)$. Let $f_{\la}$ be a~highest weight vector in $F_{\la}$. As is well-known we can generate the whole module $F_{\la}$ by applying of lowering operators to $f_{\la}$. 
The Lie algebra of $G=SL(4,\bC)$ is $g=sl(4,\bC)$ and 
the lowering operators are given by the derived action of the matrices  $E^i_j$ on $F_{\la}$ for $1\leq i<j\leq 4$. Here  $E^i_j$ is a~square matrix of size 4 with 1 at place $ij$ and 0 otherwise. Actually, we can say more. Indeed, a~{\it Gelfand-Tsetlin pattern} $\Lambda$ associated with $\la$ is an array of integers 
$$
\begin{matrix}
\la_4^1&& \la_4^2 && \la_4^3 && \la_4^4 \\
 &\la_3^1&& \la_3^2 && \la_3^3 \\
&& \la_2^1 && \la_2^2 \\
&&& \la_1^1
\end{matrix}
$$   
where the first row equals to $\la$ and $\la_k^i\geq \la_{k-1}^i\geq \la_k^{i+1}$ for each $k=2,3,4$ and $i=1,\ldots,k-1$. Then the following result holds true, see \cite[Theorem A]{MoYa}.

 \begin{thm}\label{t_basis}
 Denote by $\{\psi_{\Lambda}\}$ the set of vectors
$$\psi_{\Lambda}=(E^1_2)^{\la^1_2-\la^1_1}(E^1_3)^{\la^1_3-\la^1_2}(E^2_3)^{\la^2_3-\la^2_2}(E^1_4)^{\la^1_4-\la^1_3}(E^2_4)^{\la^2_4-\la^2_3}(E^3_4)^{\la^3_4-\la^3_3} f_{\la}$$
where $\Lambda$ is taken over all Gelfand-Tsetlin patterns associated with $\la$. Then $\{\psi_{\Lambda}\}$ forms a~basis of $G$-module $F_{\la}$.

\end{thm}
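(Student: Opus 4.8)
The statement is \cite[Theorem A]{MoYa} specialised to $\mathfrak{sl}(4,\bC)$, so I will only outline the strategy. Throughout I would replace $SL(4,\bC)$ by $GL(4,\bC)$: every irreducible $SL(4,\bC)$-module extends to $GL(4,\bC)$ and $\la=(\la_1,\la_2,\la_3,0)$ is a $GL(4,\bC)$-dominant weight, so nothing is lost. The first step is a counting argument: the number of Gelfand--Tsetlin patterns with first row $\la$ equals $\dim F_\la$, this being the classical Gelfand--Tsetlin parametrisation of a basis of $F_\la$ (equivalently, a restatement of the Weyl dimension formula). Hence it suffices to prove that the family $\{\psi_\La\}$ is linearly independent, or, given the count, that it spans $F_\la$.

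The second step is to recognise the combinatorial shape of the family. Reading off the six exponents in $\psi_\La$ produces a vector $(c_1,\dots,c_6)$ with $c_1=\la^1_2-\la^1_1$, $c_2=\la^1_3-\la^1_2$, $c_3=\la^2_3-\la^2_2$, $c_4=\la_1-\la^1_3$, $c_5=\la_2-\la^2_3$, $c_6=\la_3-\la^3_3$. Since $\la$ is fixed, one recovers $\La$ from $(c_1,\dots,c_6)$ by solving these relations in reverse (first $\la^1_3,\la^2_3,\la^3_3$ from $c_4,c_5,c_6$, then $\la^2_2$ from $c_3$, then $\la^1_2$ from $c_2$, then $\la^1_1$ from $c_1$); thus $\La\mapsto(c_1,\dots,c_6)$ is a bijection of the Gelfand--Tsetlin patterns with first row $\la$ onto the integer points of a convex polytope (the inequalities $c_k\ge 0$ together with the remaining interlacing conditions $\la^i_{k-1}\ge\la^{i+1}_k$). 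Moreover, in type $A_3$ the $E^i_j$ with $i<j$ are, up to scalars, the lowering root vectors indexed, in the order $E^1_2,E^1_3,E^2_3,E^1_4,E^2_4,E^3_4$, by the positive roots $\alpha_1,\ \alpha_1+\alpha_2,\ \alpha_2,\ \alpha_1+\alpha_2+\alpha_3,\ \alpha_2+\alpha_3,\ \alpha_3$; this is a convex order, coming from the ``staircase'' reduced word $w_0=s_1s_2s_1s_3s_2s_1$ that underlies Gelfand--Tsetlin theory. So each $\psi_\La$ is a Poincar\'e--Birkhoff--Witt monomial, in this convex order, applied to the highest weight vector $f_\la$.

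The third step deduces basis-ness from the theory of PBW/string parametrisations: for a convex order on positive roots the monomials $f_{\beta_1}^{c_1}\cdots f_{\beta_6}^{c_6}f_\la$ span $F_\la$, and the exponent vectors for which they actually form a basis are the integer points of the associated string polytope, which in type $A$ for the staircase word is exactly the Gelfand--Tsetlin polytope (Littelmann; Berenstein--Zelevinsky). Combined with the bijection of the previous step, this gives that $\{\psi_\La\}$ is a basis. A more hands-on route: $\psi_\La$ lies in the weight space of $F_\la$ of weight $\mathrm{wt}(v_\La)$ (a direct check on the exponents shows this weight coincides with that of the classical Gelfand--Tsetlin vector $v_\La$), and one proves by induction on the rank, using the $\mathfrak{gl}$-bracket relations, that the transition matrix $\psi_\La=\sum_{\La'}t_{\La\La'}v_{\La'}$ (sum over patterns $\La'$ of the same weight) is triangular with nonzero diagonal entries for a suitable order on those patterns; then $\{\psi_\La\}$ is a basis because $\{v_\La\}$ is.

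I expect the real obstacle to be precisely the point compressed into the third step: showing that this particular family of PBW monomials realises a basis ``on the nose'' rather than a spanning family with redundancy --- equivalently, that the non-vanishing of $f_{\beta_1}^{c_1}\cdots f_{\beta_6}^{c_6}f_\la$ and their linear independence both hold exactly over the Gelfand--Tsetlin polytope. Running the leading-term argument requires controlling the straightening relations produced when an arbitrary monomial in the $E^i_j$ is rewritten in the fixed PBW order, and then checking that the leading terms separate the patterns; this bookkeeping is the technical core of \cite{MoYa}. In the present rank it is finite and could be done by hand, but it is long, so in the paper I would simply quote \cite{MoYa}.
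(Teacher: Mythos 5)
The paper offers no proof of this theorem at all: it is stated as a quotation of \cite[Theorem A]{MoYa}, which is exactly what you conclude you would do. Your supporting outline (counting Gelfand--Tsetlin patterns, recognising the monomials as PBW monomials for the convex order attached to the staircase reduced word, and reducing the issue to the identification of the string polytope with the Gelfand--Tsetlin polytope) is consistent with that reference and correctly locates the technical core, so the proposal matches the paper's approach.
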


It is clear that to construct a~basis for $\cM_m(\la)$ using Theorem \ref{t_basis} we need to describe explicitly the action of lowering operators $E^i_j$ and a~highest weight vector in $\cM_m(\la)$. Since $\cM_m(\la)\simeq\cH_m\boxtimes F_{\la}$ it is sufficient  to do it for the modules $\cH_m$ and $F_{\la}$.
We give such a~description in the following examples. Of course, it would be interesting and useful to express basis elements $\psi_{\Lambda}$ in terms of classical special polynomials, as was done in dimension 4 in \cite{LSW}.

\begin{ex}
On the space $\Pol$ of complex valued polynomials in the variables $z^{ab}$, we have $E^i_j=z^{ja}\nabla_{ia}$ and $(z^{12})^m$ is a~highest weight vector in $\cH_m$.
\end{ex}

\begin{ex}\label{hwv_values}
By \eqref{eq:action}, for $s\in S$, we have $(E^i_j\cdot s)_a=s_i\delta^j_a$ and, for $t\in S^*$, we have $(E^i_j\cdot t)^a=-t^j\delta^a_i$. Thus, highest weight vectors in 
$$ S^a,\ S_a,\ S^{(abc)},\ S_{(abc)},\ \left(S^{(ab)}_{c}\right)_0,\ \left(S^{[ab]}_{c}\right)_0,\ \left(S_{(ab)}^c\right)_0,\ \left (S_{[ab]}^c\right)_0$$ are, respectively,
$$\delta^a_4,\ \delta^1_a,\ \delta^a_4\delta^b_4\delta^c_4,\ \delta^1_a\delta^1_b\delta^1_c,\ \delta^a_4\delta^b_4\delta^1_c,\ 
\delta^{[a}_3\delta^{b]}_4\delta^1_c,\ \delta^1_a\delta^1_b\delta^c_4,\ \delta^1_{[a}\delta^2_{b]}\delta^c_4.
$$
\end{ex}

\begin{ex} For $p\otimes f\in\Pol\otimes F_{\la}$, we have $$E^i_j\cdot(p\otimes f)=(E^i_j\cdot p)\otimes f+p\otimes E^i_j\cdot f$$
Therefore, if $v_{\la}$ is a~highest weight vector in $F_{\la}$ as in Example \ref{hwv_values} above,
then $\cM_m(\la)\simeq\cH_m\boxtimes F_{\la}$ has a~highest weight vector $(z^{12})^m v_{\la}$.
In particular, for $\la=(300)$, we have $v_{\la}=\delta^1_a\delta^1_b\delta^1_c$ and $\cM_m(\la)$ has a~highest weight vector  $f_{\widetilde\la}=(z^{12})^m v_{\la}$ with $\widetilde\la=(m+3,m,0,0)$.
Then a~basis of $\cM_m(\la)$ is formed by the polynomials $\psi_{\Lambda}$ given in Theorem \ref{t_basis} above where $\Lambda$ is taken over all Gelfand-Tsetlin patterns associated with $\widetilde\la$.
\end{ex}

\section{ The Howe dual pair decomposition}
The general duality theorem for group representation is described
in \cite[Chapter 4, Theorem 4.2.1]{GW}.
In our situation, we  are going to prove the following version
of the duality theorem.

\begin{thm}[Howe duality theorem]\label{Howe}
Recall that $G=SL(4,\bC)$ and we consider the $G$-module
	$$V=(300)\oplus(311)\oplus(322)\oplus(333)\oplus(221)\oplus(210)\oplus(110)\oplus(100).$$
	Then the space $\mathcal{V}=\Pol(V)$  decomposes under the action of $G\times \mathcal{A}^G$
	as
	$$
	\cV\simeq \bigoplus_{\la\in Spec(\mathcal{V})} Is(\la)
	$$
	where $Spec(\mathcal{V})$ is given in Theorem 4.1 and $\mathcal{A}^G$ is introduced in Definition 3.2.
	All isotypic components $Is(\la)$ are irreducible modules under the action of $G\times\mathcal{A}^G.$
	In addition, $$Is(\la)\simeq F_\la\otimes L_\la$$ where $F_\la$ is an irreducible $G$-module of the highest weight $\la$ and $L_\la$ is an irreducible module for $\mathcal{A}^G.$

\end{thm}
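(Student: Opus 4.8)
The plan is to deduce the statement from the explicit Fischer decompositions of Section~5 together with the adjointness between multiplication operators and differential operators established in Proposition~\ref{p_duality} and Remark~\ref{r_duality}. \textbf{Reduction.} Every operator in $\cA^G$ is $G$-equivariant, hence preserves each isotypic component of $\cV$; so the isotypic decomposition $\cV=\bigoplus_{\la\in Spec(\cV)}\Is(\la)$ of Theorem~\ref{spectrum} is already a decomposition into $G\times\cA^G$-submodules, and the summands are pairwise non-isomorphic (pairwise distinct $G$-types). Writing $L_\la:=\operatorname{Hom}_G(F_\la,\Is(\la))$ for the multiplicity space, on which $\cA^G$ acts, one has the canonical $G\times\cA^G$-isomorphism $\Is(\la)\simeq F_\la\otimes L_\la$; by Schur's lemma (this is the general principle of \cite{GW}) the module $\Is(\la)$ is $G\times\cA^G$-irreducible precisely when $L_\la$ is an irreducible $\cA^G$-module, and then $L_\la$ is the module of the statement. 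Everything therefore reduces to proving that $L_\la$ is $\cA^G$-irreducible for every $\la\in Spec(\cV)$.

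\textbf{Gradedness.} The homogeneity grading of $\cV=\Pol\otimes V$ restricts to $\Is(\la)=\bigoplus_h\Is(\la)_h$ with finite-dimensional graded pieces, hence $L_\la=\bigoplus_h(L_\la)_h$. The Euler operator $E$ lies in $\cA^G$: indeed $\Delta,r^2\in\cA^G$ (Section~5), so $E=\tfrac14[\Delta,r^2]-3\in\cA^G$ by Proposition~\ref{l81}, and $E$ acts on $(L_\la)_h$ by the scalar $h$. Consequently any $\cA^G$-submodule $W\subseteq L_\la$ is $E$-stable, and since the $(L_\la)_h$ are the distinct eigenspaces of the locally finite operator $E$, the usual Lagrange-interpolation argument yields $W=\bigoplus_h\bigl(W\cap(L_\la)_h\bigr)$; so every $\cA^G$-submodule of $L_\la$ is graded.

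\textbf{Connectivity (the heart of the matter).} Let $0\neq W\subseteq L_\la$ be an $\cA^G$-submodule; by the previous step pick a nonzero homogeneous $w\in W\cap(L_\la)_h$, that is, a nonzero vector of the one- or two-dimensional multiplicity space sitting at column $h$ of the diagram attached to $\la$ (one of Diagrams~1--6b). All operators drawn in that diagram lie in $\cA^G$: the full arrows are the multiplication operators $z^{\mu}_{\nu}$, the dotted arrows are multiplication by $r^2$, their inverses are the differential operators $D^{\nu}_{\mu}$ and $\Delta$ --- these are the block entries of $D^{\pm}_{\mp}$, $z^{\pm}_{\mp}$ together with compositions such as $r^2\doteq z^{311}_{300}z^{300}_{311}$ identified in Section~5 --- and the remaining arrows are blocks of $D^-_+$. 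Now the Fischer theorems of Section~5 (Theorems~\ref{fisher00}, \ref{fischer11}, \ref{fischer111} and the analogous statements for Diagrams~3,~4,~5a,~6a,~6b) tell us precisely: every arrow between two multiplicity-one boxes is an isomorphism; for each multiplicity-two box the sum of the two incoming (resp.\ outgoing) arrows coming from (resp.\ going to) two distinct multiplicity-one boxes is an isomorphism; and the relevant block of $D^-_+$ is nonzero on the ``kernel'' lines $\boxed{5}$ and $\boxed{9}$ (and their duals). From this one reads off: if $w$ lies in a multiplicity-one box, that box, and then by connectedness through isomorphisms every other box, is contained in $W$; if $w$ lies in a multiplicity-two box $N$, an outgoing arrow (or $\Delta$) maps $w$ isomorphically onto a multiplicity-one box unless $w$ spans the kernel line, which is the isomorphic image under a block of $D^-_+$ of a multiplicity-one box, so again $W$ meets a multiplicity-one box; and once $W$ contains the two distinct multiplicity-one boxes feeding a multiplicity-two box $N$, the ``sum is an isomorphism'' property forces $N\subseteq W$. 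Running through the three diagram shapes that occur --- Diagram~1 ($=$Diagram~4 by duality), Diagram~2 ($=$Diagrams~3,~5a,~6a), and the trivial Diagrams~5b,~6b --- gives $W=L_\la$ in every case; this is exactly what the highest-weight-vector computations in the proofs of Theorems~\ref{fisher00} and \ref{fischer11} already carry out. Hence $L_\la$ is irreducible, which proves the theorem. Note that $L_\la$ is always infinite-dimensional, owing to the $r^{2j}$-tower, so $\cA^G$ cannot be replaced here by a finite-dimensional reductive Lie algebra.

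\textbf{Main obstacle.} The reduction and gradedness steps are routine; the real work, and the only genuinely delicate point, is the connectivity bookkeeping at the multiplicity-two components: one must guarantee that a single irreducible $G$-copy lying \emph{diagonally} inside such a component, acted on only by the operators actually available in $\cA^G$ --- in particular by the $D^-_+$-blocks that are nonzero on the kernel lines --- nevertheless generates the full two-dimensional multiplicity space together with all adjacent boxes. This is precisely what the explicit computations with highest weight vectors in Section~5 are arranged to supply, so once those are in hand the present theorem follows by the formal argument above.
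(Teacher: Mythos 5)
Your overall strategy coincides with the paper's: show that any nonzero $G\times\cA^G$-invariant subspace of an isotypic component must contain the generating multiplicity-one module $\boxed{1}$ by pushing a nonzero element around the diagrams of Section~5. The repackaging via the multiplicity space $L_\la=\operatorname{Hom}_G(F_\la,\Is(\la))$ and the observation that the Euler operator $E\in\cA^G$ forces every $\cA^G$-submodule to be graded are clean and correct additions (the paper instead applies a power of $\Delta$ directly to a general element).

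There is, however, a concrete gap in your connectivity step. You assert that a homogeneous $w\in W\cap(L_\la)_h$ is ``a nonzero vector of the one- or two-dimensional multiplicity space sitting at column $h$'', and your case analysis then assumes $w$ lies in a single box. This is false: a column of Diagram~2 (equivalently Diagrams 3, 5a, 6a) collects contributions from several \emph{rows} (value modules) simultaneously, so $(L_\la)_h$ can be four-dimensional --- e.g.\ at homogeneity $m+2$ one has $\boxed{5}\oplus\boxed{6}$, $\boxed{7}$ and $\boxed{8}$, and at $m+3$ one has $r^2\boxed{2}$, $\boxed{9}\oplus\boxed{10}$ and $r^2\boxed{4}$. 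Even after reducing to a harmonic element, $w$ may be a sum $w_5+w_7+w_8$ with all three projections nonzero, and then no single arrow of the diagram is guaranteed to send $w$ injectively into a multiplicity-one box: an operator that is an isomorphism on one summand may cancel against its action on another. The paper's proof resolves exactly this by fixing an order on the boxes and, for each ``leading'' nonzero component, exhibiting an explicit composition (e.g.\ $D^+_-\circ D^-_+\circ D^-_+$ for $\boxed{9}$, $D^+_-\circ D^-_+$ for $\boxed{5}$, $D^-_+\circ D^+_-$ for $\boxed{8}$, together with the highest-weight-vector check that $D^{322}_{311}D^{221}_{322}+D^{210}_{311}D^{221}_{210}$ is nonzero on $\boxed{8}$) which \emph{annihilates all the other harmonic components} while acting nontrivially on the leading one. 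You name this bookkeeping as the main obstacle and defer it to the Section~5 computations, but those computations establish which single arrows and sums of arrows are isomorphisms; they do not by themselves supply the annihilation properties of the specific compositions needed to separate a mixed element. That case-by-case argument is the actual content of the paper's proof and is missing from yours.
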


\begin{rem} (1)	Main features of the theorem are similar as in the   { general duality theorem} treated 
	in \cite[Chapter 4]{GW}.  Our main goal is
	to understand the structure of individual isotypic components, the proof
	 of the Howe duality theorem is then easy to get.
	
\smallskip\noindent
(2) In classical examples of Howe duality, it is possible to show that the (infinite dimensional) algebra of invariant operators acting on isotypic components is the universal enveloping algebra of
a finite-dimensional Lie (super)algebra. In our case, it is not visible how to do that.
A~related duality theorem is presented in \cite[Appendix 1]{CKW} and  in \cite{CW},
where the algebra of invariant operators is the universal enveloping algebra of a (finite dimensional)
Lie superalgebra. In the particular case of spin $3/2$  fields, the vector space $V$ in
$\mathcal{V}=\Pol(V)$
used in the present paper is  replaced in \cite{CKW} by a bigger space $V',$ which is the sum
of the same spinor spaces but with higher multiplicities. It is not visible how to relate this to our results.
Similar effects can be expected for all spins higher or equal to $3/2,$ while for spins $0,$ $1/2$ and $1,$
the results of \cite{CKW} can be used because there are no multiplicities in $V'.$ It hence is also related
to the case of spin $1$ treated in \cite{L}.

\smallskip\noindent
(3)
The definition of the algebra $\mathcal{A}^G$ used in the paper is not the only possibility.
Our choice was motivated by an attempt to make the algebra possibly smaller.
	
\end{rem}	
\begin{proof}   { Since} the  proof is not uniform and it depends on individual isotypic component, we shall treat them one by one.

\vskip 2mm
\noindent 
 {\bf (a) Isotypic component $\Is(m+2,m,0)$} 
  
 The structure of the isotypic component is given in Diagram 1.
 Suppose that $W$ is a subspace in this isotypic component invariant  with respect to the action
 of $G\times\cA^G $ and $f\in W$ is a nontrivial element.  Consider a nonnegative integer $k$
 with the property that $g:=\Delta^k(f)\not=0$ and $\Delta^{k+1}(f)=0.$ Then $g$ is nontrivial and harmonic element in $W.$
 
 There is a nonnegative integer $j$ such that $h:=(D^-_+)^{ { j}}(g)\not=0$ and $(D^-_+)^{ {j+1}}(g)=0.$
 Note that the operator $D^-_+$ maps rows in Diagram 1 to the next higher row.
 Hence the nontrivial element $h\in W$ is harmonic and belongs to $\boxed{1}.$ 
 But $\boxed{1}$ generates the whole isotypic component, which is hence irreducible.  
 
\vskip 4mm
\noindent 
{\bf (b) Isotypic component  
	 $\Is(m+3,m+1,1)$} 
	
	See Diagram 2.
Consider again a  subspace $W$ in this isotypic component invariant under the action of
	$G\times\mathcal{A}^G $ and let $f$ be its nontrivial element.
	Applying a suitable power of the Laplace operator, we can suppose that $f$ is harmonic, i.e.,
	it belongs to the direct sum $\boxed{1}\oplus\boxed{2}\oplus\boxed{3}\oplus\boxed{4}\oplus\boxed{5}\oplus\boxed{7}\oplus\boxed{8}\oplus\boxed{9}.$

	If the component of $f$ in   $\boxed{9}$ is nontrivial, then we apply  
	the operator $D^+_-\circ D^-_+\circ D^-_+$ to get a nontrivial element in $\boxed{1}$
	which generates everything.  
	
	Consider next the case that the component of $f$ in  $\boxed{9}$
	is trivial.
If the component of $f$ in   $\boxed{5}$ is nontrivial, then we apply  
 the operator $D^+_-\circ D^-_+.$ The result is a nontrivial element in  $\boxed{1},$
 because all  components of $f$ different from $\boxed{5}$ are killed.
 	
 So we can suppose that the components of $f$ in  $\boxed{5}$ and   $\boxed{9}$
 are trivial.
 If the component of $f$ in   $\boxed{7}$ is nontrivial, then we apply  
 the operator $D^-_+\circ D^-_+$ and the result is a nontrivial element in  $\boxed{1}.$
 
 So we suppose next  that the components of $f$ in  $\boxed{5},$ $\boxed{7}$  and   $\boxed{9}$
 are trivial. If the component of $f$ in   $\boxed{8}$ is nontrivial, then we apply  
 the operator $D^-_+\circ D^+_-$ to get a nontrivial element in  $\boxed{1}.$ 
 Note that 
 $$
[ z^{322}_{221}z^{311}_{322}+z^{210}_{221}z^{311}_{210}](f_1)=
z_{d[a}\delta^1_{c]}(z^{12})^mz^{1d}\delta^b_ 4
$$
 is nontrivial,
 hence
 $D^{322}_{311}D^{221}_{322}+D^{210}_{311}D^{221}_{210}$ is nontrivial on $\boxed{8}$
 by duality.

 So finally we can suppose that $f$ belongs to $\boxed{1}\oplus \boxed{2}\oplus\boxed{3}\oplus \boxed{4}.$
The operator $D^-_+\circ z^+_-$ maps $f$ to a nontrivial element in  $\boxed{1}\oplus \boxed{2}\oplus\boxed{3}$
and further application of either $D^+_-$ or $D^-_+$ leads to a nontrivial element in $\boxed{1}.$

\vskip 2mm
\noindent 
{\bf (c) Isotypic component  
	$\Is(m+2,m+2,1)$} 
	
	See Diagram 5a.	 	
	 Suppose that $W$ is an invariant subspace in the isotypic component and $f$ is its nontrivial element.
	 As before, we can suppose that $f$ is harmonic, hence it belongs to the sum $$\bigoplus_{j=1,\ldots,9;j\not=6}^{10}\;\;\boxed{j}.$$
	 
	 	 If the component in $\boxed{9}$ is nontrivial,  it can be checked using highest weight vectors that
	the element $D^-_+\circ D^-_+\circ D^-_+(f),$  is nontrivial element in $\boxed{1},$ which generates the whole component.
	 
	 If the component in $\boxed{8}$ is nontrivial, then  $D^{++}\circ D^+_-(f)$ is a nontrivial element in $\boxed{1}.$
	 
	 	 If the component in $\boxed{8}$ is trivial and  the component in $\boxed{4}$ is nontrivial, then  $D^{++}(f)$ is a nontrivial element in $\boxed{1}.$

	If the component in $\boxed{9}$ is trivial  and  the component in $\boxed{5}$ is nontrivial, then  $D^{++}\circ D^{--}(f)$ is a nontrivial element in  $\boxed{1}.$
	 
	 So we are left with the case that $f$ belongs to 
	 $\boxed{1}\oplus\boxed{2}\oplus\boxed{3}\oplus\boxed{7}.$
	 
	 If the component in $\boxed{7}$ is nontrivial, we apply $D^-_+\circ D^+_-$ to get nontrivial element
	 in $\boxed{1}.$

	 If the component in $\boxed{3}$ is nontrivial, then $z^{--}(f)$ has a nontrivial component in $\boxed{8}.$
	 	
	 If the component in $\boxed{3}$ is trivial and the component in $\boxed{2}$ is nontrivial, then $D^-_+(f)$
	  is  nontrivial element in $\boxed{1}.$

\vskip 2mm
\noindent 
{\bf (d) }
The other cases are either similar to ones already treated, or are trivial.
\end{proof}

%%%%%%%%%%%%%%%%%%%%%%%%%%%%%%

\appendix

\section{Real forms of $SL(4,\bC)$ and real structures on $\bC^6$}\label{app_Real}	

It is well-known that $Spin(6,\bC) \simeq SL(4,\bC)$ and the basic spinor representations of $Spin(6,\bC)$ correspond to the defining representation $S=\bC^4$ of $G:=SL(4,\bC)$ and its dual $S^*=(\bC^4)^*$. Then the defining representation $\bC^6$ of $Spin(6,\bC)$ is isomorphic to the $G$-module $\Lambda^2 S$. See \cite[23.1]{FH} for more details.
Now we describe real slices $M_{p,q}$ of $\bC^6$ on which the metric has signature $(p,q)$. We treat below Euclidean, Lorentzian and split signature cases, i.e., the cases when $(p,q)=(6,0),$ $(1,5)$, $(3,3)$ and the corresponding real forms of $SL(4,\bC)$ are $G_{p,q}:=SU(4)$, $SL(2,\bH)$, $SL(4,\bR)$, respectively, see \cite[26.1]{FH}.

\medskip\noindent
(i) Euclidean space $\bR^6$: 
For the compact form $SU(4)$ of $SL(4,\bC)$, the spinor space $\bS:=S\oplus S^*$ is of real type, i.e., it admits a real structure $\tau:\bS\to\bS$ defined by
$\tau(s_a,t^a):=(\overline{t^a},\overline{s_a})$
where $\overline{c}$ is the complex conjugate of a complex number $c$.
Thus the complex representation $\bC^6$ admits a real structure 
$$\omega(z^{ab}):=\overline{z_{ab}}=\frac 12 \varepsilon_{abcd}\overline{z^{cd}},\ z^{ab}\in\bC^6.$$  
Then $M_0:=\{z\in\bC^6|\ \omega z=z\}$ is a~real representation of $SU(4)$ which corresponds to the defining representation $\bR^6$ of $Spin(6)$. Indeed, on $M_0$ we take the real coordinates as 
$$(z^{ab})=\frac{1}{\sqrt{2}}
\begin{pmatrix}
0 & ix_1+x_6 & x_4+ix_5 & x_2+ix_3 \\
-ix_1-x_6 & 0 & x_2-ix_3 & -x_4+ix_5 \\
-x_4-ix_5 & -x_2+ix_3 & 0 & -ix_1+x_6 \\
-x_2-ix_3 & x_4-ix_5 & ix_1-x_6 & 0
\end{pmatrix}
$$
and the corresponding derivatives as
$$(\nabla^{ab})=\frac{1}{\sqrt{2}}
\begin{pmatrix}
0 & i\partial_{x_1}+\partial_{x_6} & \partial_{x_4}+i\partial_{x_5} & \partial_{x_2}+i\partial_{x_3} \\
-i\partial_{x_1}-\partial_{x_6} & 0 & \partial_{x_2}-i\partial_{x_3} & -\partial_{x_4}+i\partial_{x_5} \\
-\partial_{x_4}-i\partial_{x_5} & -\partial_{x_2}+i\partial_{x_3} & 0 & -i\partial_{x_1}+\partial_{x_6} \\
-\partial_{x_2}-i\partial_{x_3} & \partial_{x_4}-i\partial_{x_5} & i\partial_{x_1}-\partial_{x_6} & 0
\end{pmatrix}
$$
for $x=(x_1,x_2,x_3,x_4,x_5,x_6)\in\bR^6$. Then on $M_0$ we have
$\overline{z^{ab}}=z_{ab}$ and 
$$r^2=\sum_{j=1}^6 x_j^2,\ \ \Delta=\sum_{j=1}^6 \partial_{x_j}^2\text{\ \ and\ \ } E=\sum_{j=1}^6 x_j \partial_{x_j}.$$

\medskip\noindent
(ii) Lorentzian space $\bR^{1,5}$: The complex representation $S$ has a~quaternionic structure 
$$\tau(s_1,s_2,s_3,s_4)=(\overline{s_2},-\overline{s_1},\overline{s_4},-\overline{s_3}),\ s\in S$$
for the subgroup $SL(2,\mathbb{H})$ of $g\in SL(4,\mathbb{C})$ commuting with $\tau$ (i.e., $\tau(g\cdot s)=g\cdot\tau(s)$ for each $s\in S$).
For $$(C^a_b)=\begin{pmatrix}
0 & -1 & 0 & 0\\
1 & 0 & 0 & 0\\
0 & 0 & 0 & -1\\
0 & 0 & 1 & 0\\
\end{pmatrix}$$
we have $\tau(s_a)=\overline{s_a}C^a_b$ and $SL(2,\mathbb{H})=\{g^a_b\in SL(4,\mathbb{C})|\ \overline{g^c_a}C^a_b=C^c_ag^a_b\}$. The corresponding quaternionic structure on $S^*$ is 
$$\tau^*(t^1,t^2,t^3,t^4)=(-\overline{t^2},\overline{t^1},-\overline{t^4},\overline{t^3}),\ t\in S^*,$$
i.e., $\tau^*(t^b)=(C^{-1})^a_b\;\overline{t^b}$.
Thus the complex representation $\bC^6$ admits a real structure 
$$\omega(z^{cd}):=(C^{-1})^a_c(C^{-1})^b_d\;\overline{z^{cd}},\ z^{cd}\in\bC^6.$$  
Then $M_0:=\{z\in\bC^6|\ \omega z=z\}$ is a~real representation of $SL(2,\mathbb{H})$ which corresponds to the defining representation $\bR^{1,5}$ of $Spin(1,5)$. 
Indeed, we have that $z\in M_0$ iff $z^{12}=\overline{z^{12}}$, $z^{34}=\overline{z^{34}}$, $z^{13}=\overline{z^{24}}$, $z^{14}=-\overline{z^{23}}$. On $M_0$ we take the real coordinates 
$$z^{12}=\frac{1}{\sqrt{2}}(x_0+x_5),\ z^{34}=\frac{1}{\sqrt{2}}(x_0-x_5),\ z^{13}=\frac{1}{\sqrt{2}}(x_1+ix_2),$$ 
$$z^{24}=\frac{1}{\sqrt{2}}(x_1-ix_2),\ z^{14}=\frac{1}{\sqrt{2}}(x_3+ix_4),\ z^{23}=\frac{1}{\sqrt{2}}(-x_3+ix_4)$$ for $x=(x_0,x_1,x_2,x_3,x_4,x_5)\in\bR^6$. Then on $M_0$ we have
$$r^2=x_0^2-\sum_{j=1}^5x_j^2.$$

\medskip\noindent
(iii) The split signature case $\bR^{3,3}$: For the split form $SL(4,\bR)$ of $SL(4,\bC)$, the complex representations $S$, $S^*$ and $\bC^6$ admit real structures $\tau(s_a)=\overline{s_a}$, $\tau^*(t^b)=\overline{t^b}$ and $\omega(z^{cd}):=\overline{z^{cd}}$, respectively. Then $M_0:=\{z\in\bC^6|\ \omega z=z\}$ is a~real representation of $SL(4,\mathbb{R})$ which corresponds to the defining representation $\bR^{3,3}$ of $Spin(3,3)$. Indeed, on $M_0$ we take the real coordinates 
$$z^{12}=\frac{1}{\sqrt{2}}(x_1+x_6),\ z^{34}=\frac{1}{\sqrt{2}}(x_1-x_6),\ z^{13}=\frac{1}{\sqrt{2}}(x_2+x_5),$$ 
$$z^{24}=\frac{1}{\sqrt{2}}(-x_2+x_5),\ z^{14}=\frac{1}{\sqrt{2}}(x_3+x_4),\ z^{23}=\frac{1}{\sqrt{2}}(x_3-x_4)$$ for $x=(x_1,x_2,x_3,x_4,x_5,x_6)\in\bR^6$. Then on $M_0$ we have
$$r^2=x_1^2+x_2^2+x_3^2-x_4^2-x_5^2-x_6^2.$$

%\medskip\noindent
%(iv) $\bR^{4,2}$, and other real forms of $SL(4,\bC)$: 

\section{ { Tensor product decompositions}} \label{app_LRR}

In this appendix, we first recall the standard Littlewood-Richardson (LR) rules for  $GL(n,\bC)$
 and then apply them for a description of tensor products of irreducible representations
 of $SL(4,\bC). $ 
 We shall consider only complex and finite dimensional representations. For more details, see e.g.\ \cite{HLee}.

All (complex and finite dimensional) irreducible representations of $GL(n,\bC)$
are parametrized by their highest weights which have a form  $$\la=(\la_1,\ldots,\la_n),\ \la_i\in\bZ,\ \la_1\geq\la_2\geq\cdots\geq\la_n.$$
We denote by $F_{\la}$ a~unique (up to an isomorphism) irreducible representation of $GL(n,\bC)$ of the highest weight $\la$.
For example, for $j\in\bZ$, $A\in GL(n,\bC)\rightarrow (\det A)^j\in \bC$ is one dimensional representation of the highest weight $j1_n=(j,\ldots,j)$ with $n$ components. 
If $\la$ is a partition (i.e., $\la_n\geq 0$) we call $F_{\la}$ a~{\it polynomial} representation of $GL(n,\bC)$.
It is easy to see that $F_{\la}\simeq F_{\underline{\la}}\otimes F_{j1_n}$ for some partition $\underline{\la}$ and $j\in\bZ$, e.g.,
$\underline{\la}=(\la_1-\la_n, \ldots, \la_{n-1}-\la_n, 0)$ and $j=\la_n$. Thus any representation $F_\la$ is the tensor product of a~polynomial representation and some integer power of $\det$.

To state LR rules for tensor products of irreducible polynomial representations of $GL(n,\bC)$ we need to introduce more notation.
Any partition $\la$ corresponds uniquely to the Young diagram with  at most $n$ rows and $\la_i$ boxes in the $i$-th row.
For a Young diagram $\la$, its weight $| \la|$ is the number of its boxes and $\ell(\la)$ is the number of nontrivial rows.

The inclusion
 $ \la\subset   {\nu}$ means that the Young diagram $ \la $ is contained in $  {\nu}.$
For $ \la\subset   {\nu}$, we denote by $  {\nu}/ \la$ the so-called { \it skew diagram} we get from the Young diagram $  {\nu}$ by removing $ \la.$
The shape of $ {\nu}/\la$  is $ {\nu}-\la.$

A  {\it skew tableaux} $T= {\nu}/\la$ is obtained by filling a skew  diagram $  {\nu}/ \la$  with positive integers.
 If $m_i$ is the number of times  {of $i$ appearing} in $T,$ we call $m=( {m_1},\ldots,m_r)$ the  {weight} of $T.$
 The word $w(T)$  of $T$  is the sequence obtained by reading the entries of $T$ from right to left in each row, from
 the first row to the  last row.

 A skew tableaux $T$ is {\it semi-standard}, if the entries of $T$ weakly increase across rows (from left to right)  and strongly increase down columns.
  We say that $T$ satisfies the {\it Yamanouchi word condition} ({\it Y-condition}), if the number of occurences of an integer
  $i$ never exceeds the number of occurences of $i-1$ for any initial segment of $w(T).$

  A  {\it LR tableau} is a semistandard skew tableau $T$ satisfying the Yamanouchi word condition.

\begin{thm}[Littlewood-Richardson rules for $GL(n,\bC)$] Let $\la$ and $\mu$ be partitions with $\ell(\la)\leq n$ and $\ell(\mu)\leq n$.
 Then we have
	$$F_\la\otimes F_\mu=\sum_\nu c^\nu_{ \la \mu}F_\nu,$$
	where $c^\nu_{ \la \mu}$ is the number of LR tableaux of shape $\nu/\la$  and weight $ \mu.$
	The sum is taken over all partitions $\nu$ with $\ell(\nu)\leq n$ and $|\nu|=|\la|+|\mu|.$	
\end{thm}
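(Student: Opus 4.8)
The plan is to reduce the statement to the Littlewood--Richardson identity for Schur polynomials and then prove the latter by the classical combinatorics of semistandard tableaux. Since $GL(n,\bC)$ is reductive, a rational representation is determined up to isomorphism by its character, i.e.\ by the symmetric polynomial obtained by restriction to the diagonal torus; the character of $F_\la$ for a partition $\la$ with $\ell(\la)\le n$ is the Schur polynomial $s_\la(x_1,\dots,x_n)$, and characters are multiplicative under $\otimes$. Hence the assertion is equivalent to
$$s_\la(x)\,s_\mu(x)=\sum_\nu c^\nu_{\la\mu}\,s_\nu(x),\qquad x=(x_1,\dots,x_n),$$
with $c^\nu_{\la\mu}$ the number of LR tableaux of shape $\nu/\la$ and weight $\mu$. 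I would prove the corresponding identity first in the ring $\Lambda$ of symmetric functions in infinitely many variables and then set $x_{n+1}=x_{n+2}=\cdots=0$; this annihilates exactly the terms with $\ell(\nu)>n$ (for which $s_\nu=0$), which explains why the sum in the theorem is restricted to $\ell(\nu)\le n$.

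For the identity in $\Lambda$ I would use the Hall inner product, for which the Schur functions are an orthonormal basis, together with the adjunction $\langle s_\la s_\mu,s_\nu\rangle=\langle s_\mu,s_{\nu/\la}\rangle$ (multiplication by $s_\la$ being adjoint to the skewing operator $s_\la^{\perp}$, and $s_\la^{\perp}s_\nu=s_{\nu/\la}$). Thus the coefficient of $s_\nu$ in $s_\la s_\mu$ equals the coefficient of $s_\mu$ in the Schur expansion of the skew Schur function $s_{\nu/\la}$, and it remains to show
$$s_{\nu/\la}=\sum_\mu c^\nu_{\la\mu}\,s_\mu.$$
Here I would invoke Sch\"utzenberger's jeu de taquin: starting from the combinatorial formula $s_{\nu/\la}=\sum_{T}x^{T}$ over semistandard skew tableaux $T$ of shape $\nu/\la$, each such $T$ rectifies to a well-defined straight-shape tableau $\mathrm{rect}(T)$, the slides preserve the content so that $x^{T}=x^{\mathrm{rect}(T)}$, and one checks that the number of $T$ with $\mathrm{rect}(T)=S$ depends only on the shape of $S$ (via Knuth equivalence of reading words). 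Taking $S$ to be the Yamanouchi tableau of shape $\mu$ (its $i$-th row filled with $i$'s), the $T$ rectifying to it are precisely the LR tableaux of shape $\nu/\la$ and weight $\mu$, which yields the displayed formula. As a sanity check, the case $\mu=(k)$ reproduces the Pieri rule (horizontal strips); more generally, iterating Pieri together with the expansion $h_\mu=\sum_\la K_{\la\mu}s_\la$ and the unitriangularity of the Kostka matrix gives an alternative, purely inductive derivation of the same coefficients.

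The main obstacle is the combinatorial heart of the second paragraph: proving that rectification is independent of the order of slides and that the fibre sizes $\#\{T:\mathrm{rect}(T)=S\}$ depend only on the shape of $S$ --- equivalently, the well-definedness and basic structure of the plactic (Knuth) monoid. This is standard but not short; in a self-contained account one would either develop jeu de taquin from scratch, or replace it by a sign-reversing involution proving $s_{\nu/\la}=\sum_\mu c^\nu_{\la\mu}s_\mu$ directly from the Jacobi--Trudi determinant, or use Kashiwara crystal bases (the tensor product of the crystals $B(\la)$ and $B(\mu)$ decomposes with its highest-weight elements indexed by LR tableaux). Everything here is classical, and for the present purposes I would simply cite \cite{HLee} for the details rather than reproduce them.
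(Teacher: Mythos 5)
The paper does not actually prove this theorem: it is quoted as a classical result, with the reader referred to \cite{HLee} for details, and the authors only \emph{apply} it (via explicit enumeration of LR tableaux) in the subsequent lemmas of Appendix B. So there is no in-paper argument to compare yours against. Your outline is the standard and correct route: reduce to characters (legitimate since rational representations of the reductive group $GL(n,\bC)$ are determined by their characters, and the character of $F_\la$ is $s_\la(x_1,\dots,x_n)$), prove the identity in the ring of symmetric functions, and recover the constraint $\ell(\nu)\le n$ by specializing $x_{n+1}=x_{n+2}=\cdots=0$, which kills exactly the $s_\nu$ with too many rows. The reduction via the Hall inner product to $s_{\nu/\la}=\sum_\mu c^\nu_{\la\mu}s_\mu$ and the jeu-de-taquin/plactic argument for that expansion are exactly what the cited references do. You are candid that the combinatorial core (well-definedness of rectification, constancy of fibre sizes on Knuth classes) is nontrivial and is being cited rather than proved; since the paper itself defers the entire theorem to the literature, this level of detail is entirely consistent with the paper's treatment, and I see no gap in what you have written.
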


Representations of $GL(n,\bC)$ and $SL(n,\bC)$ are closely related. Indeed, every irreducible representation $F_\la$ of $GL(n,\bC)$ remains irreducible as an $SL(n,\bC)$-module. Since $\det $ is trivial as  an $SL(n,\bC)$-module $F_\la $ and $F_{\la'}$ are isomorphic as $SL(n,\bC)$-modules if and only if $\la-\la'$
 is an integer multiple of $1_n$. On the other hand,
every irreducible (complex and finite dimensional) representation $F$ of $SL(n,\bC)$ is isomorphic to $F_\la$ restricted to $SL(n,\bC)$ for some partition $\la$ with
 $\ell(\la)\leq n.$ It is clear that $SL(n,\bC)$-module $F$ is uniquely determined by the $(n-1)$-tuple $\underline{\la}=(\la_1-\la_n,\la_2-\la_n, \ldots, \la_{n-1}-\la_n)$.

%%%%%%%%%%%%%%   old   %%%%%%%%%%%%%%%%%%%%%%%%%%%%%%%%

Now we apply LR rules for a description of tensor products of irreducible representations  of $SL(4,\bC).$ 
For an irreducible module $F_\la$ we often write shortly just $\la$.

\begin{lem} For $SL(4,\bC)$-modules, we have
 \begin{equation*}\begin{split}
	(300)\otimes(110) { =}&(410)\oplus(311),\\
	 	(300)\otimes(220) { =}&(520)\oplus(421)\oplus(322),\\
	 (300)\otimes(mm0) { =}&(m+3,m,0)\oplus(m+2,m,1)\oplus(m+1,m,2)\oplus(m,m,3),\quad  {m\geq 3}.
	 \end{split} \end{equation*}
\end{lem}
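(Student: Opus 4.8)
The plan is to apply the Littlewood-Richardson rules stated in the preceding theorem, since for $SL(4,\bC)$ we may work with the partition representatives $(300)=(3,0,0,0)$ and $(110)$, $(220)$, $(mm0)=(m,m,0,0)$ which are genuine Young diagrams. First I would handle the small cases $(300)\otimes(110)$ and $(300)\otimes(220)$ directly: fix $\la=(3,0,0,0)$ and let $\mu$ be $(1,1,0,0)$ resp. $(2,2,0,0)$, then enumerate the LR tableaux of shape $\nu/\la$ and weight $\mu$ with $\ell(\nu)\le 4$ and $|\nu|=|\la|+|\mu|$. Adding two boxes (entries $1$ and $1$, or one $1$ and one $2$) to the single row of $\la$ subject to semistandardness and the Yamanouchi word condition gives exactly the claimed summands; in particular the column condition forces the second $2$ (in the $\mu=(2,2,0,0)$ case) to lie below a $1$, eliminating spurious shapes.

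The main content is the general case $(300)\otimes(mm0)$ for $m\ge 3$. Here $\la=(3,0,0,0)$ again, but now $\mu=(m,m,0,0)$ has two rows of length $m$. I would tensor in the other order — write the product as $F_{(m,m,0,0)}\otimes F_{(3,0,0,0)}$, so we add a horizontal strip of $3$ boxes (all entries equal to $1$, by the Pieri rule, which is the $\mu=$ single-row special case of LR) to the diagram $(m,m,0,0)$. A horizontal $3$-strip added to a $2\times m$ rectangle can place $k$ boxes in a new third row ($0\le k\le 3$, but the no-two-in-a-column rule within a row together with $\ell(\nu)\le 4$ means the third-row boxes sit under... ) — more carefully, the boxes go into rows $1,2,3$ with no two in the same column, so if $k$ boxes go to row $3$ they must lie strictly left of where rows $1,2$ get extended; counting the ways to distribute $3$ boxes as a horizontal strip over rows $1,2,3$ of the rectangle gives the four shapes $(m+3,m,0,0)$, $(m+2,m+1-1,\dots)$ — I should instead just read off: $\nu$ ranges over $(m+3-i-j,\,m+i,\,j,\,0)$ type diagrams. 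The cleanest bookkeeping is: adding a horizontal $3$-strip to $(m,m)$ yields $\nu=(m+a,m+b,c)$ with $a\ge 0$, $b$ boxes added to row $2$ requiring $b\le$ (overhang of row $1$), $c$ boxes to row $3$ with $c\le b$-constraint; the hypothesis $m\ge 3$ guarantees row $1$ of the rectangle is long enough that all of $(m+3,m,0)$, $(m+2,m,1)$, $(m+1,m,2)$, $(m,m,3)$ actually occur and no others do. I would verify that for $m\ge 3$ the strip never "wraps" in a way that would produce a diagram with $\ell(\nu)>4$ or violate the strip condition, which is exactly why the small cases $m=0,1,2$ are listed separately (and there $(m,m,3)$ etc. may fail $\la_2\ge\la_3$ or coincide with lower ones).

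The expected main obstacle is the combinatorial verification in the general case — making the horizontal-strip count airtight, i.e. showing precisely these four $\nu$ appear each with multiplicity one and nothing else, uniformly in $m\ge 3$, and checking that the resulting partitions are the $SL(4,\bC)$ weights claimed (using the equivalence $(\la_1,\la_2,\la_3,\la_4)\simeq(\la_1-\la_4,\dots)$ where needed, though here all four already have $\la_4=0$). Once the Pieri/LR enumeration is pinned down, the rest is immediate, and the same method will clearly extend to the remaining items of Theorem \ref{6.1} in Appendix \ref{app_LRR}.
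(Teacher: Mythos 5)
Your proposal is correct, but it reaches the result by a genuinely different combinatorial route than the paper. The paper keeps the factors in the stated order and enumerates LR tableaux of shape $\nu/(3,0,0)$ with weight $(m,m,0)$, i.e.\ fillings by $m$ copies of $1$ and $m$ copies of $2$; this forces a case-by-case discussion (their cases (a)--(f)) of the possible shapes together with the semistandard and Yamanouchi conditions. You instead exploit the symmetry of the tensor product to put the one-row partition $(3,0,0,0)$ in the second slot, so the LR count collapses to the Pieri rule: one only counts horizontal $3$-strips added to the rectangle $(m,m,0,0)$. The strip condition $\nu_1\ge m\ge\nu_2\ge m\ge\nu_3\ge 0\ge\nu_4$ immediately gives $\nu_2=m$, $\nu_4=0$ and $\nu=(m+a,m,c,0)$ with $a+c=3$ and $c\le m$, which for $m\ge 3$ yields exactly the four claimed summands with multiplicity one, and for $m=1,2$ the constraint $c\le m$ removes the surplus terms, recovering the first two lines uniformly. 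This is shorter and cleaner than the paper's argument. The only wobbles in your write-up are the false starts about boxes going into row $2$ --- the strip condition forbids this outright, since the overhang of row $1$ over row $2$ in the rectangle is zero, so your parameter $b$ is forced to be $0$ --- and the attribution of the $m\ge 3$ hypothesis to the length of row $1$ rather than to row $2$ (it is $\nu_3\le\mu_2=m$ that matters). Neither affects the conclusion, and once these are tidied the argument is complete.
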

\begin{proof} By LR rules, we have to discuss all possible skew diagrams $\nu/(3,0,0).$ For each choice, we have to find   all its  possible fillings leading to an LR tableaux (i.e. all fillings  satisfying
the corresponding conditions). The number of such possibilities is the multiplicity of $\nu$
in the decomposition.

The skew diagram $\nu/(3,0,0)$ should contain $2m$ boxes filled by $m$ copies of $1$ and $m$ copies of $2.$
Let us denote by $i_k$ the row $(i,\ldots,i)$ with $k$ copies of the number $i.$
The skew diagram has at most $4$ rows.  Its shape is given by  $n=(n_1,n_2,n_3,n_4),$
where $n_i$ denotes the number of boxes in the $i$-th row. Hence $\sum_in_i=2m.$

Let us now discuss in turn all possibilities of shapes $n,$ which have at least one possible
fillings satisfying all conditions for LR tableau.

\vskip2mm\noindent
Due to Y-condition, the first row contains only copies of $1.$

(a) Let $n_1= {m}.$ Then the simplest possibility is $n_2=m, n_3=n_4=0.$ It leads to  the  LR tableau
described by the matrix
$\begin{pmatrix}
	.\ .\ .	\ 1_m\\2_m\; \;\;\;\;\;\; \\
\end{pmatrix}.	
$
The corresponding word $w=1_m2_m$ satisfies the Y-condition and $\nu=(m+3,m,0).$
The result corresponds to the Cartan piece in the decomposition.

Suppose now that $n_2<m,$ hence $n_3>0.$ Then number $2$ appears both in the second and the third row, which violates the semi-standard condition.

\vskip2mm\noindent
(b) The next possibility is $n_1=m-1.$ The semistandard condition for fillings tells us that
the last copy of $1$ should appear at the first position of the second row.
Then we can consider $n=(m-1,m,1)$  and the filling given by
$\begin{pmatrix}
	.\ .\ .	\ 1_{m-1}\\1\,2_{m-1}\; \;\;\;\;\\ 2\; \;\;\;\;\;\;  \; \;\;\;\;\;\;
\end{pmatrix}.	
$
The corresponding word $w=1_{m-1}2_{m-1}12$ satisfies the   {\it Y-condition}  and  it is
clearly the only choice.
The result corresponds to $\nu=( {m+2},m,1)$ in the statement of Lemma and its multiplicity is one.

\vskip2mm\noindent
(c)  The next possibility is $n_1=m-2.$
The semistandard condition for fillings tells us that
the last two copies of $1$ should appear at the first two positions of the second row.
We have only two possible numbers available for fillings, hence the semi-standard condition  tells us that $n_4=0.$ So the only choice left now  is
  $n=(m-2,m,2)$  and the filling given by
$\begin{pmatrix}
	.\ .\ .	\ 1_{m-2}\\1\,1\,2_{m-2}\; \;\; \\2\, 2\; \;\;\;\;\;\;  \; \;\;\;\;
\end{pmatrix}.	
$
The corresponding word $w=1_{m-2}2_{m-2}1122$ satisfies the Y-condition.
The result corresponds to  $\nu=( {m+1},m,2)$  in the statement of Lemma and its multiplicity is one.

\vskip2mm\noindent
(d)
 The next possibility is $n_1=m-3.$
The semistandard condition for filllings tells  us again that
the left three copies of $1$ should appear at the first three  positions of the second row.
  As in c), the only choice left is
$n=(m-3,m,3)$  and the filling given by
$\begin{pmatrix}
	.\ .\ .	\ 1_{m-3}\\1\,1\,1\,2_{m-3}\; \\2\, 2\,2\; \;\;\;\;\;\;  \; \;\;
\end{pmatrix}.	
$
The corresponding word $w=1_{m-3}2_{m-3}111222$ satisfies the Y-condition.
The result corresponds to $\nu=( {m},m,3)$ in the statement of Lemma and its multiplicity is one.

\vskip2mm\noindent
(e)   The next possibility could be $n_1>m-3. $ But we cannot have more than three copies of $1$ in the
second row due to semi-standard condition, hence there is now possible fillings for such shapes and their multiplicity is zero.

\vskip2mm\noindent
(f) The case $m=1$ and $m=2$ is proved in the same way showing that some possibilities
of the general case are excluded.   \end{proof}

\begin{rem} {
It is possible e.g.\ to use a computer algebra system LiE (see \cite{LiE}) to prove this and the following  lemmas.}
\end{rem}

\begin{lem} For $SL(4,\bC)$-modules, we have
\begin{equation*}\begin{split}
	 (311)\otimes(110) { =}& (4210)\oplus(3220)\oplus(3211)\oplus(4111),\\
(311)\otimes(220) { =}& (5310)\oplus(4320)\oplus(3330)\oplus(4311)\oplus(3321)\oplus(5211),\\
(311)\otimes(m,m,0) { =}&(m+3,m+1,1,0)\oplus(m+2,m+1,2,0)\oplus(m+1,m+1,3,0)
		\oplus(m+2,m+1,1,1) \\
		&\oplus(m+1,m+1,2,1)
		\oplus(m+3,m,1,1)\oplus(m+2,m,2,1)\oplus(m+1,m,3,1),\quad m\geq 3.
	 \end{split} \end{equation*}
\end{lem}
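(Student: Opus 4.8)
The plan is to derive all three identities by one application of the Littlewood--Richardson rule recalled above, performed exactly as in the proof of the preceding lemma but with the inner partition $\lambda=(3,1,1,0)$ replacing $(3,0,0)$. Regarding the $SL(4,\bC)$-modules as restrictions of $GL(4,\bC)$-modules, so that $(311)$, $(110)$, $(220)$, $(mm0)$ stand for the partitions $(3,1,1,0)$, $(1,1,0,0)$, $(2,2,0,0)$, $(m,m,0,0)$, the first two identities are just the cases $m=1$ and $m=2$ of the tensor product $(311)\otimes(m,m,0)$; hence I would treat the general product $\lambda\otimes\mu$ with $\mu=(m,m,0,0)$ and afterwards read off the small cases. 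Here $|\nu|=2m+5$, $\ell(\nu)\le4$, and every LR tableau of shape $\nu/\lambda$ and weight $\mu$ uses exactly $m$ copies of the letter $1$ and $m$ copies of the letter $2$.

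Next I would record the constraints that turn the enumeration into a short, mechanical case analysis. Since only the letters $1$ and $2$ occur, semistandardness forces every column of the skew shape $\nu/\lambda$ to have height at most $2$, a height-$2$ column being necessarily a $1$ above a $2$; the Yamanouchi (lattice-word) condition forces the first nonempty row of $\nu/\lambda$ to consist of $1$'s only and, in each row, all $1$'s to precede all $2$'s. Writing $\nu=(3+a_1,\,1+a_2,\,1+a_3,\,a_4)$ with $a_i\ge0$, $\sum a_i=2m$ and $3+a_1\ge1+a_2\ge1+a_3\ge a_4\ge0$, I would organize the cases by $a_1=\nu_1-3$, just as in the earlier proof: once $a_1$ is fixed, the remaining $1$'s are forced into the leftmost free cells of the second row, the $2$'s fill what is left, and one only has to verify the lattice-word condition and that $\nu$ is a partition. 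The shapes that survive, each carrying exactly one admissible filling, are precisely the summands listed in the lemma: for $m\ge3$ this produces the full list of eight, while for $m=2$ and $m=1$ the shapes whose lower parts would violate $1+a_3\ge a_4$ or $1+a_2\ge1+a_3$ do not occur, leaving the shorter lists of the first two identities.

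The genuine difficulty here is organizational rather than conceptual. Because $\lambda=(3,1,1,0)$ has three nonzero rows, the skew shape $\nu/\lambda$ can spread over all four rows and the case tree is appreciably larger than for $(3,0,0)$, so one must be careful (i) that no admissible shape carries two genuinely distinct LR fillings, so that every multiplicity equals $1$, and (ii) that the right shapes are discarded as $m$ drops to $2$ and then to $1$. As an independent check one can either recompute the three decompositions with the package \cite{LiE}, as is suggested above, or verify them by Weyl-dimension counts, using $\dim F_{(3,1,1,0)}=36$ together with the known dimensions of the rectangular modules $F_{(m,m,0,0)}$.
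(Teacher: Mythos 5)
Your overall strategy is the same as the paper's: a direct Littlewood--Richardson enumeration of the skew tableaux of shape $\nu/(3,1,1,0)$ and weight $(m,m,0,0)$, organized by the number of $1$'s placed in the first row, with the observations that only the letters $1,2$ occur, that columns of the skew shape have height at most two, and that the first nonempty row must consist of $1$'s. Two points in your sketch, however, need repair. First, the claim that ``the remaining $1$'s are forced into the leftmost free cells of the second row'' is not literally true: since $\lambda_2=\lambda_3=1$, the first column of the fourth row of $\nu/\lambda$ has no skew cell above it, so semistandardness does not by itself exclude a $1$ in position $(4,1)$. One must argue separately that no such configuration completes to a valid LR tableau (a $2$ in position $(4,2)$ would force a $1$ above it in row $3$, which is impossible, so $\nu_3=1$ and $a_4\le 1$, and then the Yamanouchi condition on the $2$'s of the second row gives a contradiction). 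This is exactly the case the paper treats in part (e) of its proof; without it the uniqueness of the filling for each shape, and hence the multiplicity-one statement, is not established.

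Second, your reduction of the first two identities to the general pattern does not give what you claim. At $m=2$, writing $\nu=(3+a_1,1+a_2,1+a_3,a_4)$ with $a_i\ge 0$, only one of the eight generic shapes degenerates (namely $(m+1,m,3,1)=(3,2,3,1)$, which is not a partition), so your method yields \emph{seven} summands, the seventh being $(m+2,m,2,1)=(4,2,2,1)\simeq(311)$. This is in fact correct: the shape $(4,2,2,1)/(3,1,1,0)$ admits exactly one LR filling of weight $(2,2)$, the dimension count $36\cdot 20=720=360+140+20+60+20+84+36$ confirms it, and Theorem \ref{6.1}(2) of the main text lists precisely these seven constituents of $(311)\otimes(220)$. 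So the six-term second line of the lemma as printed is missing the summand $(4221)$, and your assertion that the specialization ``leav[es] the shorter lists of the first two identities'' indicates that this check was not actually carried out. (At $m=1$ four shapes survive — note that $(m+1,m+1,2,1)=(2,2,2,1)$ is excluded not because it fails to be a partition but because $a_1=-1$, i.e., $\lambda\not\subset\nu$ — and the first identity is correct as stated.)
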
\begin{proof}We are going to discuss all possible skew diagrams $\nu/(3,1,1)$  and its LR fillings containing
$2m$ new boxes
  filled by $m$ copies of $1$ and $m$ copies of $2.$
Let us again denote by $i_k$ the row $(i,\ldots,i)$ with $k$ copies of the number $i.$
The skew diagram has at most $4$ rows.  Its shape is given by  $n=(n_1,n_2,n_3,n_4),$
where $n_i$ denotes the number of boxes in the $i$-th row and $\sum_in_i=2m.$

Let us now discuss in turn all possibilities of shapes $n,$ which have at least one possible
fillings satisfying all conditions for LR tableau.

\vskip2mm\noindent
(a) Let $n_1= { m}.$ Then the simplest possibility is $n_2=m, n_3=n_4=0.$ It leads to  the  LR tableau
described by the matrix

$\begin{pmatrix}
	.\ .\ .	\ 1_m\\ \,.\; 2_m\;\;\;\;\;\\ \;\; .\ \;\;\;\;\;\;\;\;\;\;
\end{pmatrix}.	
$
The corresponding word $w=1_m2_m$ satisfies the Y-condition and $\nu=(m+3,m+1,1,0).$
The result corresponds to the Cartan piece in the decomposition.

\vskip2mm\noindent
(b) The next possibility is $1_m$ in the first row and $2_{m-1}$ in the second row.
The last copy of $2$ cannot be in the third row  {by the semi-standard condition,} but it can be in the fourth row. The whole LR tableau is

$\begin{pmatrix}
	.\ .\ .	\ 1_m\\ \,.\; 2_{m-1}\;\;\\ \;\; .\ \;\;\;\;\;\;\;\;\;\;\,\,\\2\;\;\;\;\;\;\;\;\;\;
\end{pmatrix}.	
$ The Y-condition is satisfied and   $\nu=(m+3,m,1,1).$

\vskip2mm\noindent
(c) The next possibility with $n_1=m$ and $n_2=m-2$ does not admit any LR filling, because
 there is not enough place for two copies of $2$ in the third and the fourth row.
 So we start to consider possibilities with $1_{m-1}$ in the first row. Then due to Y-condition,
 the last copy of $1$ must appear in the first position in the second row or  at the fourth row.
 In the first case, the next possiblity is to have $m-1$ copies of $2$ in the second row, which
 leads to two new options corresponding to the LR tableaux

    $\begin{pmatrix}
    	.\ .\ .	\ 1_{m-1}\\ \,. \,1\; 2_{m-1}\;\;\\ \;\; .\ \;\;\;\;\;\;\;\;\;\;\;\;\;\;\\2\;\;\;\;\;\;\;\;\;\;\;\;
    \end{pmatrix},
 $
    $\begin{pmatrix}
 	.\ .\ .	\ 1_{m-1}\\ \,. \;1\; 2_{m-1}\;\;\\ \;.\; 2\ \;\;\;\;\;\;\;\;\;\;\,
 \end{pmatrix},
 $
 Y-conditions hold  { and  $\nu=(m+2,m+1,2)$, $(m+2,m+1,1,1),$}
 \\or to have $m-2$ copies of $2$ in the second row, which leads to a unique possibility

      $\begin{pmatrix}
  	.\ .\ .	\ 1_{m-1}\\ \,. \,1\; 2_{m-2}\;\;\\ \;\; .\,2\ \;\;\;\;\;\;\;\;\;\;\;\\2\;\;\;\;\;\;\;\;\;\;\;
  \end{pmatrix},
 $
 Y-condition holds and $\nu=(m+2,m,2,1).$

\vskip2mm\noindent
(d) The next possibility is to have $1_{m-2}$ in the first row. Then both missing copies of $1$
are necessarily  in the second row    { by the semi-standard condition}   as the first two items  (by Y-condition).
So the content of the second row must be $1\,1\,2_{m-2}$ or $1\,1\,2_{m-3}.$
The first case leads to two possibilities

 $
\begin{pmatrix}
	.\ .\ .	\ 1_{m-2}\\ \,. \;1\,1\; 2_{m-2}\;\;\\ \;.\; 2\,2\ \;\;\;\;\;\;\;\;\;\;\,
\end{pmatrix},
$
 $
\begin{pmatrix}
	.\ .\ .	\ 1_{m-2}\\ \,. \;1\,1\; 2_{m-2}\;\;\\ \;.\; \;\;\; \; \;\;\;\;\;\;\;\;\;\;\,\\ 2\ \;\;\;\;\;\;\;\;\;\;\;\;\;
\end{pmatrix}
$. The
Y-conditions hold, $\nu=(m+1,m+1,3,0)$ and $\nu=(m+1,m+1,2,1).$
 { The second case leads to
 $
\begin{pmatrix}
	.\ .\ .	\ 1_{m-2}\;\;\\ \,. \;1\,1\; 2_{m-3}\;\; \\ .\,2\,2\ \;\;\;\;\;\;\;\;\;
	\\2\;\; \; \;\;\;\;\;\;\;\;\;\;\;\,
\end{pmatrix}
$. The Y-condition  holds and $\nu=(m+1,m ,3,1) .$}
\vskip2mm\noindent
(e)
Finally let us try $1_{m-3}$ in the first row. Then two left copies of $1$ must be placed into
the second row and the last copy to the fourth row (by semi-standard condition).
The result is

 $
\begin{pmatrix}
	.\ .\ .	\ 1_{m-2}\;\;\\ \,. \;1\,1\; 2_{m-3}\;\; \\ .\,2\,2\ \;\;\;\;\;\;\;\;\;
	\\1\;\; \; \;\;\;\;\;\;\;\;\;\;\;\,
\end{pmatrix}
$

It is impossible to have only $m-4$ copies of $1$ in the first row, because then only two copies
of $1$ can be placed at the beginning of the first row and there is a~space only for $m-3$ copies of $2$ in the second row and we are left with two copies of $1$ and $3$ copies of $2$ for the
last two rows. And it is not possible to arrange it in such a way that all conditions hold.
\end{proof}

\begin{lem} For $SL(4,\bC)$-modules, we have
\begin{equation*}\begin{split}
	  (221)\otimes(m,m,0)=&(m+2,m+2,1)\oplus(m+2,m+1,2)\oplus(m+2,m+1, { 1},1)\oplus(m+2,m,2,1)
\\&	\oplus(m+1,m+1,2,1)\oplus(m+1,m,2,2).
	 \end{split} \end{equation*}
\end{lem}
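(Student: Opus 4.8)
The plan is to compute $F_{(221)}\otimes F_{(m,m,0)}$ by the Littlewood--Richardson rules recalled in Appendix \ref{app_LRR}, but to exploit commutativity of the tensor product, $F_{(221)}\otimes F_{(m,m,0)}\cong F_{(m,m,0)}\otimes F_{(221)}$, and take $(m,m,0,0)$ as the \emph{base} diagram and $(2,2,1,0)$ as the \emph{weight}. This is the economical choice: one then has to fill the skew diagram $\nu/(m,m,0)$ with only five new boxes --- two copies of $1$, two copies of $2$, one copy of $3$ --- rather than the $2m$ boxes produced by the naive count. Since $(m,m,0)$ is a $2\times m$ rectangle, every LR tableau $T=\nu/(m,m,0)$ has $\nu=(m+n_1,m+n_2,n_3,n_4)$ with $n_i\ge0$, $\sum_i n_i=5$, $n_1\ge n_2$ and $n_3\ge n_4$, the new boxes of rows $1,2$ lying in columns $>m$ and those of rows $3,4$ in columns $1,2,\dots$. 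I will assume $m$ large enough that all shapes occurring below are partitions; the few remaining small values of $m$ are settled by the same enumeration (with fewer admissible shapes) or directly, cf.\ the computer-algebra remark in Appendix \ref{app_LRR}.

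First I would record the structural constraints that prune the list of candidate $\nu$. The word $w(T)$ starts with the new boxes of row $1$ read right to left, so by the Yamanouchi condition its first letter is $1$; since row $1$ is weakly increasing, \emph{all} its new boxes equal $1$, whence $n_1\le2$. Because $n_1\ge n_2$, every new box of row $2$ sits directly below a $1$, hence is $\ge2$ by strictness down columns. For $m$ large the new boxes of rows $3,4$ sit below \emph{old} boxes of the rectangle, so they are constrained only by weak increase along rows and strict increase in the single overlapping column, with entries in $\{1,2,3\}$.

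Then I would enumerate: the partitions $(n_1,n_2,n_3,n_4)$ of $5$ with $n_1\ge n_2$, $n_3\ge n_4$, $n_1\le2$ give a short finite list of candidate shapes, and for each I would write out the semistandard fillings of content $(2,2,1)$ and test the Yamanouchi condition. Exactly six survive, each with a single LR tableau: $(m+2,m+2,1)$ --- the Cartan piece, filled by $1\,1$ in row $1$, $2\,2$ in row $2$ and $3$ in row $3$ --- together with $(m+2,m+1,2)$, $(m+2,m+1,1,1)$, $(m+2,m,2,1)$, $(m+1,m+1,2,1)$ and $(m+1,m,2,2)$. Reading these modulo $1_4$ as $SL(4,\bC)$-weights (recording, as in the statement, those $\nu$ with $\nu_4\neq0$ as $4$-tuples) gives the claimed decomposition.

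The only step requiring genuine care is this last bookkeeping, and within it the elimination of the ``near-miss'' shapes with a large third part: $(m+2,m,3,0)$, $(m+1,m+1,3,0)$, $(m+1,m,4,0)$, $(m+1,m,3,1)$, $(m,m,5,0)$, $(m,m,4,1)$, $(m,m,3,2)$. Each of these carries a perfectly valid semistandard filling --- for instance $(m+2,m,3,0)$ is filled by $1\,1$ in row $1$ and $2\,2\,3$ in row $3$ --- yet its word fails the Yamanouchi condition (here $w=1\,1\,3\,2\,2$ already fails at the letter $3$, where the number of $3$'s exceeds the number of $2$'s). Making sure the list of candidate shapes is exhaustive and that the Yamanouchi test is applied correctly to every filling is the main obstacle; everything else is immediate from the rules in Appendix \ref{app_LRR}.
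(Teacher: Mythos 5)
Your argument is correct, and it reaches the stated decomposition by a route that differs in its combinatorial details from the paper's. The paper (Appendix B) keeps $(2,2,1)$ as the base diagram and enumerates LR fillings of $\nu/(2,2,1)$ with $m$ copies of $1$ and $m$ copies of $2$, so the skew diagrams grow with $m$ and the case analysis is organized by how many $1$'s sit in the first row. You instead invoke commutativity of the tensor product (equivalently, the symmetry $c^{\nu}_{\la\mu}=c^{\nu}_{\mu\la}$) and fill $\nu/(m,m,0)$ with the fixed content $(2,2,1)$. This reduces the problem to placing five boxes, so the enumeration is finite and uniform in $m$ once $m$ is large enough for all candidate shapes to be partitions; your structural reductions (all new boxes of row $1$ equal $1$, hence $n_1\le 2$; new boxes of row $2$ are $\ge 2$ by column strictness) are sound, and your list of surviving shapes, each with a unique LR tableau, matches the lemma. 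The trade-off is that your version requires the explicit caveat about small $m$ (indeed the formula as an $SL(4,\bC)$-statement is used in the paper only for $m\ge 2$, and for $m=1$ some of the six shapes fail to be partitions), whereas the paper's orientation handles the degenerate small-$m$ cases by the same discussion with some options excluded. Your approach is arguably the more economical of the two for verifying multiplicities, since the rectangle $(m,m,0)$ forces the new boxes of rows $1$--$2$ and rows $3$--$4$ into disjoint column ranges, which is exactly what makes the pruning so effective.
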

\begin{proof}
 We have to discuss possible LR skew tableaux, i.e., all fillings of the skew diagram $\nu/(2,2,1)$ satisfying
all conditions.

Hence the skew diagram should contain $2m$ boxes filled by $m$ copies of $1$ and $m$ copies of $2.$
Let us denote by $i_k$ the row $(i,\ldots,i)$ with $k$ copies of the number $i.$
The skew diagram has at most $4$ rows.

 {The fillings in the first row should be weakly increasing  and by the Y-word condition, it should end
with $1.$ Hence the first row should contain only certain numbers of copies of $1.$}
 {Let us discuss first filling with $1_m$ in the first row.} The first possibility of its filling
is indicated by the matrix {
$\begin{pmatrix}
.\ .\	1_m\\.\ .\ 2_m\\ . \;\;\;\;\;\;\,
\end{pmatrix}.
$}	
It is easy to check that it is semi-standard and satisfies the  {Y-condition}.
The corresponding module in the decomposition is $(m+2,m+2,1).$

The next possibility is to have $2_{m-1}$
in the second row. Then we have one copy of $2$ left and we have only two possibilities
how to fill them:  {
$
\begin{pmatrix}.\ .\	1_m\;\;\; \\.\ .\ 2_{m-1} \\ .\; 2\;\;\;\;\,\;\; \;\;
	\end{pmatrix};
\begin{pmatrix}.\ .\	1_m\;\;\; \\.\ .\ 2_{m-1} \\ .\;  \;\;\;\;\,\;\; \;\;\;\; \\  2\;\;\;\;\,\;\; \;\; \;\;
	\end{pmatrix}.	
 $}
They correspond to the cases $(m+2,m+1,2)$ and $(m+2,m+1,  {1},1).$

The next possibility is to have $1_m$ in the first row and $2_{m-2}$ in the second row.
Then we have two copies of $2$ left and only possibility is   {$\begin{pmatrix}.\ .\	1_m\;\;\; \\.\ .\ 2_{m-2} \\ .\; \, 2\;  \;\;\;\,\;\; \;\;  \\  2\;\;\;\;\,\;\; \;\; \;\;
	\end{pmatrix}
$, }
which corresponds to $(m+2,m,2,1).$

It is clear that it is not possible to have $1_m$ in the first row and $2_{m-3}$  or less in the second row, because there is now place left for 3 copies of $2.$

The next possibility is to have $1_{m-1}$ in the first row. The first option then is to have also
$2_{m-1}$ in the second row. Then there are not more than two cases possible:   
{$\begin{pmatrix}.\ .\	1_{m-1}\;  \\.\ .\ 2_{m-1} \;\\ .\;\,1  \;\;\;\;\,\;\; \;\;  \\  2\;\;\;\;\,\;\; \;\; \;\;
	\end{pmatrix}
$, }  
{$\begin{pmatrix}.\ .\	1_{m-1}\; \\.\ .\ 2_{m-1} \;\\ .\;\,2  \;\;\;\;\,\;\; \;\; \\  1\;\;\;\;\,\;\; \;\; \;\;
	\end{pmatrix}
$, }
 but the second one does not  {satisfy the Y-condition}. The first one corresponds to
$(m+1,m+1,2,1).$

The next possibility is to have $1_{m-1}$ in the first row and $1_{m-2}$ in the second row.
We are left with one copy of $1$ and two copies of $2.$
Then the third row has to start with $1$, because otherwise there  {would} be no chance
to place $2$'s elsewhere. And then the only option is   
{$\begin{pmatrix}.\ .\	1_{m-1}  \\.\ .\ 2_{m-2}  \\ .\; 1\;  \;\;\;\;\,  \;\; \; \\  2\; 2\;\;\,\;\; \;\; \;\;
	\end{pmatrix}
$, }
 which corresponds to $(m+1,m,2,2).$
It is easy to see that  we cannot decrease anymore the number of $1$ in the first row  {and/or} the number
of $2$ in the second row and to keep all conditions satisfied. So we have got all possibilities
in the statement of the lemma and each one with the only possibility of the filling
of the corresponding skew Young diagram. This proves the lemma.
 \end{proof}

\begin{rem} 
The case of $(210)$ is similar to the previous one.
\end{rem}

\section{Conformal invariance} \label{app_ConfInv}
	
  It is well known that the classical massless field equations on Minkowski space are conformally invariant, see \cite{PR}. In this section, we are going to discuss the question, whether the GCR equations
  for spin $\frac{3}{2}$ fields in dimension $6$
  introduced in Section \ref{ss_GCR} are conformally invariant.
  
  Let $n=p+q>2.$ There is a  classification 
  of conformally invariant
  first order differential operators on $\bR^{p,q}$ for any signature.   It can be summarized as follows.
  
  \subsection{Conformal compactification $S^{p,q}$ of $\bR^{p,q}$}\label{s_conf_cpt} 
	
  There exists a
  conformal compactification $S^{p,q}$ of $\bR^{p,q}$ with the property that
  every conformal map of an connected open subset of $\bR^{p,q}$ to $S^{p,q}$ has a  unique conformal extension to  $S^{p,q}.$ All details of the construction can be found in
  the book \cite[Chapter 2]{Sch} or in \cite[Sect. 1.6.3]{CS}. We shall describe it shortly following the second reference.
  
  Let us start with the real vector space $V=\bR^n,$ $n=p+q$ together with the standard inner product with signature $(p,q)$ given in the canonical basis $\{e_1,\ldots,e_n\}$ by the matrix
  $$
  I_{p,q}=
  \begin{pmatrix}
  	I_p&0\\0&-I_q
  \end{pmatrix}	
  $$ 
  and 
  let us consider the extended space $W=\bR^{n+2}$  with the inner product of signature
  $(p+1,q+1)$ given in the canonical basis $\{e_0,e_1,\ldots,e_n,e_{n+1}\}$ by the matrix
  \begin{equation}
  J=	J_{p+1,q+1}=
  	\begin{pmatrix}
  		0&0&1\\0&I_{p,q}&0\\ 1&0&0
  	\end{pmatrix}.
  \end{equation}	 
  Hence the vectors $e_0$ and $e_{n+1}$ are isotropic.
  \begin{dfn}
  	The conformal compactification $S^{p,q}$ of the space $V=\bR^{p,q}$ is defined
  	as the projective quadric $Q_J$ of all isotropic lines in the (real) projective
  	space $\mathbb{P}(W).$
  	If $C$ is the null cone of isotropic vectors in $W,$ then $S^{p,q}=C/{\mathbb{R}^{\times}}$
  \end{dfn}
\begin{thm}
	There exists a two-fold covering $S^p\times S^q\to S^{p,q},$ hence
	$$
	S^{p,q}\simeq (S^p\times S^q)/\mathbb{Z}_2.
	$$	
	The action of the group $O(p+1,q+1)$ on $S^{p,q}$ is transitive  and
	$S^{p,q}\simeq \GG/\PP, $ where $\PP$ is a stabilizer of a null line in $W.$
	
%	The group $P$ is isomorphic to the semidirect product $G_0\rtimes \bR^{p,q}$
\end{thm}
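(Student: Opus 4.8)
The plan is to establish the two assertions independently, each by reducing to elementary facts about the null cone of a nondegenerate quadratic form. For the homogeneous space description, I would first note that $C\setminus\{0\}$ is a smooth hypersurface in $W\setminus\{0\}$ because the form is nondegenerate, and that the free proper $\bR^\times$-action makes $S^{p,q}=C/\bR^\times$ a smooth compact manifold of dimension $n$. By Witt's extension theorem, $O(p+1,q+1)$ acts transitively on the nonzero isotropic vectors of $W$: any two such vectors span isotropic lines, hence isometric subspaces, and the partial isometry between these lines extends to an isometry of $W$. This transitivity descends to the induced action on isotropic lines, so $\GG=O(p+1,q+1)$ acts transitively on $S^{p,q}$. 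Fixing a null line $\ell_0\subset W$ and letting $\PP$ be its stabilizer in $\GG$ (a maximal parabolic subgroup), the orbit map yields the $\GG$-equivariant diffeomorphism $S^{p,q}\simeq\GG/\PP$.

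For the double covering, I would pass from the hyperbolic presentation of the inner product on $W$ --- the matrix $J$, chosen so that $\bR^{p,q}$ appears as an affine chart together with a point at infinity --- to a diagonal presentation: a suitable linear change of basis identifies $W$ with $\bR^{p+1}\oplus\bR^{q+1}$ equipped with $\langle(u,v),(u',v')\rangle=u\cdot u'-v\cdot v'$. A nonzero vector $(u,v)$ is then isotropic precisely when $|u|=|v|\neq 0$; dividing by the unique positive scalar $|u|^{-1}$ normalizes it to $u\in S^p$, $v\in S^q$, the only residual freedom being the common sign $(u,v)\sim(-u,-v)$. Hence sending $(u,v)$ to the isotropic line it spans defines a surjection $S^p\times S^q\to S^{p,q}$ whose fibres are the orbits of the diagonal antipodal $\mathbb{Z}_2$-action; this action is fixed-point free, so the quotient is smooth and the map is a genuine two-fold covering, giving $S^{p,q}\simeq(S^p\times S^q)/\mathbb{Z}_2$.

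The only step demanding care --- and it is purely linear algebra --- is the explicit change of basis converting $J$ into the standard diagonal form $I_{p+1,q+1}$: one must match the signature $(p+1,q+1)$ correctly, replacing the hyperbolic plane spanned by $e_0,e_{n+1}$ by an orthogonal pair consisting of one positive and one negative vector. With the conventions of Section~\ref{s_conf_cpt} this is immediate. Every other step reduces to Witt's theorem and the geometry of the null cone; a complete treatment can be found in \cite[Chapter~2]{Sch} and \cite[Sect.~1.6.3]{CS}.
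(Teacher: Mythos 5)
Your proof is correct. Note, however, that the paper does not actually prove this theorem: it states it and defers all details to \cite[Chapter 2]{Sch} and \cite[Sect.\ 1.6.3--1.6.4]{CS}; your argument (Witt's extension theorem for transitivity on the null cone, and diagonalization of the hyperbolic plane $\langle e_0,e_{n+1}\rangle$ into one positive and one negative direction to exhibit the fibration $S^p\times S^q\to S^{p,q}$ with fibres the diagonal antipodal orbits) is exactly the standard argument from those references, so it fills the gap the paper leaves to the literature rather than diverging from it. The only cosmetic point is that the paper later redefines $\GG=O(p+1,q+1)/\{\pm 1\}$; since $-1$ acts trivially on isotropic lines this changes nothing in your argument.
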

  
  The action of the group $O(p+1,q+1)$ on $S^{p,q}$  is not effective. 
  Let us consider the group 
  $$
  \GG=O(p + 1, q + 1)/\{\pm 1\}
  $$
  and its subgroups $\PP $ fixing the isotropic line generated by $e_{n+1}$ and
  $\GG_0$ fixing both lines generated by $e_0,$ resp.\ $e_{n+1}.$
  The group $\GG_0$ is isomorphic to $CO(p,q)$ and
  $\PP$ is isomorphic to the semidirect product $\GG_0\ltimes R^{p,q}.$
   A detailed description of these facts can be found in the second chapter of the book \cite[Sect.\ 1.6.3--1.6.4]{CS}. 
  
  To include also spin representations, it is necessary
  to consider $\tilde{\GG}=Spin(p+1,q+1)/\{\pm 1\}$, the double cover  of $\GG$,
  and its parabolic subgroup $\tilde{\PP}$ fixing the isotropic line generated
  by the vector $e_{n+1}$, and to treat the conformal compactification $S^{p,q}$
  as the homogeneous space $\tilde{\GG}/\tilde{\PP},$ where $\tilde{\PP}\simeq Spin(p,q)\ltimes \bR^{p,q}.$
  Many details on such double covers can be found in the book \cite[Chapter 2]{A}.
 The homogeneous spaces $S^{p,q}=\tilde{\GG}/\tilde{\PP}$ are basic examples of
 (flat) parabolic geometries with (pseudo-)conformal structure and it is possible
 to use the classification of first order conformally invariant operators
 given in \cite{SlS} to these cases.
 
 \subsection{First order differential operators invariant for (pseudo-)Euclidean motions}

For a~given (complex and finite dimensional) irreducible representation $F$ of the group $Spin(p,q)$
 consider first order differential operators acting on the space $\mathbb{F}$ of smooth
 maps from $\bR^{p,q}$ to $F$.
Then there is a natural action of the group 
 $\tilde{\PP}=Spin(p,q)\ltimes \bR^{p,q}$ of (pseudo-)Euclidean motions on $\mathbb{F}.$
 We can classify first order  differential operators on sections
 of the vector bundle $\mathbb{F}$ invariant under the action of the group $\tilde{\PP}$ as follows.

 \begin{thm}
 	 Let $\bC^n=\bR^{p,q}\otimes \bC$ be the complexification of the vector representation of $Spin(p,q)$.
 	The tensor product  $F\otimes\bC^n$ has a multiplicity free decomposition
 	\begin{equation}\label{eq_FxCn}
	F\otimes\bC^n =\bigoplus_{j=0}^k {F}_j
		\end{equation}
 	into irreducible $Spin(p,q)$-modules. Let $J\subset\{0,\ldots,k\}$ and let $\pi_J$ be the invariant projection
 	of $F\otimes\bC^n$ onto the submodule $\bigoplus_{j\in I} {F}_j.$ 	
 	Then the first order differential operator
 	$$
 	D_J f=\pi_J(\nabla f),\ f\in\mathbb{F}
 	$$
 is an invariant first order differential operator
invariant on  $ \mathbb{F}$
where $\nabla f$ denotes the gradient of $f\in \mathbb{F}.$ 

Moreover, any invariant first order differential operator acting on sections of $ \mathbb{F}$ is given
by such a~construction.
 \end{thm}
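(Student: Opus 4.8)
The plan is to reduce the statement to the classical classification of $\tilde{\PP}$-invariant operators, where $\tilde{\PP}=Spin(p,q)\ltimes\bR^{p,q}$ acts on $\mathbb{F}$. First I would observe that invariance under the translation subgroup $\bR^{p,q}$ forces any first order operator $D$ to have constant coefficients, so $D$ is of the form $f\mapsto \sigma(\nabla f)$ for a fixed linear map $\sigma\colon F\otimes\bC^n\to E$, where $E$ is the target $Spin(p,q)$-module; here $\nabla f$ is regarded as a section of the trivial bundle with fiber $F\otimes\bC^n$ (identifying the cotangent space of $\bR^{p,q}$ with $\bC^n$ after complexification). Then invariance under the $Spin(p,q)$-factor (acting simultaneously on the base and on the fibers of $F$ and of $E$) says precisely that $\sigma$ intertwines the $Spin(p,q)$-actions, i.e. $\sigma$ is a module homomorphism $F\otimes\bC^n\to E$.

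Next I would invoke the multiplicity-free decomposition \eqref{eq_FxCn}, which is a standard fact for the tensor product of an irreducible $Spin(p,q)$-module with the vector representation (for $n>2$; it follows from the Klimyk/Steinberg branching formula, or one may simply take it as given in the hypothesis). Since the decomposition is multiplicity free, Schur's lemma tells us that every module homomorphism $\sigma\colon F\otimes\bC^n\to E$ is, up to isomorphism of $E$, a projection onto a sub-sum $\bigoplus_{j\in J}F_j$ followed by an isomorphism onto $E$; conversely every such $\pi_J$ arises this way. Hence the operators $D_J f=\pi_J(\nabla f)$ exhaust, up to composition with fixed module isomorphisms of the target, all first order $\tilde{\PP}$-invariant operators on $\mathbb{F}$. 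This is exactly the assertion of the theorem.

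Concretely, the steps in order are: (1) show translation invariance $\Leftrightarrow$ constant coefficients, so $D=\sigma\circ\nabla$; (2) compute how $Spin(p,q)$ acts on $\nabla f$ — this is where the vector representation $\bC^n$ enters, via the chain rule applied to the composition of $f$ with a rotation — and conclude that $Spin(p,q)$-invariance of $D$ is equivalent to $\sigma$ being a $Spin(p,q)$-module map; (3) apply multiplicity-freeness and Schur to classify such $\sigma$; (4) verify that each $D_J$ defined via $\pi_J$ is indeed invariant, which is immediate once (2) is in place.

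The main obstacle is step (2): one must be careful about the distinction between the intrinsic $Spin(p,q)$-action on $\mathbb{F}$ as a space of sections and the point-evaluation action, and about identifying the cotangent space with the vector representation $\bC^n$ rather than its dual (the pseudo-Euclidean metric is what allows this identification, and it is precisely where $n>2$ and nondegeneracy of the form matter). Once the gradient is correctly seen as an equivariant map $\mathbb{F}\to\Gamma(F\otimes\bC^n)$, the rest is pure representation theory. I would also remark that the "moreover" clause — that these are all such operators — is the substantive part, and it rests entirely on multiplicity-freeness of \eqref{eq_FxCn}; without it one would get a $\mathrm{Hom}$-space of higher dimension and the classification would involve choices of intertwiners rather than just subsets $J$.
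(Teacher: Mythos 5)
The paper does not actually prove this theorem: it is quoted in Appendix C as a standard classification result (the proof being deferred to the literature, with \cite{F} and \cite{SlS} cited for the conformal analogue in the next subsection). Your argument is the standard one and is essentially sound: translation invariance reduces $D$ to a constant-coefficient operator built from the gradient, equivariance of $\nabla$ as a map $\mathbb{F}\to\Gamma(F\otimes\bC^n)$ turns $Spin(p,q)$-invariance into the condition that the symbol $\sigma$ be an intertwiner, and multiplicity-freeness of \eqref{eq_FxCn} plus Schur's lemma shows every such $\sigma$ is, up to an automorphism of the target, one of the $\pi_J$. Your remark that a general intertwiner is $\sum_{j\in J}c_j\pi_j$ and only agrees with $\pi_J$ after rescaling the target components is exactly the right caveat, and your identification of where multiplicity-freeness is used (the ``moreover'' clause) is correct.

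Two small points. First, ``constant coefficients'' for an operator of order at most one means $D=\sigma\circ\nabla+B$ with $B:F\to E$ a constant linear map; you silently drop $B$. To complete step (1) you should note that $Spin(p,q)$-invariance forces $B$ to be an intertwiner, hence $B=0$ unless the target contains a copy of $F$ itself, in which case the zeroth-order term (a multiple of the identity) is a genuine invariant operator not of the form $\pi_J\circ\nabla$. This is a looseness already present in the theorem's statement (``first order'' is being read as ``with nontrivial first-order symbol, modulo lower order''), but your proof should make the convention explicit. Second, the claim that $n>2$ is needed for identifying the cotangent space with $\bC^n$ is not right: nondegeneracy of the form suffices for that; the hypothesis $n>2$ plays its role elsewhere (in the conformal setting of the following subsections, not in this $\tilde{\PP}$-invariant classification).
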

 
  \subsection{Conformally invariant first order differential operators on $S^{p,q}$}
  Let us consider the conformal compactification $S^{p,q}$ of $\bR^{p,q}$ as the homogeneous
  space $\tilde{\GG}/\tilde{\PP}$ described in Section \ref{s_conf_cpt}.
 Let $F$ be a (complex and finite dimensional) irreducible representation of the group $Spin(p,q)$
   and let $\mathbb{F}$ be the space of smooth sections of the homogeneous bundle $\tilde{\GG}\times_{\tilde{\PP}} F.$    
  Then we can classify conformally invariant operators on $\mathbb{F}$ as follows.
  
  \begin{thm}\label{t_confInv}
  	As in \eqref{eq_FxCn} above we have a multiplicity free decomposition 
		$$F\otimes\bC^n =\bigoplus_{j=0}^k {F}_j.$$
 	
	\noindent\smallskip
  	(i) Then, for $j=0,\ldots,k$, the differential operator 
  	$$D_j\, f=\pi_j(\nabla f),\ f\in\mathbb{F},$$ 
  	is conformally invariant for a unique conformal weight $w_j$
		where $\nabla$ denotes the Levi-Civita connection on $S^{p,q}$ and 
		$\pi_j$ is the invariant projection
 	of $F\otimes\bC^n$ onto the submodule ${F}_j.$		
		In addition, the conformal weight $w_j$ is independent
  	of the signature and, for $i\not=j$, we have $w_i\not=w_j$.
  	
		\noindent\smallskip
  	(ii) Moreover, any conformally invariant differential operator on $\mathbb{F}$ is given by such a~construction.
  	
 	\end{thm}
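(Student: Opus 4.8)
The plan is to exploit the realisation of $S^{p,q}$ as the flat homogeneous model $\tilde{\GG}/\tilde{\PP}$ of the conformal parabolic geometry already described in Section \ref{s_conf_cpt}, in which the Lie algebra carries a $|1|$-grading $\mathfrak{g}=\mathfrak{g}_{-1}\oplus\mathfrak{g}_0\oplus\mathfrak{g}_1$ with $\mathfrak{g}_0\cong\mathfrak{co}(p,q)$ and $\mathfrak{g}_{-1}\cong\bR^{p,q}$. The completely reducible homogeneous bundles over $S^{p,q}$ are exactly those induced from an irreducible $\mathfrak{g}_0$-module on which the nilradical $\mathfrak{g}_1$ acts trivially; such a bundle is determined by a pair $(F,w)$, where $F$ is an irreducible $Spin(p,q)$-module and $w$ (the conformal weight) records the action of the grading element. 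Starting from the classification in the preceding theorem — every operator invariant under the Euclidean motion group $\tilde{\PP}=Spin(p,q)\ltimes\bR^{p,q}$ is of the form $D_J f=\pi_J(\nabla f)$ — I would add the requirement of invariance under the remaining part of $\tilde{\GG}$, i.e.\ under the special conformal transformations generated by $\mathfrak{g}_1$, and show that this reduces to a purely algebraic condition on the first jet.

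The algebraic analysis would proceed as follows. As a $\mathfrak{g}_0$-module the order-one part of the first jet of $\mathbb{F}$ is $\mathfrak{g}_{-1}^*\otimes F$, and the (complexified) conformal metric identifies $\mathfrak{g}_{-1}^*\cong\bC^n$, so this is $F\otimes\bC^n$, whose irreducible decomposition is precisely the multiplicity-free splitting $\bigoplus_{j=0}^k F_j$ of \eqref{eq_FxCn}. The obstruction to extending the projection $\pi_j$ to a $\tilde{\PP}$-homomorphism out of the whole first jet module is a $\mathfrak{g}_0$-equivariant cocycle governing the $\mathfrak{g}_1$-action; a standard computation (due to Fegan in the conformal case, and encoded in the semiholonomic prolongation machinery of \cite{CS,SlS}) shows that its component into $F_j$ equals a non-zero scalar times an affine function $a_j(w)=w-w_j$, where $w_j$ is explicitly the value making the relevant difference of Casimir eigenvalues of $F$, of $F_j$ and of $\bC^n$ vanish in the chosen grading normalisation. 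Since $a_j$ has degree one in $w$ it has exactly one root, which yields both existence and uniqueness of the conformal weight; the resulting operator is $D_j f=\pi_j(\nabla f)$ for $\nabla$ the Levi-Civita connection of any representative metric, and the independence of the answer from that choice within the conformal class is exactly conformal invariance. The same root computation shows that a projection $\pi_J$ with $|J|\geq 2$ can never be conformally invariant, since distinct summands would require distinct source weights.

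Signature-independence of $w_j$ is then immediate: the whole computation takes place after complexification inside $\mathfrak{so}(n+2,\bC)\supset\mathfrak{p}$, and passing between real forms $Spin(p,q)$ merely replaces a representation by a real (or quaternionic) form with the same complexification, leaving the highest weights, the Casimir eigenvalues and the combinatorics of \eqref{eq_FxCn} untouched, so $w_j$ depends only on $F$ and $j$. Distinctness $w_i\neq w_j$ for $i\neq j$: up to a common additive constant the $w_j$ are the eigenvalues of the $\mathfrak{g}_0$-equivariant algebraic Laplacian on $\bC^n\otimes F$ whose eigenspaces are the pairwise non-isomorphic summands $F_j$, and the classical conformal computation of Fegan evaluates these eigenvalues and shows they are pairwise distinct. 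Finally, part (ii) is the completeness statement: on the flat model full $\tilde{\GG}$-invariance of a first-order operator out of a completely reducible bundle is equivalent to the jet condition analysed above, and the only solutions are the $D_j$ just described; this is the conformal case of the general classification of first-order invariant operators on $|1|$-graded parabolic geometries in \cite{SlS}.

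The only genuinely non-formal step is the cocycle computation identifying the $F_j$-component of the $\mathfrak{g}_1$-action on the first jet with a non-zero multiple of the affine function $a_j(w)$ — equivalently, computing the numbers $w_j$ explicitly and checking they are pairwise distinct. Everything else is bookkeeping with induced modules and with the two classification theorems already recalled. I would either reproduce Fegan's computation in the present conventions or, more economically, cite \cite{SlS} together with the grading normalisation of \cite{CS} and simply record the resulting value of $w_j$ and the verification $w_i\neq w_j$.
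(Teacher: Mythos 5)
Your proposal is correct and follows essentially the same route as the paper, which in fact offers no proof of Theorem \ref{t_confInv} beyond citing Fegan \cite{F} for the definite case and the classification of first order invariant operators on parabolic geometries in \cite{SlS} for general signature. Your sketch is a faithful expansion of the argument contained in those references (first-jet analysis on the flat model $\tilde{\GG}/\tilde{\PP}$, the affine obstruction in the conformal weight, Fegan's Casimir computation for distinctness of the $w_j$, and complexification for signature independence), and like the paper it ultimately defers the one non-formal computation to the same sources.
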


 There is an explicit formula for the conformal weight $w_j$, which depends on the highest weights
 of the modules $F,$ resp.\ $F_j.$ 
  
  The case of positive definite signature was proved by H. Fegan in \cite{F}. 
  The classification for any signature is a very special case
  of the  classification of first order invariant operators on general parabolic geometries
  described in
  \cite[Section 4.7, Chapter 5]{SlS}.

   \subsection{Conformal invariance for generalized Cauchy-Riemann operators}

	Using Theorem \ref{t_confInv}, it is easy to study conformal invariance of generalized Cauchy-Riemann operators introduced in Section \ref{ss_GCR}.
	
   \begin{thm}\label{t_confInv_GCR}
   	
   	(i) The operators $\mathcal{D}^{300},\mathcal{D}^{333},\mathcal{D}^{100},\mathcal{D}^{111}$
   	are conformally invariant.
   	
   	(ii) The operators $\mathcal{D}^{311},\mathcal{D}^{322},\mathcal{D}^{210},\mathcal{D}^{211}$
   	are not conformally invariant.
   	\end{thm}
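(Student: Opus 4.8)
The plan is to deduce everything from Theorem~\ref{t_confInv}, which classifies the conformally invariant first order operators on the homogeneous bundle over $S^{p,q}$ associated with an irreducible $Spin(6,\bC)$-module $F$. Writing $F\otimes\bC^6=\bigoplus_{j=0}^k F_j$ for the (multiplicity free) decomposition, with $\bC^6\simeq(110)$, Theorem~\ref{t_confInv} tells us that each projection $D_j f=\pi_j(\nabla f)$ is conformally invariant for one and only one conformal weight $w_j$ of the source field, that $w_i\neq w_j$ whenever $i\neq j$, and that every conformally invariant first order operator on $\mathbb{F}$ is, up to a constant, one of these $D_j$. The immediate consequence I would record is: if $J\subseteq\{0,\dots,k\}$ has $|J|\ge 2$, then $\bigoplus_{j\in J}D_j$ is conformally invariant for no choice of conformal weight, since invariance would force the weight of the source to equal simultaneously the $|J|$ distinct numbers $w_j$, $j\in J$; equivalently, the system $\{D_j f=0:j\in J\}$ is not conformally invariant. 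Thus the whole question reduces to counting summands in the tensor product decompositions already recorded in Section~\ref{ss_GCR}.

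For part (i) I would observe that the decompositions \eqref{dec1} and \eqref{dec2}, as well as both decompositions in \eqref{diracs}, consist of the Cartan product together with exactly one further irreducible summand. Hence, by the definition of the GCR operator as the projection onto the complement of the Cartan product, $\mathcal{D}^{300}=D^{300}_{311}$, $\mathcal{D}^{333}=D^{333}_{322}$, $\mathcal{D}^{100}=D^{100}_{111}$ and $\mathcal{D}^{111}=D^{111}_{100}$ (see Lemma~\ref{5.1} and \eqref{diracs}) are each a single projection $D_j$, so each is conformally invariant for the corresponding weight $w_j$ by Theorem~\ref{t_confInv}(i). This is also consistent with the classical fact, recalled in the Introduction, that the spin $1/2$ Dirac operators and the spin $3/2$ massless field operators are conformally invariant.

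For part (ii) I would observe that each of the decompositions \eqref{dec3}, \eqref{dec4}, \eqref{dec5}, \eqref{dec6} has four irreducible summands, so that, by their definitions in Lemmas~\ref{5.2} and \ref{5.3}, the operators $\mathcal{D}^{311}$, $\mathcal{D}^{322}$, $\mathcal{D}^{210}$, $\mathcal{D}^{221}$ are each the three term sum $\bigoplus_{j\neq 0}D_j$ of the non-Cartan projections. By the remark of the first paragraph (equivalently by Theorem~\ref{t_confInv}(ii), since such a three term sum is not a constant multiple of any single $D_j$), none of these four operators is conformally invariant.

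The argument is short; the points that need care are the correct use of the converse direction in Theorem~\ref{t_confInv} together with the pairwise distinctness of the conformal weights $w_j$ — this is exactly what rules out the ``partial'' GCR operators in (ii) — and the bookkeeping confirming that the four operators in (i) genuinely reduce to single projections while those in (ii) are genuine three term sums. I do not expect any computational obstacle; for completeness one might add a sentence noting that no algebraic identity relates the three components of a GCR operator in (ii), so all three targets are really present.
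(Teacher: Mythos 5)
Your proposal is correct and follows essentially the same route as the paper: part (i) because $\mathcal{D}^{300}$, $\mathcal{D}^{333}$, $\mathcal{D}^{100}$, $\mathcal{D}^{111}$ are single Stein--Weiss projections (hence conformally invariant by Theorem~\ref{t_confInv}(i)), and part (ii) because the remaining GCR operators are sums of three projections whose conformal weights are pairwise distinct, so no single weight makes the sum invariant. Your write-up just spells out the counting argument that the paper leaves implicit (and correctly reads the paper's ``$\mathcal{D}^{211}$'' as $\mathcal{D}^{221}$).
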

		
		\begin{proof}
		Indeed, by the classification recalled above, $\mathcal{D}^{300}=D^{300}_{311}$ is a~conformally invariant operator for a~proper conformal weight. 
		But the operator $\mathcal{D}^{311}=D^{311}_{322}\oplus D^{311}_{210}\oplus D^{311}_{300}$ is not conformally invariant because the conformal weights for $D^{311}_{322}$, $D^{311}_{210}$ and $D^{311}_{300}$ are pairwise different. For other operators we argue similarly.
		\end{proof}

%%%%%%%%%%%%%%%%%%%%%%%%%%%%%%%%%%%%%%%%%%%		

\section*{Acknowledgements}

We would like to express our sincere gratitude to the reviewer for his/her detailed and insightful comments, which greatly contributed to the improvement of our paper. In particular, we appreciate the specific suggestions regarding Introduction, notation and Examples 2.1 and 3.1.

%%%%%%%%%%%%%%%%%%%%%%%%%%%%%%%%%%%%%%%%%%%%%%%%%%%%%%%%%%%%%

 \end{document}